\newtheorem{theorem}{\color{blue}Theorem}
\newtheorem{corollary}{\color{blue}Corollary}
\newtheorem{proposition}{\color{blue}Proposition}
\newtheorem{lemma}{\color{blue}Lemma}
\newenvironment{proof}[1][Proof]{\noindent\textbf{#1.} }{\ \rule{0.5em}{0.5em}}
\newtheorem{example}{\color{blue}Example}
\newcolumntype{L}[1]{>{\raggedright\let\newline\\arraybackslash\hspace{0pt}}m{#1}}
\newcolumntype{C}[1]{>{\centering\let\newline\\arraybackslash\hspace{0pt}}m{#1}}
\newcolumntype{R}[1]{>{\raggedleft\let\newline\\arraybackslash\hspace{0pt}}m{#1}}
\pgfplotsset{compat=1.18}
\DeclareMathOperator\supp{supp}
\begin{document}

\title{Price Impact of Insurance\footnote{We are deeply grateful to Piotr Dworczak, Jeff Ely and Alessandro Pavan for their guidance. All errors remain our own.}}
\author{Andrea Di Giovan Paolo\footnote{Department of Economics, Northwestern University. Email: \href{mailto:andreadigiovanpaolo@u.northwestern.edu}{andreadigiovanpaolo@u.northwestern.edu}} \and Jose Higueras\footnote{Department of Economics, Northwestern University. Email: \href{mailto:josehiguerascorona2025@u.northwestern.edu}{josehiguerascorona2025@u.northwestern.edu}}}
\date{\today}

\maketitle
\vspace{-2em}
\begin{center}
    \Large \textbf{Preliminary Draft}
\end{center}
\vspace{2em}
\begin{abstract}
This paper analyzes optimal insurance design when the insurer internalizes the effect of coverage on third-party service prices. A monopolistic insurer contracts with risk-averse agents who have sequential two-dimensional private information and preferences represented by Yaari’s dual utility. Insurance contracts shape service demand and, through a market-clearing condition, determine equilibrium third-party prices. We characterize the structure of optimal contracts and show they take simple forms---either full coverage after a deductible is paid or limited coverage with an out-of-pocket maximum---closely mirroring real-world insurance plans. Technically, we formulate the problem as a sequential screening model and solve it using tools from optimal transport theory.
\newline
\noindent \textbf{Keywords:} insurance, mechanism design, market design
\newline
\noindent \textbf{JEL Codes:} I11, I13, D47, D81, D82, D86
\end{abstract}

\section{Introduction}
Insurance plays a crucial role in mitigating risks by covering a portion of the monetary losses customers incur following adverse events. These monetary losses typically depend on prices set by third-party service markets. For example, in healthcare insurance, the monetary loss is directly tied to the price of hospital services. In car insurance, the loss is determined by the price of auto repair services, whereas in home insurance, it depends on the price for property restoration services. While the contract theory literature has extensively studied the optimal design of insurance contracts under adverse selection and moral hazard \citep{stiglitz1977monopoly, grossman1983analysis, chade2012optimal,gershkov2023optimal}, less attention has been paid to the equilibrium effects of insurance. The type and level of insurance coverage can influence the prices charged by third-party providers. In healthcare insurance, for instance, offering more generous coverage can increase policyholders’ demand for medical services, which in turn drives up hospital prices and insurer payouts. Similar dynamics arise in auto and home insurance, where contract design influences the demand for repair and restoration services, thereby affecting market prices and claims. Given that large\footnote{According to the latest report on competition in health insurance \citep{AMA2024CompetitionHealthInsurance}, the American Medical Association (AMA) finds that, based on the DOJ/FTC Merger Guidelines, 95\% of metropolitan commercial markets were highly concentrated in 2023, with a Herfindahl--Hirschman Index (HHI) above 1800. The average commercial market exhibited an HHI of 3458. Moreover, in 89\% of metropolitan markets at least one insurer held a market share of 30\% or more, and in 47\% of markets a single insurer’s share reached at least 50\%. These findings indicate that many health insurance markets display monopolistic characteristics.} insurance companies are likely to internalize these forces when designing contracts, a key question arises: how does the interaction between insurance and third-party service markets shape both the \textit{types} of contracts offered and the \textit{price} for these services?

This paper addresses these questions by incorporating equilibrium effects into a mechanism design framework. We study the problem faced by a risk-neutral monopolistic insurer offering contracts to a population of risk-averse agents, and building on \citet{gershkov2023optimal} we model their preferences using Yaari’s dual utility \citep{yaari1987dual}. In contrast to the standard expected utility framework---where risk-averse agents evaluate lotteries by applying a concave utility function to outcomes---dual utility models assume linear utility over outcomes and capture risk aversion through probability distortions. 

Agents are sequentially and privately informed about two characteristics: their ex-ante risk type---such as the probability of developing an illness or suffering a car accident---and their ex-post loss-related valuation, which reflects the monetary loss they associate with not treating or repairing the damages resulting from an adverse event. This valuation determines their willingness to pay for third-party services. 

In the first period, when the agent knows only her risk type and her loss-related valuation has not yet been realized, she can purchase an insurance contract from the monopolistic provider. An insurance contract has two parts. First, a premium that the agent must pay to obtain coverage. Second, a set of options, where each option specifies a quantity or fraction of services covered by insurance together with the out-of-pocket payment the agent would face if she later uses that option.

In the second period, once the adverse event occurs and the agent learns her loss-related valuation, she can repair damages or seek treatment from the third-party service provider. Services are supplied competitively at a market price, and the agent may demand any quantity. If she purchased an insurance contract in the first period, she may exercise one of the coverage options: doing so entitles her to obtain the specified quantity of services by paying only the associated out-of-pocket cost. 

The insurance provider chooses a menu of contracts to maximize profits. Importantly, the insurer internalizes that each contract affects an agent’s decision of whether to seek third-party services and how much to consume. As a result, the menu of contracts shapes aggregate demand in the competitive service market. Through the market-clearing condition---requiring demand to equal available supply---this demand determines the equilibrium service price.


We adopt a two-stage approach to determine the structure of the contracts offered by the insurer and the resulting equilibrium service price:
\begin{enumerate*}[label=\arabic*)]
\item In the first-stage problem, we fix a price for services and identify, among the menus that sustain that price in equilibrium, the one that maximizes the insurer's profits.
\item In the second-stage problem we associate each achievable price in equilibrium with the corresponding profits obtained by the insurer, and then solve for the price that maximizes these profits.
\end{enumerate*}

Our first main result, Theorem~\ref{thm:optimal-menu}, characterizes the optimal contracts offered by the insurer in the first-stage problem. Despite the complexity of the setting, we find that optimal contracts take a simple form and resemble real-world insurance plans. Crucially, one of these contract forms arises \textit{only because of the insurer’s price impact}. Specifically, the contract offered to any given risk type takes one of the following two forms:
\begin{enumerate}
    \item \textbf{Simple deductible contract.} This contract is analogous to standard deductible-based insurance: once the agent pays a fixed amount out-of-pocket, insurance covers the remaining cost of the service in full.
    
    \item \textbf{Limited coverage deductible contract.} Under this contract, after the agent pays a fixed amount out-of-pocket, insurance covers a fraction $\alpha \in (0,1)$ of the service. If the agent wants the remaining fraction $1-\alpha$, she must pay the full service price. Alternatively, by paying a higher out-of-pocket amount, the agent can receive full coverage without limits.
\end{enumerate}

In our model, the insurer faces a novel tradeoff beyond the classic screening frictions studied in traditional insurance models (see, for instance, \cite{stiglitz1977monopoly}). Offering more generous coverage---by paying a higher share of service costs---raises agents’ demand for services, which in turn drives up the equilibrium service price. In other words, every contract the insurer writes has a price impact. This price effect provides an economic rationale for the \textit{limited-coverage deductible contract}. The role of this contract is to impose service restrictions, or more generally cost sharing, as a way to regulate agents’ utilization of services, thereby shaping aggregate demand and the resulting equilibrium price. Importantly, such contracts emerge only because of equilibrium effects. If the insurer ignores his price impact---treating service prices as fixed---the optimal contract collapses to the \textit{standard deductible contract}.

This theoretical prediction echoes real-world practice. Coinsurance or cost-sharing arrangements typically cover only a fraction $\alpha$ of treatment costs, with policyholders responsible for the remainder. Many plans also include an out-of-pocket maximum, beyond which the insurer covers all costs. More generally, the parameter $\alpha$ can capture common service restrictions. Medicare Part~A covers only up to 90 days of inpatient hospital care per benefit period; additional days must be paid entirely out-of-pocket. Private plans often cap the number of physical therapy or mental health sessions, drug plans restrict the quantity of medication dispensed or the frequency of refills, and advanced diagnostic procedures such as MRI, CT, or PET scans may not be fully covered.


From a technical perspective, our contribution lies in extending sequential screening to environments where the mechanism designer affects market outcomes---a combination that we analyze through applications of tools from optimal transport theory. In our setting, the insurer engages in sequential screening: in the first period, the agent knows only her risk type, and in the second period she learns her loss-related valuation. This introduces an additional challenge compared to standard one-dimensional mechanism design problems, where the designer’s allocation rule maps a single type to a final outcome. To address these challenges, we formulate the insurer’s problem in quantity space—a common technique in mechanism design \citep{bulow1989simple}—and further transform it into a problem that closely resembles an optimal transport problem. In this formulation, the insurer effectively matches risk types to quantities of services demanded. By leveraging duality results from optimal transport theory, we derive several properties of the optimal contracts, including Theorem~\ref{thm:optimal-menu}.

Our second main result, Theorem~\ref{thm:envelope}, characterizes how the insurer’s profits change as the service price varies. This result is central to the analysis of the second-stage problem, where the insurer endogenously influences the equilibrium price through the menu of contracts offered to agents. In particular, we identify two economic forces that drive the insurer’s marginal value of raising the price.

The first force concerns \textit{surplus extraction}. As the service price rises, the distribution of realized losses among insured agents shifts upward in the sense of first-order stochastic dominance, as higher losses occur with positive probability. Covered agents are protected against these losses, and under dual utility they distort probabilities and assign a non-linear value to this shift. The insurer can capture this value—net of information rents—through premiums.

The second force reflects the \textit{cost of increasing the price}. A unit increase in the price leads to a proportional increase in the insurer’s payouts for a given level of coverage.

After identifying these economic forces, we introduce the concept of a \textit{price-taking benchmark}. This benchmark is the price that clears the service market and solves the insurer’s first-stage problem in the absence of equilibrium effects. In other words, it is the market-clearing price that would arise if the insurer treated prices as exogenous and ignored its impact on the allocation of services. We then ask: under what conditions does an insurer with price impact set a service price above or below this benchmark? We provide conditions under which it is sufficient to compare the surplus extraction force and the cost effect locally at the price-taking benchmark. In particular, if the surplus extraction force dominates the marginal cost force at this point, the insurer raises the price above the benchmark; if it is dominated, the insurer lowers the price below the benchmark.

Unlike standard monopoly pricing models, where the monopolist typically sets a price above the price-taking benchmark, in our setting the optimal service price may lie either above or below it. The key distinction lies in the nature of the trade-offs faced by the insurer. In standard models, the monopolist benefits unambiguously from raising the price at the benchmark: doing so increases profits per unit sold. In contrast, the insurer in our framework balances a potentially positive gain from surplus extraction against a negative effect from higher payouts. As a result, the net benefit of increasing the price at the benchmark can be either positive or negative, leading to qualitatively different pricing behavior.

Finally, we present three extensions of our baseline model that preserve its core economic insights and main results. In the first extension, we show how our framework applies to settings in which the insurance provider acts as a social planner with redistributive objectives, in the spirit of \cite{akbarpour2024redistributive}. That is, the insurer places positive weight not only on revenue, but also on agents' utility. This setting captures environments where the state is the main provider of insurance or where insurance markets are highly regulated. In the second extension, we reinterpret our baseline model in a setting where the service price is not determined by a competitive market but instead negotiated through Nash bargaining between the insurer and a capacity-constrained third-party provider, with the insurer holding all bargaining power. We also discuss how the main results extend to settings in which the provider has some bargaining power. The third extension considers a setting in which the provider offers multiple tiered services each with its own price.

\paragraph{Related literature.} The relationship between insurance and third-party service prices has received considerable empirical attention since the seminal work of \citet{feldstein1970rising}. More recent empirical work has estimated hospital–insurer bargaining models, enabling researchers to address important questions such as the division of surplus under hospital capacity constraints \citep{ho2009insurer}, the consequences of hospital mergers \citep{gowrisankaran2015mergers}, and the effects of reduced insurer competition on welfare, hospital prices, and premiums \citep{ho2017insurer}, among others. However, this line of research generally analyzes bargaining dynamics while taking the form of insurance contracts as given, since its aim is not to characterize optimal contract design while accounting for price feedback between insurers and hospitals---one of this paper’s key objectives. By contrast, the contract theory and mechanism design literature, although centered on optimal insurance design, has paid relatively little attention to this interaction. The main contribution of this paper is to develop a mechanism design framework that characterizes the types of contracts insurers optimally offer when accounting for their price impact.

More precisely, this paper contributes to two main strands of literature. The first is the extensive theoretical body of work on insurance. A large literature following \citet{borch1960safety} and \citet{arrow1963welfare} studies welfare-maximizing insurance policies in the absence of information frictions. For example, \citet{raviv1979design} shows that coinsurance arrangements are welfare maximizing when the insurer faces an exogenous cost of provision that is increasing and convex in coverage.

This line of work does not incorporate adverse selection. Models with information frictions follow the seminal contributions of \citet{rothschild1978equilibrium} and \citet{stiglitz1977monopoly}. Particularly relevant is the recent work of \citet*{gershkov2023optimal}, who revisit the classical monopoly insurance problem under adverse selection. Their framework departs from the canonical model of \citet{stiglitz1977monopoly} in two key respects.

First, they allow for general loss distributions that may be correlated with agents’ private information. Second, they assume agents are risk-averse with preferences represented by a dual utility function \citep{yaari1987dual}. Unlike the expected utility framework, dual utility captures risk aversion through a non-linear transformation of probabilities rather than payoffs. As shown by \cite{yaari1987dual}, this representation arises from a single deviation from expected utility theory: rather than assuming independence concerning probabilistic mixtures of lotteries, it assumes independence concerning mixtures of payments associated with those lotteries.

\cite{gershkov2023optimal} argue that dual utility is a compelling framework for several reasons. First, empirical studies \citep{sydnor2010over,barseghyan2013nature,barseghyan2016inference} document insurance choices that are difficult to reconcile with expected utility but align well with models incorporating probability weighting---a feature inherent to dual utility. Second, dual utility disentangles risk aversion from wealth effects, allowing the framework to capture risk aversion while preserving constant marginal utility of wealth, a property that enhances tractability relative to traditional frameworks. As a result, \cite{gershkov2023optimal} can rationalize real-world contract features---such as deductibles---that expected utility models often struggle to explain.

Motivated by these insights, we also endow agents with Yaari risk preferences. However, relative to \citet{gershkov2023optimal} and the broader theoretical literature, our focus is different: we study how the structure of optimal insurance contracts changes once we account for the impact of coverage on third-party service prices. To capture this tradeoff, we analyze a fundamentally different model. First, monetary losses are not exogenous but arise from adverse events that can only be treated by purchasing third-party services. This makes insurance effective only if it stimulates utilization, thereby creating the feedback loop between coverage, service demand, and market prices. Second, contracts shape demand for services, which must match supply through market clearing. This introduces an equilibrium constraint in addition to incentive constraints, forcing the insurer to internalize how contract design translates into service prices. Third, agents possess two dimensions of private information revealed sequentially, which captures richer selection dynamics and broadens the space of feasible contract forms. Finally, recognizing the insurer’s price impact allows us to provide an economic rationale for limited-coverage deductible contracts. Together, these features highlight that equilibrium effects fundamentally reshape optimal insurance design and help explain contract structures observed in practice.

The second strand of literature we contribute to is the growing work on partial mechanism design, where designers lack full control over the primary market and must consider the effects of their mechanisms on secondary trading venues. This framework has been applied to financial markets \citep{philippon2012optimal,tirole2012overcoming}, monopoly pricing with resale \citep{calzolari2006monopoly,loertscher2022monopoly}, information transmission to aftermarkets \citep{calzolari2006optimality,dworczak2020mechanism}, optimal redistribution \citep{kang2023public,kang2024optimal}, contracting between firms \citep{calzolari2015exclusive,kang2022contracting}, and policy interventions based on market outcomes \citep{valenzuela2020market}. We extend this literature by applying partial mechanism design to insurance markets with risk-averse agents holding non-standard preferences, where third-party service prices are endogenously determined by the insurer’s mechanism and the available supply. Technically, our framework differs from prior studies in one key respect: we address the insurer’s price impact within a sequential screening framework, similar to \citet{courty2000sequential}, where mechanisms resemble “menus of menus.” To tackle this complexity, we reformulate the problem using an optimal transport approach.

The remainder of the paper is organized as follows. Section~\ref{sec:model} introduces the model. Section~\ref{sec:first-stage} defines the first-stage problem and presents our first main result, Theorem~\ref{thm:optimal-menu}. Section~\ref{sec:second-stage} turns to the second-stage problem and presents our second main result, Theorem~\ref{thm:envelope}. Section~\ref{sec:extensions} discusses extensions of the baseline model, and Section~\ref{sec:conclusion} concludes.

\section{Model}\label{sec:model}
\paragraph{Sequential Private Information.}
A unit mass of agents faces the risk of experiencing an adverse event---such as developing an illness or being involved in a car accident---that may require third-party services tomorrow. Each agent is characterized by a privately observed ex-ante \textit{risk type} $\theta \in \Theta = [\underline{\theta}, \bar{\theta}]$, which parametrizes the distribution of her also privately observed ex-post loss-related \textit{valuation} $b \in B = [0, \bar{b})$, where $\bar{b}$ may be finite or infinite. Specifically, an agent with risk type $\theta$ draws her valuation from the cumulative distribution function $H_\theta: B \to [0,1]$.

An agent who draws valuation $b$ experiences a ex-post payoff of $-b$ (i.e., a \textit{loss} of $b$). We interpret $b$ as the monetary loss the agent associates with the negative consequences of the adverse event in the absence of treatment or restoration services.

We assume that $H_\theta$ increases in $\theta$ in the sense of first-order stochastic dominance: higher risk types are more likely to draw higher valuations. The distribution of risk types is denoted by $F: \Theta \to [0,1]$, with continuous density function $f: \Theta \to (0, \infty)$.

We impose the following regularity conditions. The function $H_\theta$ is continuously differentiable in $\theta$, and for each fixed $\theta$, $H_\theta(\cdot)$ is continuously differentiable on $B$, with a strictly positive derivative $h_\theta: B \to (0, \infty)$. Moreover, for each $(\theta, b)$, the cross partial derivative $\frac{\partial^2 H_\theta(b)}{\partial b , \partial \theta}$ exists and is continuous.

\begin{example}\label{ex:distribution}
    Suppose that risk type $\theta$ represents the probability of developing an illness. Conditional on becoming ill, the agent's loss-related valuation is uniformly distributed on $[0,1]$. In this case, the distribution function $H_\theta$ is given by:
    \[
    H_\theta(b) = 1 - \theta + \theta b.
    \]
\end{example}

\paragraph{Third-Party Services and the Agent's Ex-Post Payoff.}
After drawing a valuation $b$, the agent can visit the service provider to receive treatment or repair damages at a price $p \in \mathbb{R}_+$. In that case, her loss becomes $p$ instead of $b$. Therefore, the agent seeks the service whenever $b > p$. That is, whenever the monetary loss she associates with the adverse event in the absence of of the service exceeds the service price.

We also allow for the possibility of \textit{partial treatment}, modeled as the agent’s ability to purchase any fraction $x \in [0,1]$ of the service for a total cost of $xp$. In other words, the service provider uses linear pricing.

An agent with valuation $b$ who chooses to consume $x$ units of service incurs a loss of $xp + (1 - x)b$. This loss is additive in two components. The first is the monetary payment of $xp$ made to the provider for the $x$ units of service. The second is the residual loss $(1 - x)b$, which reflects the untreated portion of the adverse event. Since the agent forgoes $1 - x$ units of service and values the full untreated event at $b$, we assume that these effects also interact linearly.

Fractional service becomes relevant only with insurance; without it, it is inconsequential. If $b \leq p$, any $x > 0$ yields a loss of $(1 - x)b + xp \geq b$, so the agent prefers $x = 0$. If $b > p$, any $x < 1$ yields a loss strictly above $p$, so the agent chooses $x = 1$. Thus, the agent’s loss is $b_p = \min\{b, p\}$.

Finally, we assume that the market for services is competitive and characterized by a strictly increasing supply function, denoted by $S: \mathbb{R}_+ \to \mathbb{R}_+$, where $S(p)$ represents the quantity of services supplied at price $p$. Additionally, we impose the following technical assumptions: $S$ is continuously differentiable, unbounded and satisfies $S(0)=0$.

\paragraph{The Agent's Ex-Ante Payoff.}
As in \citet{gershkov2023optimal}, agents are endowed with a \textbf{Yaari (dual) utility} determined by a probability distortion function $g : [0,1] \to [0,1]$, where $g$ is increasing, continuously differentiable, and satisfies $g(q) \leq q$ with $g(0) = 0$ and $g(1) = 1$.

The \textit{certainty equivalent} of suffering a random loss distributed according to $H_{\theta}$ is given by:
\[
    CE(H_{\theta}(b))=-\int_{0}^{\bar{b}} [1 - g(H_{\theta}(b))] \, db.
\]
In particular, if services are provided at price $p$, so that the agent's loss is given by $b_p=\min\{b,p\}$, then the certainty equivalent is:
\[
   -\int_{0}^{p} [1 - g(H_{\theta}(b))] \, db,
\]
where we use the fact that $\mathbb{P}(b_p \leq b) = H_{\theta}(b)$ for all $b < p$, $\mathbb{P}(b_p \leq b) = 1$ for all $b \geq p$, and $g(1) = 1$.

As discussed in \citet{gershkov2023optimal}, the agent is risk-averse in the weak sense: the certainty equivalent of any lottery is less than the lottery's expected value. That is:
\[
    CE(H_{\theta}(b))=-\int_{0}^{\bar{b}} [1 - g(H_{\theta}(b))] \, db \leq -\int_{0}^{\bar{b}} [1 - H_{\theta}(b)] \, db = -\mathbb{E}_{H_\theta}[b],
\]
where the inequality follows from $g(q) \leq q$, and $\mathbb{E}_{H_\theta}[b]$ represents the expected value of $b$ under $H_{\theta}$.

While in the expected utility framework, this form of risk aversion is equivalent to aversion to mean-preserving spreads, in the dual utility framework, aversion to mean-preserving spreads is stronger and is equivalent to $g$ being convex. As in \citet{gershkov2023optimal}, the weak form of risk aversion, which requires only $g(q) \leq q$, is sufficient for our analysis.

To better understand dual utility, we can use integration by parts to express the certainty equivalent of the lottery $H_{\theta}$ as:
$$CE(H_{\theta}(b))=-\int_{0}^{\bar{b}} [1 - g(H_{\theta}(b))] \, db = -\int_{0}^{\bar{b}} b g'(H_{\theta}(b)) h_{\theta}(b) \, db$$
In contrast to the expected utility framework—where the standard expectation operator is modified by distorting payoffs through the von Neumann–Morgenstern utility function—dual utility modifies the expected value of the lottery by distorting probabilities through $g'$. Here, every loss $b$ is weighted by $g'(H_{\theta}(b))$. The term $g(H_{\theta}(b))$ represents the cumulative weight assigned to losses below $b$, while $1 - g(H_{\theta}(b))$ represents the cumulative weight assigned to losses above $b$. The assumption $g(q) \leq q$ implies that the agent overweights the cumulative probability of larger losses and underweights the cumulative probability of smaller losses.

Figure~\ref{figCE} illustrates the certainty equivalents of a risk-neutral agent and a risk-averse agent with Yaari utility given by $g(q) = q^2$, when both face the distribution $H_{\theta}$ described in Example~\ref{ex:distribution}, and $p$ is the price of services. The blue curve represents the distribution function $H_\theta(b)$ truncated at $p$, while the red curve shows $g(H_\theta(b))$, also truncated at $p$. The area above the blue curve corresponds to the certainty equivalent of the risk-neutral agent. The area above the red curve and below the horizontal dashed line corresponds to the certainty equivalent of the risk-averse agent. The area between the red and blue curves reflects the additional amount the risk-averse agent is willing to pay to eliminate the lottery risk.

\begin{figure}[H]
\begin{center}
\begin{tikzpicture}[scale=0.6]
    \begin{axis}[
        axis lines = middle,
        xlabel style={at={(current axis.right of origin)}, anchor=north, xshift=1.9cm, yshift=-0.05cm},
        ylabel style={at={(current axis.above origin)}, anchor=north west, xshift=-1.8cm, yshift=2.4cm},
        xlabel = \( b \),
        ylabel = {\( g(H_{\theta}(b)) \), \( H_{\theta}(b) \)},
        ymin=0, ymax=1.2,
        xmin=0, xmax=1.2,
        xtick={0,0.7,1},
        ytick={0,1},
        xticklabels={\(0\), \(p\), \(\bar{b}\)},
        yticklabels={\(0\), \(1\)},
        domain=0:1,
        samples=100,
        width=10cm, height=8cm,
        grid = none,
        smooth,
    ]

    \addplot [
        name path=curve,
        domain=0:0.7,
        samples=100,
        thick,
        red,
    ]
    {(1 - 0.7 + 0.7*x)^2};

    \path[name path=topline] (axis cs:0,1) -- (axis cs:0.7,1);

    \addplot [
        fill=red,
        fill opacity=0.2,
    ]
    fill between[
        of=curve and topline,
    ];

    \addplot [
        domain=0.7:1,
        samples=100,
        thick,
        red,
    ]
    {1};

    \draw[dashed, thick] (axis cs:0,1) -- (axis cs:0.7,1);
    \draw[dashed, thick] (axis cs:0.7,0) -- (axis cs:0.7,1);
    
    \filldraw[black] (axis cs:0.7,1) circle (3pt);
    \filldraw[white, thick] (axis cs:0.7,0.6241) circle (3pt);
    \draw[black] (axis cs:0.7,0.6241) circle (3pt);

    \addplot [
        name path=Hp,
        domain=0:0.7,
        samples=100,
        thick,
        blue,
    ]
    {(1 - 0.7 + 0.7*x)};

    \addplot [
        domain=0.7:1,
        samples=100,
        thick,
        blue
    ]
    {1};

    \addplot [
        blue,
        fill=blue,
        fill opacity=0.2,
    ]
    fill between[
        of=Hp and topline,
    ];

    \end{axis}
\end{tikzpicture}
\end{center}
\caption{$H_\theta(b)=1-\theta+\theta b$, \ $\theta=0.7$, \ $g(q)=q^2$} \label{figCE}
\end{figure}

Finally, another key property of Yaari utility is its additivity with respect to constant random variables $t$:
\[
    CE(H_{\theta} + t) = CE(H_{\theta}) + t.
\]
As noted by \citet{yaari1987dual}, in the expected utility framework, the agent's attitude toward risk and their attitude toward wealth are inherently linked. Specifically, a risk-averse agent will always exhibit decreasing marginal utility with respect to wealth. In contrast, under dual utility, an agent can maintain constant marginal utility over wealth while still displaying risk aversion toward uncertain outcomes. 

\paragraph{Insurance.}
A risk-neutral monopolistic insurance provider (he) offers a mechanism to the agents. The mechanism is offered when an agent knows only her ex-ante risk type $\theta$ before her ex-post valuation $b$ has been realized. We assume that the mechanism specifies the following for each agent:
\begin{enumerate*}[label=\arabic*)]  
\item the \textit{quantity of services}, $x$, covered by insurance;  
\item the \textit{per-unit price}, or \textit{per-unit out-of-pocket cost} (OPC), denoted by $D$, that the agent must pay for the fraction of services covered by insurance; and
\item an upfront payment or \textit{premium}, $t$, that an agent must pay to access her insurance.  
\end{enumerate*}  

The insurer can set terms only for the insured units; it cannot dictate the agent’s consumption of additional units at the prevailing market price. Agents remain free to purchase extra units at the market price, and the insurer cannot restrict or condition access to services (for instance, by paying the agent not to seek treatment).

Additionally, we restrict our attention to:  
\begin{enumerate*}[label=\arabic*)]  
\item direct mechanisms and  
\item non-randomized mechanisms.  
\end{enumerate*}  
Focusing on direct mechanisms is without loss of generality, as the revelation principle applies in our setting. While \citet{gershkov2023optimal} also study non-randomized mechanisms, they note that this restriction is not without loss: lotteries over insurance contracts could help the provider better screen risk-averse agents. However, focusing on non-randomized mechanisms is appropriate for insurance applications, as lotteries over contracts are rarely observed in practice.

More formally, a mechanism (or \textit{menu}) is represented as: 
\[
    \left\{\left(\left\{\left(x(\theta,b),D(\theta,b)\right)\right\}_{b \in B}, t(\theta)\right)\right\}_{\theta \in \Theta},
\]  
where each \textit{menu option} includes a \textit{contract} $\left\{\left(x(\theta,b),D(\theta,b)\right)\right\}_{b \in B}$ and a \textit{premium} $t(\theta) \in \mathbb{R}$. Upon visiting the service provider, an agent with the contract $\left\{\left(x(\theta,b),D(\theta,b)\right)\right\}_{b \in B}$ can choose any of its items. If the agent selects the \textit{contract item} $\left(x(\theta,b),D(\theta,b)\right)$, insurance covers $x(\theta,b) \in [0,1]$ units of service for a per-unit price or OPC of $D(\theta,b) \in \mathbb{R}_+$. Equivalently, a contract item specifies $x(\theta,b)$ and the total out-of-pocket payment the agent would incur for those units, given by $\hat{D}(\theta,b)=x(\theta,b)D(\theta,b)$. For reasons that will become clear later in the analysis, however, it is more convenient to work with the per-unit out-of-pocket cost.

We present two examples of insurance contracts within our framework that share key features with real-world plans and will become relevant in our later analysis:

\begin{itemize}
    \item \textit{Simple deductible contract} $\{(1, D)\}$: This contract is analogous to standard deductible-based insurance. If the agent pays the OPC $D$, she receives the full unit of services, and the insurer covers the remaining $p - D$.
    
   \item \textit{Limited coverage deductible contract} $\{(\alpha, D), (1, M)\}$: Under this contract, insurance covers up to a fraction $\alpha$ of the service at an OPC of $D$. If the agent chooses to consume the remaining $1 - \alpha$ units, she must pay the per-unit service price $p$. Alternatively, by paying a higher OPC $M > D$, the agent can bypass the coverage cap $\alpha$, receive the full unit of service, and have the insurer cover the remaining amount, $p - M$.
\end{itemize}

\paragraph{Timing.} We now describe in more detail the timing of the interaction between the insurer and the agents.

\begin{enumerate}
    \item The insurer commits to and offers the agents a menu:
    \[
        \left\{\left(\left\{\left(x(\theta,b),D(\theta,b)\right)\right\}_{b \in B}, t(\theta)\right)\right\}_{\theta \in \Theta}.
    \]
    \item The agents privately observe their ex-ante risk type $\theta$, decide whether to participate in the mechanism and, if so, they select an option from the menu, and pay the corresponding premium.
    \item The ex-post loss-related valuation $b$ is realized and privately observed by the agent. An agent holding the contract $\left\{\left(x(\theta,b), D(\theta,b)\right)\right\}_{b \in B}$ decides whether to visit the service provider or remain at home. If she visits the provider, she may demand any quantity of service. Payment can be made either by paying the per-unit service price $p$ or by selecting a contract item. If the agent selects $\left(x(\theta,b), D(\theta,b)\right)$, she is entitled to $x(\theta,b)$ units at a per-unit price of $D(\theta,b)$. For any additional units beyond $x(\theta,b)$, the full per-unit price $p$ must be paid.
    \item The contracts chosen by the agents, along with the items selected from the contracts, generate a demand for services at a given price. The market for services clears at the price where demand equals supply, and payoffs are accrued.
\end{enumerate}

The objective of the insurer is to find a menu that maximizes his profits. However, since a menu influences the equilibrium service price, which in turn affects both the agents' behavior and the insurer's profits, we adopt a two-stage approach to determine the optimal menu:
\begin{enumerate*}[label=\arabic*)]
\item In the first stage, we fix a price for services, $p$, and identify, among the menus that sustain that price in equilibrium, the one that maximizes the insurer's profits.
\item Once the first stage is complete, we associate each achievable price in equilibrium with the corresponding profits obtained by the insurer, and then solve for the price that maximizes these profits.
\end{enumerate*}

\subsection{Discussion of the Model}
\paragraph{Ex-Post Private Information.} In our model, $b$ represents the agent’s willingness to pay for third-party services. Since this valuation is subjective and shaped by many personal and contextual factors, we assume that $b$ is private information.

One important determinant of $b$ is the \textit{severity} of the agent’s condition: a more severe illness or greater property damage typically generates higher perceived losses and, consequently, higher valuations for treatment or restoration services. If higher values of $b$ correspond to more serious events, it is plausible that service costs are also higher.

In Section~\ref{sec:multiple-prices}, we present an extension of our baseline model that captures this idea while preserving our main results. The basic setup in this extension is that the third-party provider offers tiered services: a basic one (i.e., a routine checkup) at price $p_1$, and a more advanced one (i.e., a specialist visit) at price $p_2$. Only agents with sufficiently high valuations choose the advanced service in addition to the basic one and therefore face higher treatment costs. Since the service received is verifiable by the insurer, he can condition contracts on the type of service provided.

\paragraph{Ex-Post and Ex-Ante Payoff.} An important assumption that brings tractability to the model is the linearity of the agent’s ex-post payoff in the quantity of services $x$. As we will show later, this assumption plays a crucial role in ensuring that the insurer’s objective is linear in $x$, which allows us to represent the insurer’s problem as an optimal transport problem and to use duality results to characterize the optimal insurance contracts.

The linearity of the agent’s ex-post payoff arises from our assumption that the service provider uses linear pricing. As previously mentioned, in Section~\ref{sec:multiple-prices} we present an extension in which the provider offers multiple services, each with potentially different base prices. However, we maintain the assumption that, for any given service, partial treatment is possible and the provider charges linearly in the amount of service received.

Another key assumption that ensures linearity of the insurer’s objective in $x$ is that the agents are endowed with a Yaari (dual) utility. This implies that the agent’s ex-ante utility is also linear in $x$, since dual utility distorts probabilities rather than payoffs. Moreover, since dual utility is additive over constant random variables, we will show that premiums can be expressed as a linear function in $x$.

Beyond its analytical convenience, we believe that dual utility is compelling for several substantive reasons. A number of empirical studies (i.e., \citealp{sydnor2010over, barseghyan2013nature, barseghyan2016inference}) document patterns in insurance choice that are difficult to reconcile with expected utility theory but are well explained by models involving probability weighting.

\paragraph{Market Structure.} The assumption that the insurance provider is a monopolist with price impact is motivated by many settings where insurers hold significant market power. According to the latest report on competition in health insurance \citep{AMA2024CompetitionHealthInsurance}, the American Medical Association (AMA) finds that, based on the DOJ/FTC Merger Guidelines, 95\% of metropolitan commercial markets were highly concentrated in 2023, with a Herfindahl--Hirschman Index (HHI) above 1800. The average commercial market exhibited an HHI of 3458. Moreover, in 89\% of metropolitan markets at least one insurer held a market share of 30\% or more, and in 47\% of markets a single insurer’s share reached at least 50\%. 

Regarding the assumption of a competitive service market, in Section~\ref{sec:nash-bargaining} we show that our baseline model can also be interpreted as a setting where prices are determined not by competition, but through Nash bargaining between the insurer and the third-party provider, with the insurer holding all the bargaining power. We also discuss how our main results extend to cases where the provider has bargaining power.

\section{First Stage: Optimal Insurance Menus}\label{sec:first-stage}
Recall that in the first stage we fix the service price $p \in \mathbb{R}_+$, and our goal is to identify the menu of insurance contracts that sustains $p$ in equilibrium and maximizes the insurer's profits. We begin by presenting the first main result of the paper, which characterizes the form of profit-maximizing contracts for any given price $p$.

\begin{theorem}\label{thm:optimal-menu}
    For any price $p$, the optimal contract offered to risk type $\theta$ takes one of the following forms: a \textbf{simple deductible contract} $\{(1, D(\theta))\}$ with $D(\theta) \in [0,p]$, or a \textbf{limited-coverage deductible contract} $\{ (\alpha,D(\theta)),(1, M(\theta))\}$ where $0\leq D(\theta) < M(\theta) \leq p$ and $\alpha \in (0,1)$.
\end{theorem}

In the next sections, we present the proof of Theorem~\ref{thm:optimal-menu}. The argument proceeds in three steps:

\begin{enumerate}[label=\textbf{Step \arabic*:}, leftmargin=*]
    \item \textbf{Characterization of implementable menus.} We first characterize contracts for which it is ex-post incentive compatible for an agent with loss-related valuation $b$ to select the contract item corresponding to her true valuation. Building on this, we characterize menus for which it is ex-ante individually rational for type $\theta$ to select an option, and ex-ante incentive compatible to choose the option corresponding to her true type. We rely on standard techniques for this analysis, though the ex-post stage requires additional care: the agent must also decide whether to seek services at all and whether to pay using the contract or at the full service price. Finally, we define the equilibrium constraint that any menu sustaining $p$ must satisfy.
    
    \item \textbf{Rewriting the insurer’s problem in the quantity space.} With Step 1 complete, we can formally state the insurer’s problem. In this formulation, the insurer chooses an allocation function $x(\theta,\cdot)$ for each type $\theta$. Using standard techniques from mechanism design \citep{bulow1989simple}, we transform this into the quantity space, where the insurer directly chooses the quantity of services $q$ demanded by each risk type. We then illustrate this formulation with an example (Section~\ref{sec:example}) involving two risk types, which provides intuition for Theorem~\ref{thm:optimal-menu}.
    
    \item \textbf{Optimal transport formulation.} Finally, we address the technical challenges that arise in the infinite-dimensional type space. To handle these, we transform the problem into one that closely resembles an optimal transport problem, in which the insurer matches risk types to service quantities. Leveraging known duality results from optimal transport theory, we derive several properties of the solution, including Theorem~\ref{thm:optimal-menu}.
\end{enumerate}

\subsection{Implementable Menus}\label{sec:implementable}

\paragraph{Period 2 Incentive Compatibility.}
After an agent with risk type $\theta'$ selects an option from the menu and her valuation $b$ is realized, she makes a series of decisions. First, she chooses whether to visit the service provider or remain at home. If she decides to seek the service, she has two payment options: covering the service bill entirely out of pocket or utilizing an item from her contract $\left\{\left(x(\theta,b), D(\theta,b)\right)\right\}_{b \in B}$. The agent’s \textit{loss function} when engaging with the item $\left(x(\theta,b'), D(\theta,b')\right)$ is defined as follows:
\[
    L(b,b';\theta) = x(\theta,b') \min\{D(\theta,b'), b_p\} + (1 - x(\theta,b')) b_p,
\]
where $b_p = \min\{b, p\}$.

We now explain why this expression represents the agent's loss. The agent will seek the service and utilize the fraction of treatment covered by the contract item, $x(\theta,b)$, if the OPC $D(\theta,b')$ is less than $b_p$. In this case, insurance covers a fraction $x(\theta,b)$ of the service and the agent pays $x(\theta,b)D(\theta,b')$.

If $b$ is higher than $p$, the agent will also demand the remaining fraction of the service, $1 - x(\theta,b)$, at the full per-unit price $p$. The additional service cost is $(1 - x(\theta,b))p$, resulting in a total loss of $x(\theta,b)D(\theta,b') + (1 - x(\theta,b))p$.

However, if $b$ is lower than $p$, the agent will not demand the remaining portion, and her loss will instead be $x(\theta,b)D(\theta,b') + (1 - x(\theta,b))b$, where $x(\theta,b)D(\theta,b')$ is the monetary payment for the $x(\theta,b)$ units of service and $(1 - x(\theta,b))b$ is the residual loss, which reflects the untreated portion of the adverse event.

Conversely, if $D(\theta,b')$ is higher than $b_p$, two scenarios may arise: 
\begin{enumerate*}[label=\arabic*)]
    \item If the service price $p$ is lower than her valuation $b$, the agent prefers to get the full unit of the service and pay the entire service cost out of pocket. In this situation, her loss is $p$.
    \item If her valuation $b$ is lower than $p$, the agent prefers to remain at home, and her loss is $b$.
\end{enumerate*}

Note that whenever the agent decides whether to seek services or to pay using a contract item, she compares the per-unit price under insurance, $D(\theta,b')$, with $b_p$. For this reason, it is more convenient to work with $D(\theta,b')$ rather than the total out-of-pocket payment $\hat{D}(\theta,b)=x(\theta,b)D(\theta,b)$.

Finally, it is also important to note that the agent's risk type $\theta'$ is irrelevant in determining her loss function. At this stage, the only relevant variables are her actual valuation $b$, the contract chosen from the menu (indexed by $\theta$), and the item selected from the contract (indexed by $b'$). Figure \ref{figMH} illustrates the agent's behavior and her resulting loss when she selects the item from the contract corresponding to her true valuation.

\begin{figure}[H]
        \centering
        \includegraphics[width=0.7\linewidth]{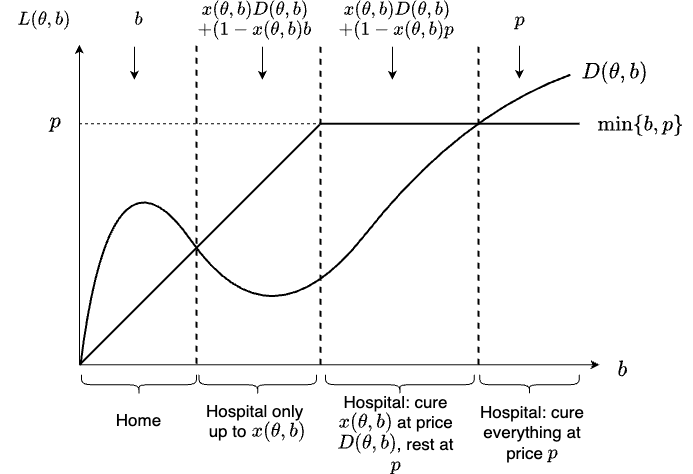}
        \caption{Loss and behavior of an agent that truthfully reports her valuation in the context of health insurance.}
        \label{figMH}
\end{figure}

Having introduced the agent's period 2 loss function, we can now define the usual incentive compatibility constraint that a contract must satisfy. Specifically, it suffices to focus on contracts where the agent truthfully reports her valuation:
\begin{equation}
    L(b;\theta) \equiv L(b,b;\theta) \leq L(b,b';\theta) \; \; \text{for all $b,b' \in B$.} \tag{$IC_2$}
    \label{IC2}
\end{equation}

We now show some properties of contracts that satisfy~\ref{IC2}:

\begin{lemma}\label{lemma:IR-contract}
    Without loss of generality we can restrict attention to contracts that satisfy $D(\theta,b) \leq b_p$.
\end{lemma}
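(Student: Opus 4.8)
The plan is to show that every menu satisfying \ref{IC2} admits an ``equivalent'' one --- inducing the same loss functions $L(\cdot;\theta)$, the same hospital demand, and the same insurer payouts, hence the same value of the first-stage problem at the fixed price $p$ --- in which every item has $C(\theta,b)\le b_p$. Given such a menu, I would perturb it as follows: for every $\theta$ and every $b$ with $C(\theta,b)>b_p$, replace the item $\big(x(\theta,b),C(\theta,b)\big)$ by the null item $(0,0)$, leaving all other items and all premiums $t(\theta)$ untouched. The perturbed menu trivially satisfies $C(\theta,b)\le b_p$ everywhere, so the content is in checking that the perturbation changes nothing the insurer cares about.

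The first observation is that on-path losses are preserved. Whenever $C(\theta,b)>b_p$ we have $\min\{C(\theta,b),b_p\}=b_p$, so item $b$ delivers to the agent with true disease level $b$ the loss $L(b,b;\theta)=x(\theta,b)\,b_p+(1-x(\theta,b))\,b_p=b_p$, exactly the loss of the null item; hence $L(b;\theta)$ is unchanged for all $b$. The second observation is that the perturbation only \emph{relaxes} \ref{IC2}: replacing an item $(x,C)$ by $(0,0)$ moves the loss of any agent with disease level $b''$ who selects it from $x\min\{C,b''_p\}+(1-x)b''_p$ to $b''_p$, a weak increase, since the difference equals $x\big(b''_p-\min\{C,b''_p\}\big)\ge 0$, where $b''_p=\min\{b'',p\}$. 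Since each perturbation touches a single item --- raising the ``report $b$'' loss under contract $\theta$ while leaving every truthful loss at its original value --- truthful reporting stays optimal after all perturbations are performed, so \ref{IC2} still holds.

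Next I would check demand and payouts, which is what makes the equivalence meaningful for the insurer. When $C(\theta,b)>b_p$, the agent with disease level $b$ strictly prefers not to draw on the coverage --- using any positive fraction costs strictly more than not using it (buying the unit at the full price $p$ when $b>p$, or remaining at home when $b<p$) --- so she demands $\mathbf{1}\{b>p\}$ units at price $p$ and the insurer pays zero, which is exactly the behavior the null item induces; thus each agent's on-path hospital consumption and payout are unchanged. Moreover, since $L(\cdot;\theta)$ is unchanged, the distribution of losses a type-$\vartheta$ agent faces under $H_\vartheta$ when taking option $\theta$ is unchanged, so its certainty equivalent --- and hence the value of every menu option to every risk type --- is unchanged (premiums were untouched). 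Therefore individual rationality and period-1 incentive compatibility are preserved, the assignment of risk types to contracts is unchanged, aggregate demand still equals $S(p)$, and total premium revenue and total payouts are unchanged; the insurer's profit is therefore unchanged, which gives the claim.

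The step I expect to be delicate is the choice of perturbation. The obvious attempt --- truncating $C(\theta,b)$ to $b_p$ while keeping $x(\theta,b)$ --- can \emph{lower} the deviation loss $L(b'',b;\theta)$ of a higher type $b''>b$, because $\min\{\,\cdot\,,b''_p\}$ may then bind strictly below $b''_p$, creating a profitable misreport and breaking \ref{IC2}. Zeroing the coverage quantity together with the out-of-pocket cost is precisely what makes the perturbation a one-sided relaxation of the period-2 constraints, and it also avoids any tie-breaking ambiguity in the insurer's payout when the agent is indifferent between using the coverage and not.
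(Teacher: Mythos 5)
Your proposal is correct and takes essentially the same route as the paper: the key move in both is to set the covered quantity $x(\theta,b)$ to zero (your null item $(0,0)$ versus the paper's $(0,b_p)$ are equivalent, since the OPC is irrelevant once $x=0$), observe that the truthful loss at $b$ is already $b_p$, and note that the replacement can only weakly raise deviation losses, so \ref{IC2} is preserved. Your write-up is more explicit than the paper's about why demand, payouts, and period-1 constraints are unaffected, and your closing remark that simply truncating $C$ while keeping $x>0$ would break \ref{IC2} is a correct and useful observation, but the underlying argument is the same.
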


\begin{lemma}\label{lemma:effective-b-type}
    A contract satisfies~\ref{IC2} only if $L(p;\theta)=L(b;\theta)$ for all $b > p$. Furthermore, without loss of generality, we can restrict attention to contracts that satisfy $x(\theta,b)=1$ for all $b > p$.
\end{lemma}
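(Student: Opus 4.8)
The plan is to treat the two assertions separately: the first is a direct consequence of \ref{IC2}, and the second follows by an explicit modification of the contract on the items indexed by $b>p$.

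For the first assertion, the key observation is that $L(b,b';\theta)=x(\theta,b')\min\{C(\theta,b'),b_p\}+(1-x(\theta,b'))b_p$ depends on the true level $b$ only through $b_p=\min\{b,p\}$, and $b_p=p$ whenever $b\ge p$. Hence $L(b,b';\theta)=L(p,b';\theta)$ for every $b\ge p$ and every $b'\in B$; that is, all true types $b>p$ face the same loss profile over contract items as the type $p$ (the sense in which $p$ is an ``effective type''). Specializing to $b'=b$ gives $L(b;\theta)=L(p,b;\theta)$, and to $b'=p$ gives $L(b,p;\theta)=L(p;\theta)$. Applying \ref{IC2} to true type $p$ deviating to item $b$ yields $L(p;\theta)\le L(p,b;\theta)=L(b;\theta)$, and applying it to true type $b$ deviating to item $p$ yields $L(b;\theta)\le L(b,p;\theta)=L(p;\theta)$; together these give $L(b;\theta)=L(p;\theta)$ for all $b>p$. (If $p\ge\bar{b}$ the claim is vacuous, and note this step uses neither Lemma~\ref{lemma:IR-contract} nor anything beyond \ref{IC2}.)

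For the second assertion, set $\hat C:=L(p;\theta)$ and form a new contract by replacing, for every $b>p$, the item $(x(\theta,b),C(\theta,b))$ with $(1,\hat C)$, leaving the items of index $\le p$ and the premium $t(\theta)$ unchanged. By Lemma~\ref{lemma:IR-contract} and $b>p$ we have $C(\theta,b)\le p$, so $\hat C=L(b;\theta)=x(\theta,b)C(\theta,b)+(1-x(\theta,b))p\le p=\min\{b,p\}$, and the new items still obey the restriction of Lemma~\ref{lemma:IR-contract}. Moreover, in the old contract a true type $b>p$ consumes one unit of hospital services in total ($x(\theta,b)$ insured, $1-x(\theta,b)$ out of pocket) and generates an insurer payout $x(\theta,b)\bigl(p-C(\theta,b)\bigr)=p-\hat C$, which is exactly the payout induced by the new item $(1,\hat C)$ (one insured unit); true types $b\le p$ are untouched. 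So provided the new contract is incentive compatible with the same per-type losses, aggregate hospital demand, insurer payouts, profits, and the clearing price $p$ are all preserved — which is precisely what ``without loss of generality'' requires.

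It remains to verify \ref{IC2} for the new contract and that every type's equilibrium loss is unchanged. Since items of index $\le p$ are untouched, one only compares, for each true type $b'$, the loss from a new item $b>p$ against the old optimum $L(b';\theta)$. If $b'\ge p$, the new item gives $\min\{\hat C,p\}=\hat C$, which equals the old loss $x(\theta,b)\min\{C(\theta,b),p\}+(1-x(\theta,b))p=\hat C$ from item $b$, so the whole loss profile of type $b'\ge p$ is literally unchanged and both \ref{IC2} and the loss carry over. If $b'<p$, the new item gives $\min\{\hat C,b'\}$ and the loss is generally \emph{not} preserved, so one instead shows truthful reporting stays optimal: $L(b',b';\theta)\le b'$ trivially, while \ref{IC2} for the old contract gives $L(b',b';\theta)\le L(b',b;\theta)=x(\theta,b)\min\{C(\theta,b),b'\}+(1-x(\theta,b))b'\le x(\theta,b)C(\theta,b)+(1-x(\theta,b))p=\hat C$ (using $b'<p$); hence $L(b',b';\theta)\le\min\{\hat C,b'\}$, so item $b'$ still weakly dominates every new item (and every old one, by \ref{IC2}), and the loss of type $b'$ remains $L(b';\theta)$. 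The one delicate point — and the main obstacle — is exactly this $b'<p$ case: the modification genuinely changes what a low true type obtains from an item originally intended for a high type, so the ``identical loss profile'' argument used for $b'\ge p$ fails, and one must instead bound the truthful loss from above both by $b'$ and, via the old \ref{IC2}, by $\hat C$.
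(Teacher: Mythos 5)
Your proof is correct and follows essentially the same route as the paper: the observation that $L(b,b';\theta)$ depends on the true $b$ only through $b_p=\min\{b,p\}$ gives the first claim, and replacing every item indexed above $p$ with $(1,L(p;\theta))$ and then checking that losses, payouts, demand, and \ref{IC2} are preserved gives the second. The only cosmetic difference is the intermediate bound used for true types $b'<p$: you compare against an item $b>p$ via $L(b',b;\theta)\le\hat C$, while the paper compares against item $p$ via $L(b',p;\theta)\le L(p;\theta)$; both yield $L(b';\theta)\le\min\{L(p;\theta),b'\}$.
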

The formal proofs of Lemmas~\ref{lemma:IR-contract} and~\ref{lemma:effective-b-type} can be found in Appendix~\ref{sec:proof-implmentable}.

Lemmas~\ref{lemma:IR-contract} and~\ref{lemma:effective-b-type} are intuitive. For Lemma~\ref{lemma:IR-contract}, if $D(\theta,b) > b_p$, the agent will not use her insurance. The insurer can replicate this outcome by setting $D'(\theta,b) \leq b_p$ but restricting the agent's access to the service by setting $x'(\theta,b) = 0$.

For Lemma~\ref{lemma:effective-b-type}, note that all agents with $b \geq p$ incur the same loss from any allocation. These are the agents who would always seek the service in the absence of insurance, effectively behaving as if they have the same valuation. To satisfy~\ref{IC2}, all such agents must experience the same loss. 

Moreover, for any agent with a valuation $b > p$, and for any contract that results in a loss of $L(p; \theta)$ for such an agent, the insurer can replicate the same outcome by setting $x'(\theta,b) = 1$ and $D'(\theta,b) = L(p; \theta)$. This choice preserves the agent’s demand for services, their loss (or ex-post payoff), and the insurer’s cost.

Lemmas~\ref{lemma:IR-contract} and~\ref{lemma:effective-b-type} allow us to express the loss of an agent with $b \in [0,p]$ when truthfully reporting as:  
\[
    L(b;\theta) = x(\theta,b)D(\theta,b) + (1 - x(\theta,b))b.
\]  
This loss function is linear in $b$ and separable in the OPC. Using standard arguments from mechanism design, we can characterize contracts that satisfy~\ref{IC2} by applying the envelope formula and the monotonicity of $x(\theta,\cdot)$:  

\begin{lemma}\label{lemma:IC2-contract}
    A contract satisfies~\ref{IC2} if and only if:  
    \begin{enumerate}[label=(\arabic*)]
        \item \label{condition:envelope-contract} $L(b;\theta) = \int_0^{b}(1 - x(\theta,l))\,dl$ for all $b \in [0, p]$, and $L(b;\theta) = L(p;\theta)$ for all $b > p$.
        \item \label{condition:monotonicity-contract} $x(\theta,\cdot): B \to [0,1]$ is a non-decreasing function, with $x(\theta,b) = 1$ for all $b > p$.
    \end{enumerate}
\end{lemma}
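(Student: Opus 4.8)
The plan is to treat this as a standard one-dimensional screening characterization applied to the period-2 subproblem, where the "type" is $b$ and the allocation is $(x(\theta,b),C(\theta,b))$, after using Lemmas \ref{lemma:IR-contract} and \ref{lemma:effective-b-type} to reduce the loss function to the clean separable form $L(b;\theta)=x(\theta,b)C(\theta,b)+(1-x(\theta,b))b$ on $[0,p]$ and constant on $(p,\bar b)$. First I would prove the "only if" direction. Fixing $\theta$ and writing $L(b)\equiv L(b;\theta)$, the truthful loss is $L(b)=\min_{b'} \big[x(\theta,b')C(\theta,b') + (1-x(\theta,b'))b\big]$, i.e. a lower envelope of affine functions of $b$ with slopes $1-x(\theta,b')\in[0,1]$. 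Hence $L$ is concave, non-decreasing, and $1$-Lipschitz on $[0,p]$; by the envelope theorem (Milgrom–Segal) it is absolutely continuous with $L'(b)=1-x(\theta,b)$ at every point where $x(\theta,\cdot)$ is continuous, and concavity of $L$ forces $1-x(\theta,b)$ to be non-increasing, i.e. $x(\theta,\cdot)$ non-decreasing. Integrating $L'$ from $0$ and using $L(0)=0$ (which follows because an agent with $b=0$ has zero loss: she stays home, and any \ref{IC2} contract cannot give her a negative loss, while $L(0)\le L(0,b';\theta)$ with the home option available gives $L(0)\le 0$; combined with $L(0)\ge 0$ trivially) yields condition \ref{condition:envelope-contract} on $[0,p]$. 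The statement for $b>p$ and $x(\theta,b)=1$ there is exactly Lemma \ref{lemma:effective-b-type}.

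For the "if" direction, suppose \ref{condition:envelope-contract} and \ref{condition:monotonicity-contract} hold. I must show $L(b;\theta)\le L(b,b';\theta)$ for all $b,b'$. Using Lemma \ref{lemma:IR-contract} we have $C(\theta,b')\le b'_p\le p$, so for $b\le p$ the misreport loss simplifies to $L(b,b';\theta)=x(\theta,b')C(\theta,b')+(1-x(\theta,b'))b$ (one needs to check the two regimes $b\gtrless C(\theta,b')$ and $b\gtrless$ the relevant cutoffs collapse to this expression given $C(\theta,b')\le p$ and the home/partial-treatment comparison). Then
\[
L(b,b';\theta)-L(b;\theta)=\big[x(\theta,b')C(\theta,b')+(1-x(\theta,b'))b\big]-\int_0^{b}(1-x(\theta,l))\,dl .
\]
Since \ref{condition:envelope-contract} evaluated at $b'$ gives $x(\theta,b')C(\theta,b')=\int_0^{b'}(1-x(\theta,l))\,dl-(1-x(\theta,b'))b'$ — wait, more directly: from \ref{condition:envelope-contract}, $L(b';\theta)=x(\theta,b')C(\theta,b')+(1-x(\theta,b'))b'=\int_0^{b'}(1-x(\theta,l))\,dl$, so the difference equals $\int_0^{b'}(1-x(\theta,l))\,dl + (1-x(\theta,b'))(b-b') - \int_0^{b}(1-x(\theta,l))\,dl = (1-x(\theta,b'))(b-b') - \int_{b'}^{b}(1-x(\theta,l))\,dl$, which is $\ge 0$ because $1-x(\theta,\cdot)$ is non-increasing (so $1-x(\theta,b')\ge 1-x(\theta,l)$ for $l\ge b'$ and $\le$ for $l\le b'$, making the integral comparison go the right way in both cases $b>b'$ and $b<b'$). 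For reports $b'>p$, Lemma \ref{lemma:effective-b-type} and $x(\theta,b')=1$ make that option equivalent to (weakly worse than) reporting $b'=p$, so it suffices to have checked $b'\le p$; and for the true type $b>p$, constancy of $L$ plus the same argument at $b=p$ closes it.

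The main obstacle I expect is purely bookkeeping: carefully verifying that, under $C(\theta,b')\le p$, the piecewise definition of $L(b,b';\theta)$ through $\min\{C(\theta,b'),b_p\}$ and the $(1-x)b_p$ term really does collapse to the single linear expression $x(\theta,b')C(\theta,b')+(1-x(\theta,b'))b$ for $b\le p$ — and handling the boundary/measure-zero issues where $x(\theta,\cdot)$ jumps (the envelope formula holds everywhere by absolute continuity, but $L'(b)=1-x(\theta,b)$ only a.e., so condition \ref{condition:envelope-contract} must be stated as the integral, as it is). The concavity/$1$-Lipschitz structure makes the monotonicity of $x$ and the integral representation essentially automatic, so the economics is standard; the work is in the case analysis inherited from the period-2 loss function.
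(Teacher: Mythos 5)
Your proposal is correct and follows essentially the same route as the paper: both reduce via Lemmas \ref{lemma:IR-contract}--\ref{lemma:effective-b-type} to the separable loss on $[0,p]$, express $L(\cdot;\theta)$ as a lower envelope of affine functions, apply Milgrom--Segal to get the integral representation with $L(0;\theta)=0$ pinned down by $0\le L\le b$, and then verify sufficiency by computing $L(b,b';\theta)-L(b;\theta)=(1-x(\theta,b'))(b-b')-\int_{b'}^b(1-x(\theta,l))\,dl$ and invoking monotonicity, with the same case split on $C(\theta,b')\gtrless b$ and on $b,b'\gtrless p$. The one small stylistic difference is in the necessity of monotonicity: you derive $x(\theta,\cdot)$ non-decreasing from concavity of $L$ (as a min of affine functions, $L'$ is non-increasing, hence $1-x$ is), whereas the paper combines the two IC inequalities at $b$ and $b'$ directly; the paper's version pins monotonicity pointwise for all pairs rather than only at points of differentiability, but both are standard and equivalent up to a measure-zero modification of $x$ that leaves $L$ and the insurer's objective unchanged.
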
 
The formal proof of Lemma~\ref{lemma:IC2-contract} can be found in Appendix~\ref{sec:proof-implmentable}.

\paragraph{Period 1 Incentive Compatibility and Individual Rationality.}
By Lemma~\ref{lemma:IC2-contract}, we know that if an agent with risk type $\theta$ selects the menu option: 
\[
    \left(\left\{\left(x(\theta',b),D(\theta',b)\right)\right\}_{b \in B},t(\theta')\right),
\]
the agent’s loss function is given by:
\[
    L(b;\theta') = \int_0^b (1 - x(\theta',l)) \, dl.
\]
Observe that $L(\cdot;\theta')$ is non-decreasing. Consequently, the lottery faced by risk type $\theta$ in period 1 is represented by $H_{\theta}(L^{-1}(l;\theta'))$, where $L^{-1}(l;\theta) = \sup\{z : L(z;\theta) \leq l\}$. 

The certainty equivalent that risk type $\theta$ gets from her menu option is given by:
\begin{align*}
    U(\theta,\theta')&=-t(\theta')-\int_0^{L(\bar{b};\theta')}\left[1-g\left(H_{\theta}\left(L^{-1}(l;\theta')\right)\right)\right] \, dl \\
    &=-t(\theta')-\int_0^{\bar{b}}\frac{\partial L(b;\theta')}{\partial b}[1-g(H_{\theta}(b))] \, db\\
    &=-t(\theta')-\int_0^{p}(1-x(\theta',b))[1-g(H_{\theta}(b))]db.
\end{align*}
The additivity of the premium arises from the fact that dual utility is additive in constant random variables. The second equality follows from the change of variable $b=L^{-1}(l;\theta)$, while the third equality results from $\frac{\partial L(b;\theta')}{\partial b}=1-x(\theta',b)$ (Lemma~\ref{lemma:IC2-contract}, Condition~\ref{condition:envelope-contract}), along with the condition $x(\theta',b)=1$ for $b>p$ (Lemma~\ref{lemma:IC2-contract}, Condition~\ref{condition:monotonicity-contract}).

The certainty equivalent is linear in $x$, a property that provides crucial tractability for the analysis. This linearity arises from the fact the agent's ex-post payoff is linear in $x$ and the fact that dual utility distorts probabilities rather than payoffs. 

If the agent were instead a risk-averse expected utility maximizer, two important aspects would change. First, because the agent’s loss $L(b; \theta)$ is evaluated through a concave von Neumann–Morgenstern utility function, the certainty equivalent would no longer be linear in $x$. Second, since this distortion also applies to the premium, $t$ would no longer be separable from $x$.

The period 1 incentive compatibility constraint is:
\[
    U(\theta) \equiv U(\theta,\theta) \geq U(\theta,\theta') \; \; \text{for all $\theta, \theta' \in \Theta$,} \tag{$IC_1$} \label{IC1}
\]

Lemma~\ref{lemma:IC1-menu} provides a partial characterization of menus that satisfy~\ref{IC1}:

\begin{lemma}\label{lemma:IC1-menu} \leavevmode
    \begin{enumerate}[label=(\arabic*)]
        \item \label{condition:IC1-necessity} A menu satisfies~\ref{IC1} only if 
        \[
            U(\theta) = U(\underline{\theta}) + \int_{\underline{\theta}}^{\theta}\int_0^{p}(1-x(s,b))g'(H_s(b))\frac{\partial H_s(b)}{\partial s} \, db \, ds. 
        \]
        \item \label{condition:IC1-sufficiency} If a menu satisfies the above condition and $x(\cdot,b):\Theta \to [0,1]$ is a non-decreasing function for all $b \in B$, then the menu satisfies~\ref{IC1}.
    \end{enumerate}
\end{lemma}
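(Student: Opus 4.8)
The plan is to treat this as a standard one-dimensional mechanism design / envelope argument, with $\theta$ as the type and $x$ playing the role of the allocation, exploiting the fact (established just before the lemma) that $U(\theta,\theta')$ is linear in the allocation $x(\theta',\cdot)$ through the expression
\[
    U(\theta,\theta') = -t(\theta') - \int_0^p (1-x(\theta',b))\bigl[1-g(H_\theta(b))\bigr]\,db.
\]
The only nonstandard feature is that the type $\theta$ enters the agent's payoff \emph{through the probability distortion} $g(H_\theta(b))$ rather than multiplicatively, so I first need to check the relevant single-crossing-type structure in terms of the cross-partial $\partial^2 U/\partial\theta\,\partial(\text{report})$.

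For part \ref{condition:IC1-necessity}, I would proceed as follows. First, fix the reported type and differentiate $U(\theta,\theta')$ with respect to the true type $\theta$: since $\theta$ appears only inside $g(H_\theta(b))$, we get
\[
    \frac{\partial U(\theta,\theta')}{\partial \theta} = \int_0^p (1-x(\theta',b))\, g'(H_\theta(b))\,\frac{\partial H_\theta(b)}{\partial \theta}\,db,
\]
using the smoothness assumptions on $H_\theta$ in $\theta$ and on $g$, and dominated convergence to justify differentiating under the integral. Then the usual envelope/integral-of-the-derivative argument for incentive-compatible mechanisms gives that $U(\theta) = U(\theta,\theta)$ is Lipschitz (hence absolutely continuous) with $U'(\theta) = \frac{\partial U(\theta,\theta')}{\partial \theta}\big|_{\theta'=\theta}$, and integrating from $\underline\theta$ to $\theta$ yields exactly the stated formula. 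The standard care points here are: (i) establishing absolute continuity of $U$ from the fact that it is a supremum of a family of (uniformly) Lipschitz functions of $\theta$ — this follows because $g'$ is continuous on the compact $[0,1]$ and $\partial H_\theta/\partial\theta$ is continuous, so the derivative is uniformly bounded; and (ii) justifying that the envelope derivative equals the partial derivative evaluated at the truthful report, which is the Milgrom–Segal envelope theorem applied to $U(\theta) = \max_{\theta'} U(\theta,\theta')$.

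For part \ref{condition:IC1-sufficiency}, I would verify the integral monotonicity / cyclic-monotonicity condition directly. Given the envelope formula and a candidate deviation to report $\theta'$, write
\[
    U(\theta,\theta) - U(\theta,\theta') = \bigl[U(\theta,\theta)-U(\theta',\theta')\bigr] - \bigl[U(\theta,\theta') - U(\theta',\theta')\bigr],
\]
expand the first bracket using the integral formula from part \ref{condition:IC1-necessity}, and expand the second bracket by writing $U(\theta,\theta')-U(\theta',\theta') = \int_{\theta'}^{\theta} \partial_s U(s,\theta')\,ds = \int_{\theta'}^{\theta}\int_0^p (1-x(\theta',b))g'(H_s(b))\partial_s H_s(b)\,db\,ds$. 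The difference then reduces to a double integral of $\bigl(x(\theta',b) - x(s,b)\bigr)\,g'(H_s(b))\,\partial_s H_s(b)$ over $s$ between $\theta'$ and $\theta$ and $b\in[0,p]$; since $g' > 0$ and $\partial_s H_s(b) > 0$ (first-order stochastic dominance increasing in $\theta$, combined with $h_\theta>0$), and $x(\cdot,b)$ is non-decreasing, the integrand has the right sign on the relevant region in both cases $\theta>\theta'$ and $\theta<\theta'$, giving $U(\theta,\theta)\ge U(\theta,\theta')$.

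The main obstacle I anticipate is part (2): one must be careful that monotonicity of $x(\cdot,b)$ \emph{for each fixed $b$} is exactly what makes the sign of the double integral work out, and that the sign of $\partial_s H_s(b)$ is negative or nonpositive — here note that FOSD increasing in $\theta$ means $H_\theta(b)$ is \emph{decreasing} in $\theta$, so $\partial_s H_s(b) \le 0$, and one has to track this sign carefully together with the direction of integration in $s$ and the monotonicity of $x$. Getting all three sign conventions aligned (the sign of $\partial_s H_s$, the orientation of $\int_{\theta'}^\theta$, and the direction of the inequality $x(\theta',b)\lessgtr x(s,b)$) is the delicate bookkeeping step; once that is pinned down, both directions are routine. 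It is also worth remarking that the lemma only claims monotonicity of $x(\cdot,b)$ is \emph{sufficient}, not necessary, so no converse to the monotonicity claim is needed here.
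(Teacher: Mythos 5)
Your proposal follows essentially the same approach as the paper: part (1) via the Milgrom--Segal envelope theorem applied to $U(\theta)=\max_{\theta'}U(\theta,\theta')$, and part (2) by rewriting $U(\theta,\theta)-U(\theta,\theta')$ as a double integral over $(s,b)$ whose integrand $\bigl[x(\theta',b)-x(s,b)\bigr]g'(H_s(b))\,\partial_s H_s(b)$ is signed using monotonicity of $x(\cdot,b)$, $g'\ge 0$, and $\partial_\theta H_\theta(b)\le 0$. The one thing to fix is the mid-proof parenthetical asserting $\partial_s H_s(b)>0$; as you yourself note in the closing paragraph, FOSD increasing in $\theta$ gives $\partial_s H_s(b)\le 0$, and that is the sign under which the bookkeeping closes correctly and agrees with the paper's proof.
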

The formal proof of Lemma~\ref{lemma:IC1-menu} is in Appendix~\ref{sec:proof-implmentable}.

Condition~\ref{condition:IC1-necessity} of Lemma~\ref{lemma:IC1-menu} is the standard envelope formula. Importantly, the additivity of dual utility in constant random variables, combined with Condition~\ref{condition:IC1-necessity}, enables us to determine premiums (up to a constant) as a function of $x$. Additionally, the monotonicity of $x(\cdot,b)$ for all $b$, together with Condition~\ref{condition:IC1-necessity}, ensures that a menu satisfies~\ref{IC1}.

However, a menu that satisfies Condition~\ref{condition:IC1-necessity} does not necessarily require $x(\cdot,b)$ to be monotonic for all $b$. This result is common in screening environments where the allocation chosen by the designer for a fixed agent type is infinite---or multi-dimensional. Nevertheless, monotonicity remains a robust condition, as it does not depend on the functional forms of $g$ or $H_{\theta}$.

It remains to define the period 1 participation constraint. We begin by specifying the agent’s outside option (non-participation payoff), denoted $U_{NP}(\theta)$ for risk type $\theta$. We assume that an uninsured agent faces a service price of $\beta + p$ with $\beta \ge 0$; that is, the uninsured pays a weakly positive (possibly zero) premium $\beta$ over the baseline price $p$. This reflects empirical evidence that uninsured individuals typically face higher prices: for example, \citet{anderson2007soak} document that uninsured hospital patients are charged, on average, 2.5 times more than insured patients for the same services. It also captures a less visible but important benefit of insurance—access to lower prices \citep{cook2020uninsured}. 

\[
U_{NP}(\theta) = -\int_0^{\beta+p} [1 - g(H_\theta(b))]\,db.
\]

Note that the insurer can always offer a contract with zero premium and no coverage, ensuring $U(\theta) \geq U_{NP}(\theta)$. Therefore, it is without loss of generality to restrict attention to mechanisms in which all agents participate in period 1. This leads to the following period 1 individual rationality constraint:

\[
U(\theta) \geq U_{NP}(\theta) \quad \text{for all } \theta \in \Theta. \tag{$IR$} \label{IR}
\]


Then, we can show that an incentive compatible mechanism satisfies~\ref{IR} if and only if it binds for the lowest type:
\begin{lemma}\label{lemma:IR-menu}
    Suppose a menu satisfies~\ref{IC1}. Then, the~\ref{IR} condition is satisfied for all types if and only if it is satisfied for the lowest type, $\underline{\theta}$.
\end{lemma}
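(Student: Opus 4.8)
The plan is to reduce the statement to a monotonicity property: under $(IC_1)$, the map $\theta \mapsto U(\theta) - V(\theta)$, where $V(\theta) := -\int_0^p [1 - g(H_\theta(b))]\,db$ is the right-hand side of the $(IR)$ constraint, is non-decreasing on $\Theta$. Granting this, the equivalence follows immediately: the ``only if'' direction is trivial, and for the ``if'' direction, if $(IR)$ holds at $\underline{\theta}$, i.e. $U(\underline{\theta}) - V(\underline{\theta}) \geq 0$, then monotonicity yields $U(\theta) - V(\theta) \geq U(\underline{\theta}) - V(\underline{\theta}) \geq 0$ for every $\theta \in \Theta$, which is $(IR)$ at $\theta$.

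To establish the monotonicity of $U - V$, I would first invoke Condition \ref{condition:IC1-necessity} of Lemma \ref{lemma:IC1-menu} to write, for $\theta \geq \underline{\theta}$,
\[
U(\theta) - U(\underline{\theta}) = \int_{\underline{\theta}}^{\theta}\int_0^{p}(1-x(s,b))\,g'(H_s(b))\,\frac{\partial H_s(b)}{\partial s}\,db\,ds.
\]
Next I would differentiate $V$ under the integral sign — legitimate since $p < \infty$ makes the domain of integration compact, $H_\theta(b)$ is continuously differentiable in $\theta$ by assumption, and $g$ is continuously differentiable — to get $V'(\theta) = \int_0^p g'(H_\theta(b))\,\frac{\partial H_\theta(b)}{\partial \theta}\,db$, hence
\[
V(\theta) - V(\underline{\theta}) = \int_{\underline{\theta}}^{\theta}\int_0^{p} g'(H_s(b))\,\frac{\partial H_s(b)}{\partial s}\,db\,ds.
\]
Subtracting the two displays,
\[
\bigl(U(\theta) - V(\theta)\bigr) - \bigl(U(\underline{\theta}) - V(\underline{\theta})\bigr) = -\int_{\underline{\theta}}^{\theta}\int_0^{p} x(s,b)\,g'(H_s(b))\,\frac{\partial H_s(b)}{\partial s}\,db\,ds.
\]
It then remains to sign the integrand: $x(s,b) \geq 0$ by construction, $g'(\cdot) \geq 0$ because $g$ is increasing, and $\frac{\partial H_s(b)}{\partial s} \leq 0$ because $H_\theta$ is increasing in $\theta$ in the first-order stochastic dominance order (higher risk types face stochastically larger losses, so their CDF lies weakly below). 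Thus the inner double integral is $\leq 0$, the leading minus sign flips it, and the whole expression is $\geq 0$ — exactly the monotonicity claim.

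I do not expect a genuine obstacle: the only points needing care are (i) extracting the correct sign of $\partial_\theta H_\theta$ from the FOSD convention fixed in the model, and (ii) justifying the interchange of differentiation and integration in computing $V'$, both handled by the maintained regularity assumptions ($H_\theta$ continuously differentiable in $\theta$, $g$ continuously differentiable, and $p$ finite so $[0,p]$ is compact). The economic substance behind the single-crossing-type step is that coverage $x(s,b) \ge 0$ dampens the rate at which an insured agent's payoff deteriorates as $\theta$ rises, relative to the rate at which her uninsured outside option deteriorates; consequently the participation surplus only grows in $\theta$, and a binding participation constraint at the lowest type is enough.
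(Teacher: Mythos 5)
Your proof is correct and follows essentially the same route as the paper's: both apply the envelope formula from Lemma \ref{lemma:IC1-menu} Condition \ref{condition:IC1-necessity} to $U$, differentiate the outside option under the integral to get a parallel representation, subtract to isolate $-\int_{\underline{\theta}}^{\theta}\int_0^p x(s,b)\,g'(H_s(b))\,\partial_s H_s(b)\,db\,ds$, and sign it using $x \ge 0$, $g' \ge 0$, and $\partial_\theta H_\theta \le 0$ from FOSD. The only difference is notational (the paper writes $U_{NP}$ where you write $V$).
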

The formal proof of Lemma~\ref{lemma:IR-menu} is in Appendix~\ref{sec:proof-implmentable}.

Lemma~\ref{lemma:IR-menu} follows from the fact that a higher-risk type faces a riskier lottery and, therefore, has a lower certainty equivalent. This implies that both the agent’s outside option, $U_{NP}$, and their utility from participation, $U$, decrease in $\theta$. However, $U$ decreases more slowly, since the agent receives some coverage (i.e., $x(\theta,b) \in [0,1]$). Therefore, $U'(\theta) - U'_{NP}(\theta) \geq 0$, and
\[
U(\theta) - U_{NP}(\theta) 
= U(\underline{\theta}) - U_{NP}(\underline{\theta}) 
+ \int_0^{\theta} [U'(s) - U'_{NP}(s)] \, ds 
\geq 0,
\]
whenever~\ref{IR} holds for type $\underline{\theta}$.

\paragraph{Period 3 Equilibrium Constraint.}
If each contract in a menu satisfies~\ref{IC2} and the menu satisfies~\ref{IC1} and~\ref{IR}, the demand for services by an agent with risk type $\theta$ and valuation $b$ is $x(\theta,b)$. This follows because the menu and its contracts are both incentive compatible and individually rational, ensuring that the agent selects the contract item $\left(x(\theta,b), D(\theta,b)\right)$. 

Moreover, if $b \leq p$, the agent will not demand any additional units of the service, and her demand is fully determined by $x(\theta,b)$. Conversely, if $b > p$, she would, in principle, demand the remaining portion, $1 - x(\theta,b)$. However, by Lemma~\ref{lemma:effective-b-type}, $x(\theta,b) = 1$, meaning the remaining portion is zero. Thus, her demand is again fully determined by $x(\theta,b)$.

The demand for services at price $p$ is given by:  
\begin{align*}
    \mathcal{D}(x;p) &= \int_{\underline{\theta}}^{\bar{\theta}} \int_0^{\bar{b}} x(\theta,b)h_{\theta}(b)\,dbf(\theta) \, d\theta \\
           &= \int_{\underline{\theta}}^{\bar{\theta}} \int_{0}^{p} x(\theta,b)h_{\theta}(b)\,db f(\theta) \, d\theta 
           + \int_{\underline{\theta}}^{\bar{\theta}} [1 - H_{\theta}(p)] f(\theta) \, d\theta,
\end{align*}
where the second equality follows from the fact that $x(\theta,b) = 1$ for all $b > p$.

It is convenient to define the residual supply of services faced by the insurer at price $p$ as:  
\[
    RS(p) = S(p) - \int_{\underline{\theta}}^{\bar{\theta}} [1 - H_{\theta}(p)] f(\theta) \, d\theta,
\]
where the second term reflects the portion of demand that the insurer cannot influence, as no matter what menu the insurer offers, he cannot affect the demand arising from valuations higher than $p$. 

The equilibrium constraint can then be expressed as:  
\[
    \int_{\underline{\theta}}^{\bar{\theta}} \int_{0}^{p} x(\theta,b)h_{\theta}(b)\,db f(\theta) \, d\theta = RS(p). \tag{$EQ$} \label{EQ}
\]

The lowest demand that the insurer can generate occurs with a menu that sets $\underline{x}(\theta, b) = 0$ for all $\theta$ and $b \in [0, p]$. In this case, the insurer provides no coverage to any agents, and demand is given by:
\[
    \mathcal{D}(\underline{x};p) = \int_{\underline{\theta}}^{\bar{\theta}} [1 - H_{\theta}(p)] f(\theta) \, d\theta.
\]

Moreover, by our technical assumptions on the supply function $S$ and the distributions of valuations $H_{\theta}$, there exists a unique strictly positive price $p^N$ that satisfies $\mathcal{D}\left( \underline{x};p^N\right) = S\left(p^N\right)$. That is, $p^N$ is the price that clears the service market in the absence of coverage.

Similarly, the highest demand that the insurer can generate occurs with a menu that sets $\overline{x}(\theta, b) = 1$ for all $\theta$ and $b \in [0, p]$. In this case, the insurer provides full insurance---or full access to services at zero OPC. Demand in this case is $\mathcal{D}\left(\overline{x};p\right) = \int_{\underline{\theta}}^{\bar{\theta}}[1-H_{\theta}(0)]f(\theta)\,d\theta$, where $1-H_{\theta}(0)$ is the probability that a risk type $\theta$ experiences an adverse event. Once again, by our technical assumptions, there exists a unique positive price $p^F$ such that $\mathcal{D}\left( \overline{x};p^F\right) = S\left(p^F\right)$, and $p^F > p^N$.

We then have the following result:

\begin{lemma}\label{lemma:feasible-prices}
    There exists a function $x$ that satisfies Condition~\ref{condition:monotonicity-contract} of Lemma~\ref{lemma:IC2-contract} and~\ref{EQ} if and only if $p \in \left[p^N, p^F\right]$.
\end{lemma}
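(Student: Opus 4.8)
The plan is to reduce \ref{EQ} to a pair of scalar inequalities in $p$ and then exploit monotonicity in $p$. Write the portion of demand the insurer controls as
\[
    \Phi(p;x) \;:=\; \int_{\underline{\theta}}^{\bar{\theta}}\int_0^p x(\theta,b)\,h_\theta(b)\,db\,f(\theta)\,d\theta ,
\]
so that \ref{EQ} reads $\Phi(p;x)=RS(p)$; from the demand decomposition, $\Phi(p;x)=D(p;x)-D(p;\underline{x})$. The first step is to compute the range of $\Phi(p;\cdot)$ as $x$ ranges over the allocations satisfying Condition \ref{condition:monotonicity-contract} of Lemma \ref{lemma:IC2-contract} --- the only restriction in the present statement that constrains $x$ by itself. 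Since $x(\theta,b)\in[0,1]$, one has $0\le\Phi(p;x)\le\Phi(p;\overline{x})=D(p;\overline{x})-D(p;\underline{x})=1-D(p;\underline{x})$, using $D(p;\overline{x})=1$. Both endpoints are attained by admissible allocations, namely $\underline{x}$ and $\overline{x}$, and every intermediate value is attained by the admissible family $x_\alpha$ given by $x_\alpha(\theta,b)=\alpha$ for $b\le p$ and $x_\alpha(\theta,b)=1$ for $b>p$, for which $\Phi(p;x_\alpha)=\alpha\bigl(1-D(p;\underline{x})\bigr)$ sweeps $[0,1-D(p;\underline{x})]$ continuously as $\alpha$ runs over $[0,1]$. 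Hence an admissible $x$ solving \ref{EQ} exists if and only if $0\le RS(p)\le 1-D(p;\underline{x})$.

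For the second step I would substitute $RS(p)=S(p)-D(p;\underline{x})$. The inequality $RS(p)\ge 0$ becomes $S(p)\ge D(p;\underline{x})$, while in $RS(p)\le 1-D(p;\underline{x})$ the terms $D(p;\underline{x})$ cancel and leave $S(p)\le 1$. So \ref{EQ} is feasible at $p$ if and only if $D(p;\underline{x})\le S(p)\le 1$.

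The third step is a monotonicity argument in $p$. The map $p\mapsto S(p)-D(p;\underline{x})=RS(p)$ is strictly increasing, since $S$ is strictly increasing by assumption and $D(p;\underline{x})=\int_{\underline{\theta}}^{\bar{\theta}}[1-H_\theta(p)]f(\theta)\,d\theta$ is non-increasing because each $H_\theta$ is a distribution function. As $p^N$ is by definition the unique price with $S(p^N)=D(p^N;\underline{x})$, strict monotonicity yields $S(p)\ge D(p;\underline{x})\iff p\ge p^N$. Likewise, $S$ strictly increasing together with $S(p^F)=1$ yields $S(p)\le 1\iff p\le p^F$. Combining the two equivalences, \ref{EQ} is feasible exactly for $p\in[p^N,p^F]$, a nonempty interval because $p^F>p^N$.

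The step I expect to be the main obstacle is the first one: verifying that the image of $\Phi(p;\cdot)$ over admissible allocations is the \emph{entire} closed interval $[0,1-D(p;\underline{x})]$ rather than some proper subset of it. The explicit constant-in-$b$ family $x_\alpha$ handles surjectivity, and it remains only to check that each $x_\alpha$ is admissible (non-decreasing in $b$, equal to $1$ above $p$) and jointly measurable, so that $D(p;x_\alpha)$ is well defined. Everything else is bookkeeping with the identity $RS(p)=S(p)-D(p;\underline{x})$ and the monotonicity of $S$ and $D(\cdot;\underline{x})$.
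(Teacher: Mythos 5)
Your proof is correct and follows essentially the same route as the paper: the core ingredients are the same (the constant-in-$b$ family $x_\alpha$ sweeping the attainable demand interval, the strict monotonicity of $RS$, and the defining equalities $RS(p^N)=0$ and $S(p^F)=1$). The only organizational difference is that you first characterize feasibility of \ref{EQ} as the pair of inequalities $D(p;\underline{x})\le S(p)\le 1$ and then invoke monotonicity, whereas the paper argues necessity and sufficiency as two separate cases (using the intermediate value theorem explicitly for sufficiency on the open interval $(p^N,p^F)$). Your version has the minor advantage of treating the endpoints $p^N$ and $p^F$ uniformly rather than leaving them implicit.
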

The formal proof of Lemma~\ref{lemma:feasible-prices} is provided in Appendix~\ref{sec:proof-implmentable}.

From now on, we assume that $p$ belongs to $\left[p^N, p^F\right]$, ensuring that there exists a feasible menu that induces it as an equilibrium price in the service market.

\subsection{Insurer's Problem}\label{sec:insurer-problem}
We are now ready to formulate the first-stage insurer's problem. To do so, the following Lemma provides an expression for the insurer's objective:
\begin{lemma}\label{lemma:objective}
    Suppose that each contract in a menu satisfies~\ref{IC2}, and the menu satisfies~\ref{IC1} and~\ref{IR}. Then, the insurer's profits from the menu are:
    \[
        \int_{\underline{\theta}}^{\bar{\theta}} \int_{0}^{p} x(\theta,b) J(\theta,b) h_{\theta}(b)\, db f(\theta) \, d\theta 
        - U(\underline{\theta}) - \int_{0}^{p} [1 - g(H_{\underline{\theta}}(b))] \, db,
    \]
    where
    \[
        J(\theta,b) = \underbrace{\left[1 - g(H_{{\theta}}(b)) + \frac{1 - F(\theta)}{f(\theta)} g'(H_{\theta}(b)) \frac{\partial H_{\theta}(b)}{\partial \theta}\right]}_{\text{Period 1 Virtual Value}}\frac{1}{h_{\theta}(b)} +\underbrace{b - p - \frac{1 - H_{\theta}(b)}{h_{\theta}(b)}}_{\text{Period 2 Virtual Value}}.
    \]
\end{lemma}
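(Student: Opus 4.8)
The plan is to start from the ``mechanical'' expression for profit and then eliminate the premiums $t(\theta)$ and the OPCs $C(\theta,b)$ using the incentive characterizations of Lemmas \ref{lemma:IR-contract}--\ref{lemma:IC1-menu}. Once each contract is $IC_2$ and the menu is $IC_1$ and $IR$, every agent $(\theta,b)$ uses her designated item, so she pays $x(\theta,b)C(\theta,b)$ out of pocket for the $x(\theta,b)$ units consumed at market price $p$, and the insurer's net receipt from that agent is $t(\theta)+x(\theta,b)\bigl(C(\theta,b)-p\bigr)$; since $x(\theta,b)=1$ for $b>p$ by Lemma \ref{lemma:effective-b-type}, this expression is valid uniformly across $b\le p$ and $b>p$. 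Hence
\[
\Pi=\int_{\underline{\theta}}^{\bar{\theta}} t(\theta)f(\theta)\,d\theta+\int_{\underline{\theta}}^{\bar{\theta}}\int_0^{\bar{b}} x(\theta,b)\bigl(C(\theta,b)-p\bigr)h_\theta(b)\,db\,f(\theta)\,d\theta,
\]
and the goal is to rewrite both terms in terms of $x$ alone, plus the constant $U(\underline{\theta})$.

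For the payout term I would invoke Condition \ref{condition:envelope-contract} of Lemma \ref{lemma:IC2-contract}: for $b\le p$, $x(\theta,b)C(\theta,b)=L(b;\theta)-(1-x(\theta,b))b=\int_0^b(1-x(\theta,l))\,dl-(1-x(\theta,b))b$, while for $b>p$ we have $x(\theta,b)=1$ and $C(\theta,b)=L(p;\theta)=\int_0^p(1-x(\theta,l))\,dl$. Splitting the $b$-integral at $p$, applying Fubini to $\int_0^p\bigl(\int_0^b(1-x(\theta,l))\,dl\bigr)h_\theta(b)\,db$, and combining with the $b>p$ block, the $\int_0^b(1-x)$ contributions collapse to $\int_0^p(1-x(\theta,l))(1-H_\theta(l))\,dl$; a one-line integration by parts then shows the $x$-free remainder is identically zero, leaving $\int_0^p x(\theta,b)\bigl[b-p-\tfrac{1-H_\theta(b)}{h_\theta(b)}\bigr]h_\theta(b)\,db$, i.e.\ exactly the Period 2 virtual-value piece of $J$.

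For the premium term I would solve $U(\theta)=-t(\theta)-\int_0^p(1-x(\theta,b))(1-g(H_\theta(b)))\,db$ for $t(\theta)$, substitute the $IC_1$ envelope formula of Lemma \ref{lemma:IC1-menu}\ref{condition:IC1-necessity} for $U(\theta)$, integrate against $f$, and swap the order of the $s$- and $\theta$-integrations so that $\int_s^{\bar{\theta}}f(\theta)\,d\theta=1-F(s)$ emerges; this manufactures the hazard-rate factor $\tfrac{1-F(\theta)}{f(\theta)}g'(H_\theta(b))\,\partial_\theta H_\theta(b)$ inside the Period 1 virtual value. It then remains to handle the $x$-free residual $-U(\underline{\theta})-\int_{\underline{\theta}}^{\bar{\theta}}\int_0^p\bigl[1-g(H_\theta(b))+\tfrac{1-F(\theta)}{f(\theta)}g'(H_\theta(b))\,\partial_\theta H_\theta(b)\bigr]db\,f(\theta)\,d\theta$: an integration by parts in $\theta$, using $\partial_\theta g(H_\theta(b))=g'(H_\theta(b))\,\partial_\theta H_\theta(b)$ together with $F(\underline{\theta})=0$ and $F(\bar{\theta})=1$, yields $\int_{\underline{\theta}}^{\bar{\theta}}\bigl[1-g(H_\theta(b))+\tfrac{1-F(\theta)}{f(\theta)}g'(H_\theta(b))\,\partial_\theta H_\theta(b)\bigr]f(\theta)\,d\theta=1-g(H_{\underline{\theta}}(b))$, so the residual collapses to $-U(\underline{\theta})-\int_0^p[1-g(H_{\underline{\theta}}(b))]\,db$. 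Adding the two rewritten terms gives the claimed expression with $J(\theta,b)$ as stated.

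The main obstacle is the bookkeeping around the threshold $b=p$: one has to keep the $b>p$ block (where $x\equiv 1$, $C\equiv L(p;\theta)$) consistent with the $b\le p$ block through the Fubini and integration-by-parts steps, and verify that the $b$-dependent $x$-free terms genuinely cancel to zero rather than merely being absorbed into a constant. The $\theta$-integration by parts is the other delicate point---it is what turns the raw envelope term into the standard virtual-value form and pins the constant down as $-U(\underline{\theta})-\int_0^p[1-g(H_{\underline{\theta}}(b))]\,db$---but it is routine once the boundary terms are tracked.
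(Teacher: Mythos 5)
Your proposal is correct and follows essentially the same route as the paper's proof: eliminate $C(\theta,b)$ via the identity $x C = L - (1-x)b$ together with Lemma \ref{lemma:IC2-contract}\ref{condition:envelope-contract}, eliminate $t(\theta)$ via Yaari additivity and the $IC_1$ envelope formula, then swap integration order (or equivalently integrate by parts) to produce the hazard-rate term and identify the constant. The only difference is cosmetic bookkeeping---the paper collapses the $x$-free part of $t(\theta)$ to $-\int_0^p[1-g(H_{\underline{\theta}}(b))]\,db$ by the fundamental theorem of calculus before integrating against $f$, whereas you integrate first and then do the $\theta$-integration-by-parts; both yield the same cancellation.
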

The formal proof of Lemma~\ref{lemma:objective} is in Appendix~\ref{sec:proof-insurer-problem}.

The insurer's cost of providing any contract is derived from the fact that the portion of the service bill paid by the insurer, when facing an agent with risk type $\theta$ and valuation $b$, is $x(\theta,b)(p - D(\theta,b))$. Using the identity $x(\theta,b)D(\theta,b) = L(b;\theta) - (1 - x(\theta,b))b$ and the expression for $L(b;\theta)$ obtained in Lemma~\ref{lemma:IC2-contract}, Condition~\ref{condition:envelope-contract}, along with integration by parts, allows us to determine the insurer's cost.

Once we have an expression for the insurer's cost, the only remaining step to determine profits is to find an expression for the revenue the insurer receives from premiums. To obtain this, we can apply standard techniques. Specifically, as mentioned earlier, since Yaari utility is additive in constant random variables, $t$ can be expressed as a function of $x$, up to a constant $U(\underline{\theta})$.

To interpret the insurer's profits, recall that in screening environments with quasi-linear preferences, where a designer allocates an object or good, the classical agent's virtual value is a key concept in determining whether it is profitable to assign the good to the agent. It is defined as the marginal value from an increase in the allocation minus the information rents. Information rents, in turn, are captured by the inverse hazard rate multiplied by the derivative of the agent's marginal value with respect to her type.

In period 1, the agent’s private information, or risk type, is denoted by $\theta$, and her utility from an allocation, $x$, is given by $U(\theta)$. Based on this utility, we see that the agent values an additional unit of hospital services (if her future valuation is $b$) by $1 - g(H_{\theta}(b))$. This valuation forms the first term in $J(\theta, b)$, referred to as the \textit{period 1 virtual value}. The derivative of this marginal valuation with respect to the agent's risk type is given by:  
\[
-g'(H_{\theta}(b)) \frac{\partial H_{\theta}(b)}{\partial b}.
\]  
This term is then multiplied by the inverse hazard rate of the distribution $F$.

In period 2, the agent’s private information shifts to her valuation, denoted by $b$, and her payoff from an allocation is determined by the loss function $L(b; \theta)$. Here, the second term in $J(\theta, b)$, denoted as the \textit{period 2 virtual value}, is also analogous to the classical virtual value in screening problems but accounts for provision costs. Specifically, the agent's marginal valuation for an additional unit of services corresponds to her valuation $b$. On the other hand, the insurer’s marginal cost for providing this additional unit is $p$.

The \textit{(relaxed) first-stage insurer's problem} is given by:
\begin{align*}
    \Pi(p)\equiv&\max_{x} \quad \int_{\underline{\theta}}^{\bar{\theta}}\int_{0}^{p}x(\theta,b)J(\theta,b)h_{\theta}(b)\,dbf(\theta)\,d\theta +K(p)\tag{$P1$} \label{P1} \\
    &\text{s.t.} \quad \int_{\underline{\theta}}^{\bar{\theta}}\int_{0}^{p}x(\theta,b)h_{\theta}(b)\,dbf(\theta)\,d{\theta} = RS(p), \tag{$EQ$} \\
    &\;\;\;\;\;\;\;\; x(\theta,\cdot):B \rightarrow [0,1] \; \text{is non-decreasing for all } \theta, \tag{$MON$} \label{MON}
\end{align*}
where:
\[
K(p)=- U_{NP}(\underline{\theta})- \int_{0}^{p} [1 - g(H_{\underline{\theta}}(b))] \, db.
\]

In the formulation of~\ref{P1}, we have already used the fact that, in an optimal menu, the insurer will always set the utility of the lowest risk type equal to her outside option. We refer to~\ref{P1} as a relaxed version of the first-stage insurer's problem because it incorporates all necessary conditions that an implementable menu must satisfy. However, a solution to~\ref{P1} is not necessarily implementable. Specifically, it is possible that $x$ does not satisfy~\ref{IC1}.

The approach, therefore, is to solve~\ref{P1} and identify conditions under which the solution is implementable. For instance, by Lemma~\ref{lemma:IC1-menu}, Condition~\ref{condition:IC1-sufficiency}, we know that if the solution satisfies the property that $x(\cdot, b)$ is non-decreasing for all $b$, then $x$ is implementable.

The function $\Pi:\left[p^N,p^F\right] \rightarrow \mathbb{R}$ captures the profits obtained by the insurer from the optimal menu that sustains $p$ in equilibrium.

\subsection{Representing the Insurer's Problem in the Quantity Space}\label{sec:quantity-space}
In this section we reformulate the insurer's problem in the quantity (or quantile) space, a common technique in mechanism design (see, for instance, \cite{bulow1989simple}). For this purpose, let $H_{\theta}(\cdot \lvert p): [0, p] \rightarrow [0, 1]$ denote the distribution of valuations for risk type $\theta$, conditional on being strictly greater than $0$ and less than or equal to $p$. Specifically, 
\[
H_{\theta}(b \mid p) = \frac{H_{\theta}(b) - H_{\theta}(0)}{H_{\theta}(p) - H_{\theta}(0)},
\]

By our assumption on $H_{\theta}$, it follows that $H_{\theta}(\cdot \mid p)$ is invertible. Any non-decreasing, left-continuous function $x(\theta, \cdot): [0,p] \rightarrow [0, 1]$ can then be represented as:
\[
x(\theta, b) = \int_0^1 \mathbf{1}_{[b > b_{\theta}(q)]} \, dG_{\theta}(q),
\]
where $b_{\theta}(q) = H_{\theta}^{-1}(1 - q \mid p)$ and $G_{\theta}$ is a distribution over $[0, 1]$.

Economically, this representation allows us to express a mechanism in the quantity space. In the absence of insurance, only the agents with a valuation exceeding $p$ seek the service. As a result, the total demand for services, conditional on risk type $\theta$, is given by $1 - H_{\theta}(p)$. To induce additional demand of $[H_{\theta}(p) - H_{\theta}(0)]q$ units, the insurer can offer a \textit{simple deductible contract} of the form  $\{(1, b_{\theta}(q))\}$. Under this contract, in addition to agents with valuations greater than $p$, those with valuations in the range $\left(b_{\theta}(q), p\right]$ will also seek the service. For risk type $\theta$, the mass of agents within the interval $\left(b_{\theta}(q), p\right]$ corresponds precisely to $[H_{\theta}(p) - H_{\theta}(0)]q$. In other words, the insurer can only influence the demand of agents with valuations within $(0,p]$. Among these agents, the insurer selects a fraction $q$ who will get the service. Moreover, by appropriately randomizing over $q$ (or equivalently, over simple deductible contracts), the insurer can replicate any feasible demand schedule $x(\theta, \cdot)$. Thus, without loss of generality, the optimization can be performed over $G_{\theta}$ instead of directly over $x(\theta, \cdot)$.

Then, the insurer's problem becomes:
\begin{align*}
    &\max_{G} \quad \int_{\underline{\theta}}^{\bar{\theta}}\int_0^1 \Phi(\theta, q) \, dG_{\theta}(q) f(\theta) \, d\theta,  \tag{$P1'$} \label{P1'}\\
    &\text{s.t.} \quad \int_{\underline{\theta}}^{\bar{\theta}}\int_{0}^{1}[H_{\theta}(p) - H_{\theta}(0)]q\,dG_{\theta}(q)f(\theta)\,d{\theta} = RS(p), \tag{$EQ$}  
\end{align*}
where:
\[
\Phi(\theta, q) = [H_{\theta}(p)-H_{\theta}(0)]\int_0^q\phi(\theta,s)ds,
\]
and:
\[
\phi(\theta, q) = J(\theta,b_{\theta}(q)).
\]

Note that in formulation~\ref{P1'} we no longer include the constant $K(p)$. In the first stage, this is without loss of generality, as any function $G$ that solves~\ref{P1'} will also be a maximizer if the constant $K(p)$ is included. Therefore, to simplify notation, we omit $K(p)$ for the remainder of this section.

\subsection{Two-type Example}\label{sec:example}
In this section, we illustrate the main intuition behind Theorem~\ref{thm:optimal-menu} with an example. Specifically, we consider a setting with only two risk types, $\theta \in \{\theta_L, \theta_H\}$, where $\theta_H>\theta_L$. This simplified case already reveals the economic forces that drive the insurer to offer a \textit{limited coverage deductible contract}.

Using the representation of the insurer's problem derived in Section~\ref{sec:quantity-space}, we obtain the following expression for the insurer's problem in the quantity space:
\begin{align*}
    &\max_{G} \quad \int_0^1 \Phi(\theta_L,q) \, dG_{\theta_L}(q) \hat{f}(\theta_L)+\int_0^1 \Phi(\theta_H,q) \, dG_{\theta_H}(q) \hat{f}(\theta_H)\\
    &\text{s.t.} \quad \int_{0}^{1}q\,dG_{\theta_L}(q)\hat{f}(\theta_L)+\int_{0}^{1}q\,dG_{\theta_H}(q)\hat{f}(\theta_H) = RS(p),
\end{align*}

where 
\[
\Phi(\theta,q)=\int_0^q\phi(\theta,s)ds,
\]
\[
\phi(\theta_L,q)=b_{\theta_L}(q)-p+\frac{H_{\theta_L}(b_{\theta_L}(q))-g(H_{\theta_L}(b_{\theta_L}(q)))}{h_{\theta_L}(b_{\theta_L}(q))}+\frac{1-f(\theta_L)}{f(\theta_L)h_{\theta_L}(b_{\theta_L}(q))}g(H_{\theta_H}(b_{\theta_H}(q)))-g(H_{\theta_L}(b_{\theta_L}(q))),
\]
\[
    \phi(\theta_H,q)=b_{\theta_H}(q)-p+\frac{H_{\theta_H}(b_{\theta_H}(q))-g(H_{\theta_H}(b_{\theta_H}(q)))}{h_{\theta_H}(b_{\theta_H}(q))},
\]
\[
RS(p)=S(p)-f(\theta_L)(1-H_{\theta_L}(p))-f(\theta_H)(1-H_{\theta_H}(p)),
\]
and 
\[
\hat{f}(\theta)=\left[H_{\theta}(p)-H_{\theta}(0)\right]f(\theta).
\]

Observe that $\phi(\theta, q)$ accounts for the period 1 and period 2 virtual values adjusted to a discrete risk type space. Specifically, we account for the discrete version of information rents. These rents no longer arise from the envelope formula but instead follow from the standard result that, with two types, the~\ref{IC1} constraint for the high type and the~\ref{IR} constraint for the low type always bind.

We now analyze the insurer's problem. For any distribution $G_{\theta}$ that results in an average $q$ given by $q_{\theta}$, there exists a distribution $G'_{\theta}$ that preserves the same average $q_{\theta}$ while attaining the concave envelope of $\Phi(\theta, \cdot)$ evaluated at $q_{\theta}$. In other words, let $\overline{\Phi}(\theta, \cdot)$ denote the concave envelope of $\Phi(\theta, \cdot)$. Then, we have:

\[
\int_0^1q\,dG'_{\theta}(q)=q_{\theta}=\int_0^1q\,dG_{\theta}(q), 
\]
and
\[
\int_0^1\Phi(\theta,q)\,dG'_{\theta}(q)=\overline{\Phi}(\theta,q_{\theta})\geq\int_0^1\Phi(\theta,q)\,dG_{\theta}(q).
\]

$G'$ weakly increases the insurer's payoff while still satisfying the~\ref{EQ} condition. Thus, we can also express the insurer's problem as:
\begin{align*}
    &\max_{q} \quad \overline{\Phi}(\theta_L,q_{\theta_L})\hat{f}(\theta_L) + \overline{\Phi}(\theta_H,q_{\theta_H})\hat{f}(\theta_H) \\
    &\text{s.t.} \quad q_{\theta_L}\hat{f}(\theta_L)+q_{\theta_H}\hat{f}(\theta_H) = RS(p),
\end{align*}
In the above formulation, for each risk type, the insurer selects the fraction of agents with a valuation in $(0,p]$ who seek the service, denoted by $q_{\theta}$. This formulation already demonstrates why a solution exists in which the contract offered to each risk type is either a simple deductible contract or a limited coverage deductible contract. Indeed, any point in $\overline{\Phi}(\theta, \cdot)$ can be attained by a distribution $G_{\theta}$ with at most two elements in its support. This, in turn, results in an allocation function $x(\theta, \cdot)$ with at most two jumps, as illustrated in Figure~\ref{fig:allocation}. 

Moreover, if $\Phi(\theta,\cdot)$ is concave, or equivalently, $\frac{\partial \phi(\theta,q)}{\partial q} \leq 0$ for all $q$, then the insurer can always achieve $\overline{\Phi}(\theta, q_{\theta})$ with a distribution that puts probability one on $q_{\theta}$. That is, the insurer can always offer risk type $\theta$ the simple deductible contract $\{(1,b_{\theta}(q_{\theta}))\}$. This observation is formalized in Lemma~\ref{lemma:simple-contract}.

To illustrate the economic forces that lead the insurer to include a limited coverage deductible contract, we assume $\theta_L = 0.25$, $\theta_H = 0.35$, $f(\theta_H) = \frac{1}{5}$, and $H_{\theta}(b) = 1 - \theta + \theta Q(b)$, where $Q(b) = b^2 \cdot \mathbf{1}_{\left[0 \leq b \leq 1\right]}$. In this case, we have $q\Phi(\theta,1) > \Phi(\theta,q)$ for all $q \in (0,1)$. This implies that the concave envelope of $\Phi(\theta, \cdot)$ is the line segment connecting $\Phi(\theta,0)$ and $\Phi(\theta,1) $, i.e., $\overline{\Phi}(\theta,q) = q\Phi(\theta,1)$, as shown in Figure~\ref{fig:concaveenvelope}.
\begin{figure}[H]
    \centering
    \includegraphics[width=0.5\linewidth]{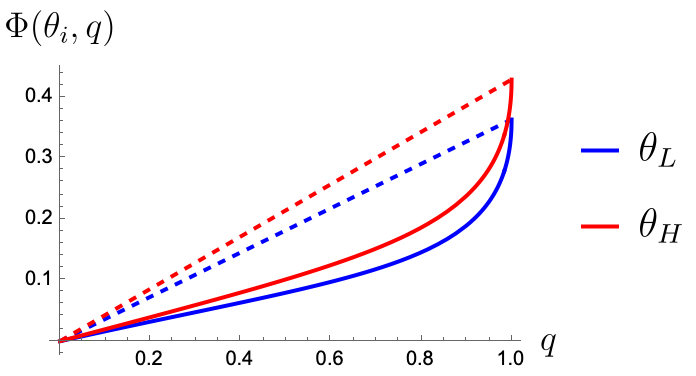}
    \caption{The continuous lines illustrate $\Phi(\theta_L,q)$ (in blue) and $\Phi(\theta_H,q)$ (in red). The dashed lines are the respective concave envelopes. The figure is drawn assuming that $p=0.8$.}
    \label{fig:concaveenvelope}
\end{figure}

Under these parametric assumptions, the insurer solves:  
\begin{align*}
    &\max_{q} \quad q_{\theta_L}\Phi(\theta_L,1)\hat{f}(\theta_L) + q_{\theta_H}\Phi(\theta_H,1)\hat{f}(\theta_H) \\
    &\text{s.t.} \quad q_{\theta_L}\hat{f}(\theta_L) + q_{\theta_H}\hat{f}(\theta_H) = RS(p).
\end{align*}

The solution thus depends solely on the relationship between $\Phi(\theta_L,1)$ and $\Phi(\theta_H,1)$. In particular, we can verify that $\Phi(\theta_H,1) > \Phi(\theta_L,1)$. Therefore, the optimal allocation is:  
\[
    q^*_{\theta_H}=\min\left\{1,\frac{RS(p)}{\hat{f}(\theta_H)}\right\},\quad 
    q^*_{\theta_L}=\frac{RS(p)}{\hat{f}(\theta_L)}-\min\left\{1,\frac{RS(p)}{\hat{f}(\theta_H)}\right\} \cdot \frac{\hat{f}(\theta_H)}{\hat{f}(\theta_L)}.
\]
In other words, the insurer favors the riskier type, allocating as many service units as possible to $\theta_H$ first. If the maximum insurance-driven demand, $\hat{f}(\theta_H)$, is sufficient to meet the residual supply, then $q^*_{\theta_H} = RS(p)/\hat{f}(\theta_H)$ and $q^*_{\theta_L} = 0$. Otherwise, we set $q^*_{\theta_H} = 1$ and further increase demand by including some low-risk types until the market clears.  

Furthermore, if $q^*_{\theta} \in (0,1)$, the contract for risk type $\theta$ is a limited coverage deductible contract. To achieve such $q^*_{\theta}$, the insurer mixes between the two extremes of $q = 0$ and $q = 1$, corresponding to an OPC of $p$ and $0$, respectively. For instance, if $p = 0.8$, we obtain $q^*_{\theta_L} = 0.67$ and $q^*_{\theta_H} = 1$. In this case, the high-risk type receives a simple deductible contract with a OPC equal to zero, while the low-risk type faces a limited coverage deductible contract: she can access up to 67\% of the service with a zero OPC or receive full treatment for higher OPC equal to 0.26.

Additionally, since the insurer always favors the riskier type, the optimal menu satisfies~\ref{IC1}. Indeed, in this case, it is easy to verify that whenever the high-risk type receives a limited coverage deductible contract, the low-risk type receives no insurance. Moreover, whenever the low-risk type receives some form of insurance, the high-risk type gets full insurance---access to the full unit of the service with a zero OPC. Thus, the optimal allocation of hospital services is increasing in the risk type. That is, $x^*(\theta_H,b) \geq x^*(\theta_L,b)$ for all $b$. Therefore, by Lemma~\ref{lemma:IC1-menu}, Condition~\ref{condition:IC1-sufficiency}, we conclude that $x^*$ satisfies~\ref{IC1}. 

The economic forces that lead the insurer to favor the riskier type follow from the fact that the profits of providing any units of hospital services are higher for the high-risk type than for the low-risk type, as illustrated in Figure~\ref{fig:concaveenvelope}. The latter occurs because $\phi(\theta_H,q) > \phi(\theta_L,q)$, i.e., the sum of the period 1 and period 2 virtual values is higher for the high-risk type than for the low-risk type for any $q$. This observation is formalized in Lemma~\ref{lemma:implementable-menu}.

\subsection{Optimal Transport and Duality}\label{sec:optimal-transport}  
We now return to the analysis of the insurer's problem in the quantity space, focusing on the infinite-dimensional risk-type space. Consider a collection of distributions over $[0,1]$, denoted by $\{G_{\theta}\}_{\theta \in \Theta}$, which the insurer selects in problem~\ref{P1'}. Additionally, recall that $F$ denotes the prior distribution over risk types. Together, these define a probability measure $\pi$ on $\Theta \times [0,1]$, where the marginal of $\pi$ on $\Theta$ is $F$. We can then reformulate problem~\ref{P1'} as follows:  

\begin{align*}
    &\max_{\pi} \quad \int_{\underline{\theta}}^{\bar{\theta}}\int_0^{1}\Phi(\theta,q)\,d\pi(\theta,q) \tag{$OT$} \label{OT} \\
    &\text{s.t.} \quad \int_0^{\theta}\int_0^1\,d\pi(s,q)=F(\theta), \quad \forall \theta \in \Theta, \tag{$BP$} \label{BP} \\
    &\;\;\;\;\;\;\;\int_{\underline{\theta}}^{\bar{\theta}}\int_0^{1}[H_{\theta}(p) - H_{\theta}(0)]q\,d\pi(\theta,q)=RS(p). \tag{$EQ$} 
\end{align*}  

In the \textit{primal problem}~\ref{OT}, the insurer is effectively matching risk types to service demands. If risk type $\theta$ is matched with $q$, then among the agents that the insurer can influence to seek the service---those with a valuation in the range $(0,p]$---a fraction $q$ will get the service. However, this allocation is subject to two constraints. The first constraint,~\ref{BP}, ensures that the insurer does not assign more individuals of risk type $\theta$ than the total available, as determined by the prior distribution $F$. In other words, the marginal of $\pi$ on $\Theta$ must coincide with $F$, mirroring a Bayes-plausibility constraint in Bayesian persuasion models. The second constraint,~\ref{EQ}, is the equilibrium condition described earlier.  

The similarity between~\ref{OT} and an optimal transport problem is advantageous because it allows us to leverage known duality results to analyze the insurer's optimization problem more effectively. To this end, we introduce the corresponding \textit{dual problem}:  

\begin{align*}
    &\min_{v,\lambda} \quad \int_{\underline{\theta}}^{\bar{\theta}}v(\theta)f(\theta)d\theta  \tag{$D$} \label{D} \\
    &\text{s.t.} \quad v(\theta) \geq \max_{q \in [0,1]} \;\Phi(\theta,q)+\lambda\left([H_{\theta}(p)-H_{\theta}(0)]q-RS(p)\right), \quad \text{for $f$-almost all $\theta$}. \tag{$SP$} \label{SP}
\end{align*}  

In this dual formulation, $v(\theta)$ represents the shadow price of matching risk type $\theta$. The multiplier $\lambda$ corresponds to the value of relaxing the equilibrium constraint~\ref{EQ}. Furthermore, constraint~\ref{SP} ensures that $v(\theta)$ is at least as large as the insurer’s valuation of matching risk type $\theta$ with any $q$, where this valuation consists of the insurer’s profit, $\Phi(\theta,q)$, plus the term multiplied by $\lambda$, which captures the excess demand or supply generated by risk type $\theta$ when assigned to $q$.  

Then, we have the following duality result: 

\begin{theorem}[Monge-Kantorovich Duality]\label{thm:zero-duality}
    A solution tho both the primal problem~\ref{OT} and the dual problem~\ref{D} exists. Furthermore, there is no duality gap: the optimal value of the primal problem~\ref{OT} equals the optimal value of the dual problem~\ref{D}.
\end{theorem}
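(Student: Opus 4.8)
\textbf{Proof proposal for Theorem \ref{thm:zero-duality} (Monge--Kantorovich Duality).}

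The plan is to recognize \ref{OT} as a capacitated optimal transport problem with one extra linear (scalar) constraint and to invoke the classical Monge--Kantorovich duality theorem, adapting it to accommodate the additional \ref{EQ} constraint. First I would set up the right topological framework: equip $\Theta \times [0,1]$ with its usual (compact metric) topology, and work with the space $\mathcal{M}_+(\Theta\times[0,1])$ of nonnegative finite Borel measures, endowed with the weak-$*$ topology. The feasible set for \ref{OT} is the set of $\pi$ whose $\Theta$-marginal equals $F$ and which satisfy the scalar equality $\int [H_\theta(p)-H_\theta(0)]\,q\,d\pi = RS(p)$. I would first verify this feasible set is nonempty (this is exactly the content of Lemma \ref{lemma:feasible-prices}, which guarantees that for $p\in[p^N,p^F]$ some admissible allocation exists — translate that $x$ into a $\pi$ via the quantile representation of Step 1), convex, and weak-$*$ compact. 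Compactness follows because the marginal constraint pins total mass to $1$, so the set is uniformly bounded and tight; the marginal and the scalar constraint are both defined by integration against bounded continuous functions (here I should check $\Phi$ and the weight $[H_\theta(p)-H_\theta(0)]q$ are continuous in $(\theta,q)$ — this uses continuity of $H_\theta$ in $\theta$, differentiability in $b$, and that $b_\theta(q)$ inherits continuity from invertibility of $H_\theta(\cdot\mid p)$), hence weak-$*$ closed. Upper semicontinuity of the objective $\pi\mapsto\int\Phi\,d\pi$ then gives existence of a primal optimizer by Weierstrass.

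For the dual side and the absence of a duality gap, I would follow the standard Kantorovich duality argument but carry the extra multiplier $\lambda\in\mathbb{R}$ for the scalar constraint alongside the function $v:\Theta\to\mathbb{R}$ dual to the marginal constraint. The cleanest route is Rockafellar-type Fenchel--Rockafellar duality: write \ref{OT} as $\sup_\pi \{\langle \Phi,\pi\rangle : A\pi = (F, RS(p))\}$ where $A$ is the continuous linear map sending $\pi$ to its $\Theta$-marginal together with the scalar moment, then the dual is $\inf_{(v,\lambda)} \{\langle F,v\rangle : \Phi(\theta,q) - v(\theta) - \lambda([H_\theta(p)-H_\theta(0)]q - RS(p)) \le 0 \ \forall (\theta,q)\}$, which is exactly \ref{D} with constraint \ref{SP} after taking the pointwise-in-$\theta$ supremum over $q$. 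Strong duality (no gap) and dual attainment follow from a constraint-qualification / closedness condition: since the primal feasible set is nonempty and compact and $\Phi$ is bounded above, the value is finite; and because the constraint map $A$ has closed range (the marginal map onto a fixed marginal plus one scalar moment is a standard example where the relevant convex set $\{A\pi : \pi\ge 0, \langle\mathbf 1,\pi\rangle \le M\}$ is closed), the Fenchel--Rockafellar theorem applies. Alternatively, one can quote the generalized Kantorovich duality for transport with moment constraints (e.g. the formulation in Villani's book, or Kellerer's theorem) directly, noting our problem is a transport problem between $F$ on $\Theta$ and a free marginal on $[0,1]$, further constrained by one moment — this is a well-known admissible generalization.

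The main obstacle I anticipate is not the transport machinery per se but making the extra scalar constraint \ref{EQ} interact cleanly with duality: one must ensure the qualification condition holds so that $\lambda^*$ is actually attained (a finite real multiplier, not merely an asymptotic one) and that no gap opens up. Here Lemma \ref{lemma:feasible-prices} does real work — it tells us $RS(p)$ lies strictly in the interior of the range of achievable demand levels whenever $p\in(p^N,p^F)$ (and on the boundary at the endpoints), which is precisely the Slater-type condition guaranteeing existence of a bounded multiplier for the equality constraint. At the endpoints $p\in\{p^N,p^F\}$ the feasible set may be a singleton and a small separate argument is needed, but the statement still holds since the dual value trivially matches. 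A secondary technical point is confirming $\Phi$ is continuous (or at least upper semicontinuous and bounded) on $\Theta\times[0,1]$ given the primitives — I would devote one lemma to checking that $b_\theta(q)$, $h_\theta(b_\theta(q))$, and the hazard-rate terms appearing in $\phi$ are well-behaved, possibly invoking the standing assumptions that $h_\theta>0$, $H_\theta$ is $C^1$ in $\theta$, and $f>0$. With these in hand, the theorem reduces to a citation of generalized Monge--Kantorovich duality plus verification of its hypotheses.
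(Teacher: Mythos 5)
Your proposal is broadly correct and reaches the same three sub-goals the paper establishes — primal existence, dual existence, and no duality gap — but by a genuinely different technical route. The paper splits the work into three lemmas: Lemma \ref{lemma:zero-duality} obtains the gap-free equality by a direct $\sup\inf=\inf\sup$ argument (citing the minimax interchange as in Villani's Theorem 5.10), Lemma \ref{lemma:existence-primal} obtains primal attainment by Prokhorov's theorem plus a monotone-approximation upper-semicontinuity argument (essentially what you call tightness and Weierstrass), and Lemma \ref{lemma:existence-dual} obtains the multiplier $\lambda^*$ by a bare-hands separating-hyperplane argument in $\mathbb{R}^2$ on the sets $A$ and $B$, then reads off $v^*(\theta)$ from the primal optimizer. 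You instead propose packaging the gap-free equality and dual attainment into a single invocation of Fenchel--Rockafellar (or generalized Kantorovich duality with a moment constraint), with Lemma \ref{lemma:feasible-prices} supplying the Slater-type constraint qualification for the scalar constraint. That is a legitimate and arguably more economical route; what it buys is a shorter proof that reuses standard machinery, at the cost of having to verify the precise constraint-qualification/closed-range hypothesis of whichever duality theorem you cite (the paper avoids this by constructing $\lambda^*$ explicitly). Both approaches need the same analytic groundwork on $\Phi$ — continuity/boundedness on $\Theta\times[0,1]$ — which you correctly flag as a point to check.

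One substantive observation: you rightly single out the endpoints $p\in\{p^N,p^F\}$ as requiring a separate look. The paper's Case 2 argument in the proof of Lemma \ref{lemma:existence-dual} asserts that the constraint functional $g(\pi)$ takes both signs on $\mathcal{M}^+$ for all $p$, but at $p=p^N$ we have $RS(p^N)=0$ so $g(\pi)\ge 0$ for every nonnegative $\pi$, and symmetrically at $p=p^F$; the separating-hyperplane step as written does not directly rule out $r_0=0$ there. Your instinct — that at the endpoints the feasible set is essentially a singleton and the duality statement holds trivially (or by a one-sided argument) — is the right fix and is a gap in the paper's own exposition that your proposal handles more carefully.
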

The formal proof of Theorem~\ref{thm:zero-duality} can be found in the technical  Appendix~\ref{sec:proof-technical}.

The absence of a duality gap is a direct consequence of the Monge-Kantorovich duality in optimal transport problems (see, for example, \citealp{villani2009optimal}). The existence of a solution to the primal problem~\ref{OT} follows from our technical assumptions that guarantee the continuity of the function $\Phi$ and the fact that the maximization is performed over a compact space. Finally, leveraging the absence of a duality gap and the existence of a solution to the primal problem, we explicitly construct a solution to the dual problem~\ref{D}.

\paragraph{Implications of Zero Duality Gap.} 
An immediate consequence of the absence of a duality gap is that, given a solution to the dual problem~\ref{D}, we obtain necessary and sufficient conditions that any joint distribution $\pi$ must satisfy to be a solution of the primal problem~\ref{OT}. To establish these conditions, consider any $\pi$ that satisfies~\ref{BP}. Let $\pi(\cdot \mid \theta)$ denote the distribution over $[0,1]$ conditional on risk type $\theta$, defined as:  
\[
\pi(q \mid \theta) = \frac{\pi(q, \theta)}{f(\theta)}.
\]  

With this, we obtain the following complementary slackness result:
\begin{corollary}[Complementary Slackness]\label{coro:complementary-slackness}
    Suppose that $(v^*,\lambda^*)$ is a solution to the dual problem~\ref{D}. Then, any $\pi$ that satisfies conditions~\ref{BP} and~\ref{EQ} is a solution to the primal problem~\ref{OT} if and only if $f$-almost every $\theta$ has $\pi(\cdot \mid \theta)$ supported on:  
    \[
   \arg\max_{q \in [0,1]} \quad \Phi(\theta,q)+\lambda^*\left([H_{\theta}(p)-H_{\theta}(0)]q-RS(p)\right) \label{supp} \tag{$Supp$}.
    \]
\end{corollary}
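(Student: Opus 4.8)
\textbf{Proof proposal for Corollary \ref{coro:complementary-slackness}.}

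The plan is to exploit the zero duality gap from Theorem \ref{thm:zero-duality} as a certificate of optimality, in the standard way that complementary slackness is extracted from strong duality in a linear program over measures. First I would fix a solution $(v^*,\lambda^*)$ of the dual problem \ref{D} and let $V^*$ denote the common optimal value of \ref{OT} and \ref{D}, so $V^* = \int_{\underline{\theta}}^{\bar\theta} v^*(\theta) f(\theta)\, d\theta$. The key pointwise object is the function
\[
\Psi(\theta,q) \equiv \Phi(\theta,q) + \lambda^*\bigl([H_\theta(p) - H_\theta(0)]q - RS(p)\bigr),
\]
and the dual feasibility constraint \ref{SP} says exactly that $v^*(\theta) \geq \max_{q\in[0,1]} \Psi(\theta,q)$ for $f$-almost every $\theta$.

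Next I would take any $\pi$ satisfying \ref{BP} and \ref{EQ} and compute its primal objective value. The trick is to add the \ref{EQ} constraint into the objective ``for free'': since $\int_{\underline\theta}^{\bar\theta}\int_0^1 [H_\theta(p)-H_\theta(0)]q\, d\pi(\theta,q) = RS(p)$ and the marginal of $\pi$ on $\Theta$ is $F$, we have
\[
\int_{\underline{\theta}}^{\bar{\theta}}\int_0^1 \Phi(\theta,q)\, d\pi(\theta,q) = \int_{\underline{\theta}}^{\bar{\theta}}\int_0^1 \Psi(\theta,q)\, d\pi(\theta,q).
\]
Now disintegrate $\pi$ as $d\pi(\theta,q) = d\pi(q\mid\theta)\, f(\theta)\, d\theta$ (valid because the $\Theta$-marginal is $F$ with density $f$), and bound the inner integral pointwise: for $f$-a.e.\ $\theta$, $\int_0^1 \Psi(\theta,q)\, d\pi(q\mid\theta) \leq \max_{q\in[0,1]}\Psi(\theta,q) \leq v^*(\theta)$. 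Integrating in $\theta$ against $f$ gives that the primal value of $\pi$ is at most $\int v^*(\theta) f(\theta)\, d\theta = V^*$. Since $V^*$ is the optimal primal value, $\pi$ is optimal if and only if this chain of inequalities holds with equality.

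Finally I would unpack when equality holds. Equality in the integrated bound forces $\int_0^1 \Psi(\theta,q)\, d\pi(q\mid\theta) = \max_{q\in[0,1]}\Psi(\theta,q)$ for $f$-a.e.\ $\theta$ (using that the integrand difference $\max_q\Psi(\theta,q) - \int_0^1\Psi(\theta,q)\,d\pi(q\mid\theta)$ is nonnegative, so its integral against $f$ vanishes iff it vanishes $f$-a.e.; one also needs the measurability of $\theta\mapsto\max_q\Psi(\theta,q)$, which follows from continuity of $\Phi$ established in the proof of Theorem \ref{thm:zero-duality}, and the fact that $v^*$ may be replaced by its pointwise-minimal version $\max_q\Psi(\theta,q)$ without changing the dual value). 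For a fixed such $\theta$, a probability measure $\pi(\cdot\mid\theta)$ on the compact set $[0,1]$ integrates the continuous function $\Psi(\theta,\cdot)$ up to its maximum if and only if it is supported on the argmax set $\arg\max_{q\in[0,1]}\Psi(\theta,q)$, which is precisely \ref{supp}. This establishes both directions. The main obstacle is the measure-theoretic bookkeeping: justifying the disintegration of $\pi$ along the $F$-marginal, the measurable-selection/measurability of the pointwise maximum, and the ``a.e.'' quantifiers — but none of this is deep given the continuity of $\Phi$ and compactness of $[0,1]$ already in hand, so the argument is essentially the textbook LP complementary-slackness argument transported to this continuum setting.
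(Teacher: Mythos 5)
Your proposal is correct and follows essentially the same route as the paper's proof: both use the zero duality gap from Theorem \ref{thm:zero-duality}, the observation that the $\lambda^*$-term integrates to zero under \ref{EQ} (so the primal objective of $\pi$ equals $\int\int\Psi\,d\pi$), the fact that optimality of $(v^*,\lambda^*)$ forces $v^*(\theta)=\max_q\Psi(\theta,q)$ for $f$-a.e.\ $\theta$, and the pointwise bound $\int_0^1\Psi(\theta,q)\,d\pi(q\mid\theta)\le\max_q\Psi(\theta,q)$ with equality iff the conditional is supported on the argmax set. The paper merely packages this as separate necessity and sufficiency halves rather than as a single chain of inequalities.
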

The formal proof of Corollary~\ref{coro:complementary-slackness} is in the technical Appendix~\ref{sec:proof-technical}.

The proof of Corollary~\ref{coro:complementary-slackness} follows directly from the absence of a duality gap and the fact that if $(v^*,\lambda^*)$ is a solution to~\ref{D}, then we have  
\[
v^*(\theta) = \max_{q \in [0,1]} \quad \Phi(\theta,q) + \lambda^* \left( [H_{\theta}(p) - H_{\theta}(0)]q - RS(p) \right),
\]  
where the maximum is well-defined due to the continuity of $\Phi(\theta, \cdot)$ and the compactness of the set $[0,1]$.  

Corollary~\ref{coro:complementary-slackness} is particularly useful for proving Theorem~\ref{thm:optimal-menu}. To this end, we introduce the following lemma, from which Theorem~\ref{thm:optimal-menu} follows immediately:

\begin{lemma}[Solution to \ref{OT}]\label{lemma:optimal-menu}
    There exists a solution $\pi^*$ to the primal problem~\ref{OT} such that, for all $\theta$ and $q$, $\pi^*(q \mid \theta) \in \{0,\alpha,1\}$ for some $\alpha \in (0,1)$. Moreover, this $\pi^*$ implies that each risk type $\theta$ is assigned either a \textbf{simple deductible contract} $\{(1, D(\theta))\}$ with $D(\theta) \in [0,p]$, or a \textbf{limited-coverage deductible contract} $\{(\alpha,D(\theta)),(1, M(\theta))\}$ where $0 \leq D(\theta) < M(\theta) \leq p$ and $\alpha \in (0,1)$.
\end{lemma}
The proof of Lemma~\ref{lemma:optimal-menu} in Appendix~\ref{sec:proof-optimal-transport}.

Theorem~\ref{thm:optimal-menu} follows directly from Lemma~\ref{lemma:optimal-menu}. Once we know that $\pi^*$ is a solution to the primal problem~\ref{OT}, it immediately implies that the following function $x^*$ solves the insurer's problem~\ref{P1}:
\[
x^*(\theta,b) = \int_0^1 \mathbf{1}_{\left[b > b_{\theta}(q)\right]} \, d\pi^*(q \mid \theta).
\]

Since $\pi^*(\cdot \mid \theta)$ consists of at most two jumps, and the size of these jumps is independent of the risk type, the allocation function $x^*(\theta,\cdot)$ can take one of two forms for any risk type $\theta$. It may be a step function that jumps from $0$ to $1$ at some value of $b$ given by $D(\theta)$, which corresponds to the \textit{simple deductible contract}. Under this contract, all agents with valuation above $D(\theta)$ pay the OPC. Alternatively, $x^*(\theta,\cdot)$ may exhibit two jumps, as shown in Figure~\ref{fig:allocation}, in which case the contract for type $\theta$ is a \textit{limited-coverage deductible contract}. Specifically:
\begin{itemize}
    \item $x^*(\theta,b)=0$ for valuations $b \leq D(\theta)$---these agents forgo treatment and stay at home.
    \item $x^*(\theta,b)=\alpha$ for valuations $b \in \left(D(\theta), \frac{M(\theta) - \alpha D(\theta)}{1 - \alpha}\right]$---these agents pay the OPC and receive partial coverage.
    \item $x^*(\theta,b)=1$ for valuations $b > \frac{M(\theta) - \alpha D(\theta)}{1 - \alpha}$---these agents pay the higher OPC and receive full treatment.
\end{itemize}

\begin{figure}[H]
    \centering
    \includegraphics[width=0.5\linewidth]{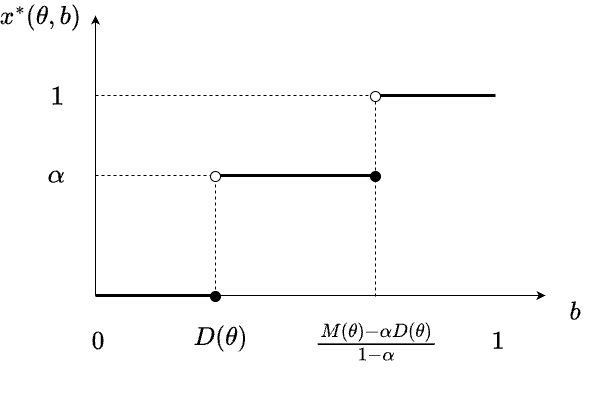}
    \caption{Optimal allocation $x^*$ corresponding to the menu described in Theorem \ref{thm:optimal-menu}.}
    \label{fig:allocation}
\end{figure}

The intuition behind the proof of Lemma~\ref{lemma:optimal-menu} is as follows: By Theorem~\ref{thm:zero-duality} and Corollary~\ref{coro:complementary-slackness}, we know that there exists a $\pi'$ that solves~\ref{OT}, a $(v^*, \lambda^*)$ that solves~\ref{D}, and that $\pi'(\cdot \mid \theta)$ is supported on~\ref{supp}.

By Carathéodory's theorem, we can construct a new distribution $\pi^*$ such that $\supp \pi^*(\cdot \mid \theta)$ contains at most two elements, is a subset of $\supp \pi'(\cdot \mid \theta)$, and results in the same average quantity of services for type $\theta$. This implies that $\pi^*$ still satisfies the~\ref{EQ} constraint. Furthermore, since $\supp \pi^*(\cdot \mid \theta) \subset \supp \pi'(\cdot \mid \theta)$, we can again apply Corollary~\ref{coro:complementary-slackness} to conclude that $\pi'$ is optimal.

Next, we examine the primitive conditions under which the insurer optimally offers a simple deductible contract:

\begin{lemma}[Simple Deductible Contracts]\label{lemma:simple-contract}
    Fix any risk type $\theta$ and suppose that $\frac{\partial \phi(\theta,q)}{\partial q} \leq 0$ for all $q$. Then there exists a solution $\pi^*$ to the primal problem~\ref{OT} such that $\pi^*(q \mid \theta) \in \{0,1\}$ for all $q$. Under this solution, all agents are assigned a \textbf{simple deductible contract}.
\end{lemma}
The proof of Lemma~\ref{lemma:simple-contract} is provided in Appendix~\ref{sec:proof-optimal-transport}.  

Lemma~\ref{lemma:simple-contract} once again relies on our complementary slackness result. To illustrate, suppose that $\frac{\partial \phi(\theta,q)}{\partial q} > 0$ holds with strict inequality. Then, the objective in~\ref{supp} is strictly concave in $q$, implying that the set~\ref{supp} contains only one element. Moreover, by our complementary slackness result, we know that $\pi^*(\cdot \mid \theta)$ is supported on~\ref{supp}, leading to the desired conclusion.  

Finally, we check when a solution to~\ref{OT} is fully implementable. That is, we have to verify that the solution $x^*$ to~\ref{P1} obtained from the solution $\pi^*$ described in Lemma~\ref{lemma:optimal-menu} satisfies~\ref{IC1}. 

\begin{lemma}\label{lemma:implementable-menu}
    Suppose that $\frac{\partial \phi(\theta,q)}{\partial \theta} > 0$ for all $q$, and let $\pi^*$ be a solution to the primal problem. Then, for any $\theta > \theta'$,  
    \[
        \min \supp \pi^*(\cdot \mid \theta) \;\geq\; \max \supp \pi^*(\cdot \mid \theta').
    \]
   Moreover, under $\pi^*$ higher risk types receive greater coverage at every loss-related valuation: $\pi^*$ induces an allocation function $x^*$ such that $x^*(\cdot,b)$ is non-decreasing for all $b$, and hence $x^*$ satisfies~\ref{IC1}.
\end{lemma}
The proof of Lemma~\ref{lemma:implementable-menu} is provided in Appendix~\ref{sec:proof-optimal-transport}.  
Lemma~\ref{lemma:implementable-menu} implies that the allocation $x^*$ obtained through $\pi^*$, which solves the insurer's problem~\ref{P1}, satisfies the monotonicity condition: $x^*(\cdot, b)$ is non-decreasing for all $b$. Therefore, by Lemma~\ref{lemma:IC1-menu} and Condition~\ref{condition:IC1-sufficiency}, $x^*$ satisfies~\ref{IC1}.  

The intuition behind Lemma~\ref{lemma:implementable-menu} is straightforward. By Corollary~\ref{coro:complementary-slackness}, we know that $\pi^*(\cdot \mid \theta)$ is supported on~\ref{supp}. Then, applying Topkis's monotonicity theorem, we conclude that the objective in~\ref{supp} strictly increases in $\theta$ in the strong set order, yielding the desired result.

\section{Second-Stage: Price of Services}\label{sec:second-stage}
Throughout section~\ref{sec:first-stage}, we took the price of services, $p$, as given and characterized the shape of the optimal menu that sustains $p$ in equilibrium. This allowed us to associate each achievable $p$ with the corresponding profits obtained by the insurer, denoted by $\Pi(p)$. 

We now proceed to the second-stage analysis, which consists of solving for the price that maximizes $\Pi$. Formally, the second-stage problem is:

\begin{equation}
 \max_{p \in \left[p^N, p^F\right]} \, \Pi(p). \tag{$P2$} \label{P2}
\end{equation}

To analyze problem~\ref{P2}, we introduce the following result, which characterizes the derivative of $\Pi$:

\begin{theorem}\label{thm:envelope}
   Let $x^*$ be a solution to the first-stage problem~\ref{P1} and $\lambda^*$ be a solution to the dual problem~\ref{D}. Then, the function $\Pi: \left[p^N, p^F\right] \rightarrow \mathbb{R}_+$ is differentiable in $p$, and its derivative is given by:
    \begin{align*} \Pi'(p)=&\int_{\underline{\theta}}^{\overline{\theta}}x^*(\theta,p)J(\theta,p)h_{\theta}(p)f(\theta)\,d\theta- \int_{\underline{\theta}}^{\bar{\theta}}\int_0^px^*(\theta,b)h_{\theta}(b)\,dbf(\theta)\,d\theta\\
    &-\lambda^* \left( S'(p) + \int_{\underline{\theta}}^{\overline{\theta}}(1-x^*(\theta,p))h_\theta(p)f(\theta)d\theta\right)-[g(H_{\underline{\theta}}(p+\beta))-g(H_{\underline{\theta}}(p))] 
    \end{align*}
\end{theorem}
The formal proof of Theorem~\ref{thm:envelope} is in Appendix~\ref{sec:proof-second-stage}. We invoke Theorem 4 of \cite{milgrom2002envelope}, which provides an envelope formula for saddle points. In particular, our duality result (Theorem \ref{thm:zero-duality}) implies that a solution $\pi^*$ to the primal problem~\ref{OT}, together with the multiplier $\lambda^*$ that solves the dual problem~\ref{D}, form a saddle point. Under our technical assumptions, this allows us to apply Theorem 4 of \citet{milgrom2002envelope} to obtain a formula for the derivative of the insurer’s profit in terms of $\pi^*$ and $\lambda^*$. Finally, using the relationship between $\pi^*$ and $x^*$, we derive the desired result. We postpone the economic interpretation of $\Pi'$ in the case where the insurer offers a simple deductible contract to all risk types, as this setting allows for a clearer explanation.

Since $\Pi$ is differentiable, it is also continuous. Moreover, because problem~\ref{P2} involves maximizing $\Pi$ over a compact domain, the \textit{Extreme Value Theorem} ensures that a maximizer exists. Whenever the maximizer lies in the interior of the domain, it must satisfy the first-order condition:

\begin{corollary}\label{coro:maximum-price}
A solution $p^*$ to second-stage problem~\ref{P2} exists. Moreover, if $p^* \in \left(p^N, p^F\right)$, then it satisfies $\Pi'(p^*)=0$.
\end{corollary}

Before turning to the case where the insurer offers only simple deductible contracts, we show that as the premium $\beta$ decreases, the equilibrium service price is weakly higher.


\begin{corollary}\label{coro:endo-exo}
    Suppose that $\beta > \beta'$. Then, there does not exist a price $p$ that solves the second-stage problem~\ref{P2} under $\beta$ and is strictly greater than all prices that solve the second-stage problem~\ref{P2} under $\beta'$.
\end{corollary}
The formal proof of Corollary~\ref{coro:endo-exo} is provided in Appendix~\ref{sec:proof-second-stage}. The result is stated to accommodate the possibility of multiple maximizers. When the maximizer of~\ref{P2} is unique under both $\beta$ and $\beta'$, the corollary simply states that the price solving~\ref{P2} under $\beta$ is lower than the one under $\beta'$.


The mathematical intuition is straightforward: the derivative of the insurer’s profit under $\beta$ is always weakly lower than under $\beta'$. The economic intuition is similarly simple: when the premium $\beta$ is very high, an increase in $p$ has little effect on the agent’s outside option. Consider the extreme case in which $\beta$ is high enough that $\beta + p \ge \bar{b}$; in this case the agent will never seek the service in the absence of insurance, so further increases in $p$ do not change this fact, and her outside option---and thus her willingness to pay for insurance---remains the same. Then, any incentive to increase $p$ in order to worsen the agent’s outside option, raise her willingness to pay for insurance, and extract more surplus is shut down.


\paragraph{Simple Deductible Contracts.} Throughout this section, we assume that for all $p$ and $\theta$, the function $\phi(\theta,\cdot)$ is strictly decreasing. Recall that $\phi(\theta,q) = J(\theta, b_{\theta}(q))$, where $b_{\theta}(q) = H_{\theta}^{-1}(1 - q \mid p)$, and $H_{\theta}(\cdot \mid p)$ denotes the distribution of valuations for risk type $\theta$, conditional on being in $(0, p]$. It is easy to verify that $b_{\theta}(\cdot)$ is strictly decreasing. Therefore, the assumption is equivalent to requiring that the virtual value function $J(\theta, \cdot)$ is strictly increasing for all $\theta$.

Then, Lemma~\ref{lemma:simple-contract} implies that for all $p$, it is optimal for the insurer to offer each risk type a simple deductible contract, denoted by $\{(1, D(\theta; p))\}$, where $D(\theta; p) \in [0, p]$. With a slight abuse of notation, we now make explicit the dependence of the contract on the price $p$. Similarly, we use the notation $\lambda^*(p)$ to reflect the dependence of the multiplier solving the dual problem~\ref{D} on $p$. Applying Theorem~\ref{thm:envelope}, we obtain the following result:

\begin{corollary}\label{coro:envelope-simple}
Suppose that $J(\theta,\cdot)$ is strictly increasing for all $\theta$. Then, the function $\Pi: \left[p^N, p^F\right] \rightarrow \mathbb{R}+$ is differentiable in $p$, and its derivative is given by:
\begin{align*}
    \Pi'(p) = &\int_{\underline{\theta}}^{\bar{\theta}} \mathbf{1}_{\left[D(\theta; p) < p\right]} \left[\underbrace{1 - g(H_\theta(p)) + \frac{1 - F(\theta)}{f(\theta)} g'(H_{\theta}(p)) \frac{\partial H_{\theta}(p)}{\partial \theta}}_{\text{Virtual value}}-\underbrace{\left(1 - H_\theta(D(\theta; p)) \right)}_{\text{Payouts}}\right] f(\theta) d\theta \\
    & - \underbrace{\lambda^*(p) \left[ S'(p) + \int_{\underline{\theta}}^{\overline{\theta}} \mathbf{1}_{\left[D(\theta; p) = p\right]} h_\theta(p) f(\theta) d\theta \right]}_{\text{Equilibrium effects}}-\underbrace{[g(H_{\underline{\theta}}(p+\beta))-g(H_{\underline{\theta}}(p))]}_{\text{Outside Option}}.
\end{align*}
\end{corollary}
We omit the proof of Corollary~\ref{coro:envelope-simple} as it is an immediate consequence of Theorem~\ref{thm:envelope} using the fact that $x^*(\theta,b)$ is equal to zero for values of $b$ lower or equal than $D(\theta;p)$, it is equal to one for values of $b$ strictly higher than $D(\theta;p)$ and $D(\theta;p) \in [0,p]$.

Corollary~\ref{coro:envelope-simple} allows us to provide a clear interpretation of the economic forces that determine the value of inducing a marginal increase in the service price. For each $\theta$, the term $1 - g(H_\theta(p))$ captures the ex-ante increase in the value of insurance. Agents who receive coverage—i.e., those for whom $D(\theta; p) < p$—now face potentially higher losses, which are protected by insurance. Since $1 - g(H_\theta(p))$ reflects the cumulative weight assigned to losses above $p$, this term measures the agent’s marginal increase in willingness to pay for insurance.

If $\theta$ were observable, the insurer could fully appropriate this increase in willingness to pay by raising premiums accordingly, thereby increasing profits. However, when $\theta$ is private information, the insurer can only appropriate the agent’s first-period virtual value---that is, her willingness to pay net of information rents. Recall that the term $\frac{1 - F(\theta)}{f(\theta)} g'(H_\theta(b)) \frac{\partial H_\theta(b)}{\partial \theta}$ is negative, reflecting these rents.

The second term, $1 - H_\theta(D(\theta; p))$, captures the insurer’s cost from increasing the price. If a risk type $\theta$ receives a simple deductible contract $\{(1, D(\theta; p))\}$ and is covered, then all agents with valuations in the range $(D(\theta; p), \bar{b})$ will choose to seek the service and pay the OPC. This implies a total demand equal to $1 - H_\theta(D(\theta; p))$. Therefore, a unit increase in price leads to a proportional increase in payouts.

In addition, we must account for equilibrium effects. The sign of this component is governed by the multiplier $\lambda^*(p)$, since the bracketed term is always positive and represents the increase in residual supply. Specifically, a higher price increases the supply of services, captured by $S'(p)$, and reduces demand among those not covered under the optimal contract (i.e., those for whom $D(\theta; p) = p$).


Whenever $J(\theta,\cdot)$ is strictly increasing, we can derive a comparative statics result describing how the multiplier $\lambda^*(\cdot)$ and the deductible $D(\theta;\cdot)$ change with the price.

\begin{proposition}\label{prop:comparative-statics}
Suppose that $J(\theta,\cdot)$ is strictly increasing for all $\theta$. Then, $\lambda^*(\cdot)$ is  continuous and strictly increasing, and for all $\theta$, $D(\theta;\cdot)$ is continuous and strictly decreasing whenever $D(\theta;p) \in (0,p)$ and is non-increasing when $D(\theta;p) = 0$. 
\end{proposition}
The proof of Proposition~\ref{prop:comparative-statics} is in Appendix~\ref{sec:proof-second-stage}. The intuition is straightforward: when the price increases, supply also increases. To clear the market, demand must rise as well. To incentivize higher demand, the multiplier must increase. In turn, agents require greater coverage, which the insurer provides by lowering the deductible.

Using our comparative statics result, we now address the question of whether the insurer chooses a ``high'' or ``low'' price in the second-stage problem~\ref{P2}. To do so, we must first define what constitutes a high or low price. For this purpose, we introduce the concept of a \textit{price-taking benchmark}, which is the price that clears the services market and solves the first-stage problem~\ref{P1}, ignoring the equilibrium constraint~\ref{EQ}. We denote this benchmark price by $p^U$.

Note that the value of relaxing the equilibrium constraint~\ref{EQ} at $p^U$ is zero, i.e., $\lambda^*(p^U) = 0$. Moreover, $p^U$ has a clear economic interpretation: it is the market price that would arise if the insurer took prices as given and did not internalize its impact on the market for services. The following lemma establishes the existence and uniqueness of the price-taking benchmark.

\begin{lemma}\label{lemma:price-taking}
    Suppose that $J(\theta,\cdot)$ is strictly increasing for all $\theta$. Then, there exists a unique price $p^U$ such that $\lambda^*\left(p^U\right)=0$.
\end{lemma}
The existence and uniqueness of $p^U$ follow essentially from the fact that $\lambda^*(\cdot)$ is strictly increasing and continuous. The proof of Lemma~\ref{lemma:price-taking} is provided in Appendix~\ref{sec:proof-second-stage}.

Before stating the result comparing the price chosen by the insurer in problem~\ref{P2} to $p^U$, let $J_1(\theta, b)$ denote the period-1 virtual value, defined as:
\[
J_1(\theta, b) = 1 - g(H_{\theta}(b)) + \frac{1 - F(\theta)}{f(\theta)} g'\left(H_{\theta}(b)\right) \frac{\partial H_{\theta}(b)}{\partial \theta}.
\]

\begin{proposition}\label{prop:price-taking}
Suppose that $\beta=0$ (uninsured agents face the same service price); $J(\theta,\cdot)$ is strictly increasing for all $\theta$; $J_1(\theta,\cdot)$ is non-increasing for all $\theta$; $H_{\theta}(\cdot)$ is concave for all $\theta$. Then it is sufficient to evaluate $\Pi'$ locally at the benchmark $p^U$ to determine whether the equilibrium service price lies above, below, or at this benchmark:
\begin{itemize}
    \item If the surplus extraction force is strictly greater than the cost effect force at the benchmark, the equilibrium service price is strictly higher than $p^U$.
    
    \item If the surplus extraction force is strictly smaller than the cost effect force at the benchmark, the equilibrium service price is strictly lower than $p^U$.

    \item If the surplus extraction force equals the cost effect force at the benchmark, the equilibrium service price is exactly $p^U$.
\end{itemize}
\end{proposition}
The formal proof of Proposition~\ref{prop:price-taking} is provided in Appendix~\ref{sec:proof-second-stage}, but we offer a brief sketch here. Suppose that $p < p^U$. The first possibility is that an agent receives coverage at price $p$. Then, by Proposition~\ref{prop:comparative-statics}, she also receives coverage under the price-taking benchmark $p^U$. Moreover, the same proposition implies that the deductible is lower at $p^U$, which in turn leads to higher payouts.

Additionally, since $J_1(\theta,\cdot)$ is assumed to be non-increasing, the marginal virtual value extracted at $p^U$ is weakly lower than that extracted at $p$. In other words, at the higher price $p^U$, payouts increase while the marginal surplus extracted weakly decreases.

The second possibility is that an agent does not receive coverage at price $p$, but does so at $p^U$. In this case, the concavity of $H_{\theta}(\cdot)$ ensures that the equilibrium effects at the lower price are strong enough to offset any benefit from expanding coverage at $p^U$. These effects are captured by the term $-\lambda^*(p)(S'(p) + h_{\theta}(p))$. By Proposition~\ref{prop:comparative-statics}, we know that $-\lambda^*(p) > -\lambda^*\left(p^U\right) = 0$.

This allows us to conclude that $\Pi'(p) > \Pi'\left(p^U\right)$. A symmetric argument shows that the inequality reverses when $p > p^U$.

Proposition~\ref{prop:price-taking} shows that, under the given assumptions, it suffices to compare the virtual value extracted from the agent with the payouts at $p^U$ to determine whether the insurer will set a price above or below the price-taking benchmark.

Unlike in standard monopoly pricing models, where the monopolist typically sets a price above the price-taking benchmark, in our setting the optimal price may lie above or below it. The key difference is that, in standard models, the monopolist benefits solely from a positive effect at the benchmark price: higher profits per unit sold. In contrast, the insurer in our model faces a potentially positive impact from increased virtual value extraction, but also a negative effect due to higher payouts. As a result, the net effect of a price increase is ambiguous and depends on the model's primitive assumptions.

The next result shows that, under additional structure on $H_{\theta}(\cdot)$ and the supply function $S$, the insurer's objective function $\Pi$ becomes concave, with strictly decreasing derivative $\Pi'$:

\begin{proposition}\label{prop:concave-profits}
Suppose that $\beta=0$ (uninsured agents face the same service price); $J(\theta,\cdot)$ is strictly increasing for all $\theta$; $J_1(\theta,\cdot)$ is non-increasing for all $\theta$; $H_{\theta}(\cdot)$ is affine for all $\theta$, and $S$ is also affine. Then, $\Pi'$ is strictly decreasing and $\Pi$ is concave.
\end{proposition}
The formal proof of Proposition~\ref{prop:concave-profits} is provided in Appendix~\ref{sec:proof-second-stage}.

The following example illustrates a case in which both $S$ and $H_{\theta}(\cdot)$ are affine and $\Pi'$ is strictly decreasing. Moreover, in this example, we have that the cost effect force dominates the surplus extraction force, implying that the insurer’s optimal price lies strictly below the price-taking benchmark.
 
\begin{example}\label{ex:concave-profits}
    Let $\theta \sim U[0,1]$, $H_{\theta}(b) = 1 - \theta + \theta Q(b)$, and $Q(b) = b\cdot\mathbf{1}_{\left[0 \leq b \leq 1\right]}$. Furthermore, for simplicity let us assume that the risk type $\theta$ of the agent is common knowledge. Then, $\Pi'(p)$ is as shown in Figure \ref{fig:example_mkteql}:

\begin{figure}[H]
    \centering
    \includegraphics[width=0.55\linewidth]{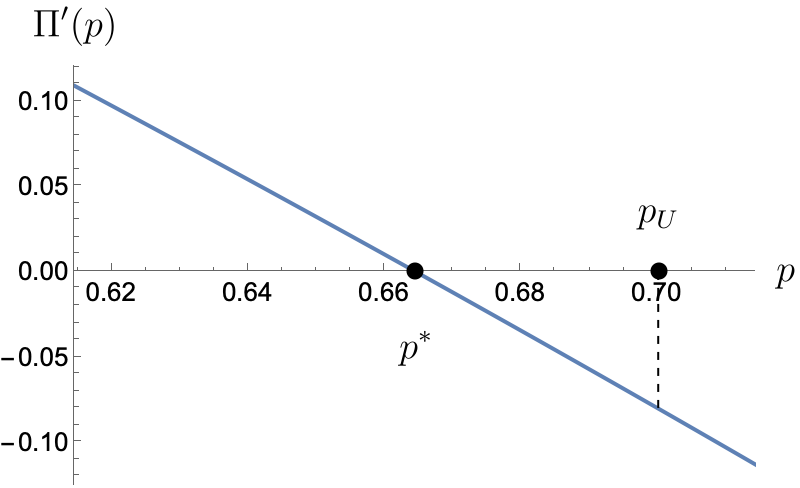}
    \caption{In this case, the optimal price is $p^*=0.66$ and the price-taking benchmark is $p^U=0.7$}
    \label{fig:example_mkteql}
\end{figure}
\end{example}

\section{Extensions}\label{sec:extensions}

\subsection{Social Planner}\label{sec:social-planner}
In this section, we show how our model can also be used to analyze the price impact of a public provider of insurance. We are particularly motivated by countries where the state is the main provider of insurance, while the majority of service providers are privately owned, or countries where insurance markets are highly regulated.\footnote{For instance, in Germany, the government mandates insurance coverage through a statutory health insurance system (SHI), while roughly 50\% of hospitals are privately owned. In Japan, there is universal public insurance through government-managed or employer-based plans, with state regulation, while the majority of hospitals are privately owned.}

To analyze this setting, we consider a social planner with an objective similar to that proposed by \cite{akbarpour2024redistributive}. Specifically, in the first-stage problem, given a price $p$, the planner designs an insurance menu that maximizes a weighted average of insurer profits and social surplus:

\[
\delta\underbrace{\left[\int_{\underline{\theta}}^{\bar{\theta}}\int_0^px(\theta,b)J(\theta,b)h_{\theta}(b)\,dbf(\theta)\,d\theta-U(\underline{\theta})-\int_0^p[1-g(H_{\underline{\theta}}(b))]\,db\right]}_{\text{Profits}}+\underbrace{\int_{\underline{\theta}}^{\bar{\theta}}\omega(\theta)U(\theta)f(\theta)\,d\theta}_{\text{Social Surplus}},
\]
where the planner assigns a weight $\delta$ to profits and a weight $\omega(\theta)$ to an agent with ex-ante risk type $\theta$. Profits are given by Lemma~\ref{lemma:objective}, and correspond to those generated by an implementable insurance menu---that is, a menu that satisfies conditions~\ref{IC2},~\ref{IC1}, and~\ref{IR}. Similarly, $U(\theta)$ denotes the ex-ante payoff---or certainty equivalent---that an agent of type $\theta$ obtains from an implementable menu (see Lemma~\ref{lemma:IC1-menu}, Condition~\ref{condition:IC1-necessity}).

Using the expression for $U(\theta)$ provided in Lemma~\ref{lemma:IC1-menu}, Condition~\ref{condition:IC1-necessity}, we can simplify the objective as follows:
\[
\int_{\underline{\theta}}^{\bar{\theta}}\int_0^px(\theta,b)V(\theta,b)h_{\theta}(b)\,dbf(\theta)\,d\theta+\int_{\underline{\theta}}^{\bar{\theta}}\int_0^p\Omega(\theta)I(\theta,b)\,dbf(\theta)\,d\theta+(\bar{\omega}-\delta)U(\underline{\theta})-\delta\int_0^p[1-g(H_{\underline{\theta}}(b))]\,db,
\]
where $I(\theta,b)$ represents the period 1 information rents:
\[
I(\theta,b)=\frac{1-F(\theta)}{f(\theta)}g'(H_{\theta}(b))\frac{\partial H_{\theta}(b)}{\partial \theta}.
\]

$\Omega(\theta)$ represents the average social weight assigned to all agents with a risk type greater than $\theta$:
\[
\Omega(\theta)=\mathbb{E}_F[\omega(s) \mid s \geq \theta]=\frac{1}{1-F(\theta)}\int_{\theta}^{\bar{\theta}}\omega(s)f(s)\,ds.
\]

$V(\theta,b)$ represents the social value of allocating a unit of the service to an agent with risk type $\theta$ and valuation $b$:
\[
V(\theta,b)=\delta J(\theta,b)-\Omega(\theta)\frac{I(\theta,b)}{h_{\theta}(b)}.
\]
This social value function has a similar economic interpretation to the one in \cite{akbarpour2024redistributive}, but in the context of insurance. In particular, it consists of a weighted sum of the period 1 and period 2 virtual values, captured by $J(\theta, b)$ (corresponding to revenue), and the period 1 information rents, $I(\theta, b)$, weighted by $\Omega(\theta)$.
The weight $\Omega(\theta)$ on the period 1 information rents arises from the envelope formula (Lemma~\ref{lemma:effective-b-type}, Condition~\ref{condition:IC1-necessity}), which implies that, for the menu to satisfy condition~\ref{IC1}, any increase in the ex-ante utility of risk type $\theta$ must also be reflected in the utility of all higher risk types. 

Finally, $\bar{\omega}$ represents the average social weight across all risk types:
\[
\bar{\omega}=\mathbb{E}_F[\omega(\theta)]=\int_{\underline{\theta}}^{\bar{\theta}}\omega(\theta)f(\theta)\,d\theta,
\]
where, whenever $\bar{\omega}$ exceeds the weight on profits, the planner assigns the highest possible ex-ante utility to the lowest type and, through premiums, implements the largest possible transfer to all risk types.

Then, the first-stage social planner's problem is given by:
\begin{align*}
    W(p)\equiv&\max_{x,U(\underline{\theta})} \quad \int_{\underline{\theta}}^{\bar{\theta}}\int_{0}^{p}x(\theta,b)V(\theta,b)h_{\theta}(b)\,dbf(\theta)\,d\theta +(\bar{\omega}-\delta)U(\underline{\theta})+K(p)\tag{$P1-SP$} \label{P1-SP} \\
    &\text{s.t.} \quad \int_{\underline{\theta}}^{\bar{\theta}}\int_{0}^{p}x(\theta,b)h_{\theta}(b)\,dbf(\theta)\,d{\theta} = RS(p), \tag{$EQ$} \\
    &\;\;\;\;\;\;\;\; x(\theta,\cdot):B \rightarrow [0,1] \; \text{is non-decreasing for all } \theta, \tag{$MON$} \\
     &\;\;\;\;\;\;\;\;U(\underline{\theta}) \in \left[U_{NP}(\underline{\theta}),\bar{U}\right], \tag{$IR$}
\end{align*}
where:
\[
K(p)=\int_{\underline{\theta}}^{\bar{\theta}}\int_0^p\Omega(\theta)I(\theta,b)\,dbf(\theta)\,d\theta)-\delta\int_0^p[1-g(H_{\underline{\theta}}(b))]\,db.
\]

The function $W:\left[p^N,p^F\right] \rightarrow \mathbb{R}$ represents the welfare attained by the planner from the optimal menu that sustains price $p$ in equilibrium. Note that, even in the context of a social planner, the implementable prices remain constrained to the interval $\left[p^N,p^F\right]$ (see Lemma~\ref{lemma:feasible-prices}).

Nonetheless, unlike the first-stage insurer's problem~\ref{P1}, it is not immediate that the planner will set the ex-ante utility of type $\underline{\theta}$ equal to her no-participation utility, $U_{NP}(\underline{\theta})$. As argued above, if $\bar{\omega} > \delta$, the planner prefers to set $U(\underline{\theta})$ as high as possible. Therefore, to ensure the maximization problem is well-defined, we assume that the maximum ex-ante utility the planner can assign to type $\underline{\theta}$ is bounded above by some $\bar{U}$ satisfying $U_{NP}(\underline{\theta}) \leq \bar{U} < \infty$.

Once we define problem~\ref{P1-SP}, it is straightforward to see that we can apply the techniques described in Sections~\ref{sec:quantity-space} and~\ref{sec:optimal-transport} to represent \ref{P1-SP} in the quantity space and rewrite it as an optimal transport problem, along with its corresponding dual formulation. Moreover, Lemma~\ref{lemma:optimal-menu} allows us to characterize the solutions to this optimal transport problem. As in the case of a monopolist insurer, we conclude that the social planner will offer to each risk type either a simple deductible contract or a limited coverage deductible contract:

\begin{corollary}[Optimal Contracts Social Planner]
    There exists an $x^*$ that solves the problem~\ref{P1-SP} such that the contract offered to risk type $\theta$ takes one of the following forms: a \textbf{simple deductible contract} $\{(1, D(\theta))\}$ with $D(\theta) \in [0,p]$, or a \textbf{limited coverage deductible contract} $\{ (\alpha,D(\theta)),(1, M(\theta))\}$ where  $0\leq D(\theta) < M(\theta) \leq p$ and $\alpha \in (0,1)$. 
\end{corollary}\label{coro:optimal-menu-SP}

Next, we turn to the second-stage planner's problem, which consists of choosing the price that maximizes the welfare function $W$:
\begin{equation}
    \max_{\left[p^N,p^F\right]}\;W(p) \tag{$P2-SP$} \label{P2-SP}
\end{equation}

Using an argument analogous to the one employed in the proof of Theorem~\ref{thm:envelope}, we can compute the derivative of the welfare function with respect to $p$. For simplicity and clarity of exposition, assume that  $\bar{\omega} > \delta$ and that $V(\theta, \cdot)$ is strictly increasing for all $\theta$. The first assumption implies that $U(\underline{\theta}) = \bar{U}$, and the second implies, via Lemma~\ref{lemma:simple-contract}, that the planner will offer each risk type a simple deductible contract of the form $\{(1, D(\theta; p))\}$, where we again make explicit the dependence of the contract on the price $p$. Then, the derivative of $W$ is given by:

\begin{corollary}[Welfare Derivative]
   The function $W$ is differentiable in $p$ and its derivative is given by:
   \begin{align*}
    W'(p) = &\int_{\underline{\theta}}^{\bar{\theta}} \mathbf{1}_{\left[D(\theta; p) < p\right]} \delta\left[\underbrace{1 - g(H_\theta(p)) + I(\theta,p)}_{\text{Virtual value}}-\underbrace{\left(1 - H_\theta(D(\theta; p)) \right)}_{\text{Payouts}}\right] f(\theta) d\theta \\
    &- \underbrace{\lambda^*(p) \left[ S'(p) + \int_{\underline{\theta}}^{\overline{\theta}} \mathbf{1}{\left[D(\theta; p) = p\right]} h_\theta(p) f(\theta) d\theta \right]}_{\text{Equilibrium effects}}+\int_{\underline{\theta}}^{\bar{\theta}}\underbrace{\mathbf{1}_{\left[D(\theta; p) = p\right]}\Omega(\theta)I(\theta,p)f(\theta)\,d\theta}_{\text{Information rents}}\\
    &-\underbrace{\delta[1-g(H_{\underline{\theta}}(p))]}_{\text{Outside Option}}.
\end{align*}
\end{corollary}\label{coro:envelope-SP}
Compared to the derivative of profits, $\Pi'(p)$, the derivative of welfare assigns an additional negative weight to the period 1 information rents of agents who do not receive coverage (i.e., those for whom $D(\theta; p) = p$). This is captured by the term $\Omega(\theta) I(\theta, p)$, where, recall, $I(\theta, p) \leq 0$. Intuitively, these are forgone rents that the planner could have paid to uncovered types to increase their ex-ante utility.

As in the case of a monopolist provider, when $V(\theta, \cdot)$ is strictly increasing for all $\theta$, we can apply an argument analogous to the one in Proposition~\ref{prop:comparative-statics} to show that $\lambda^*(\cdot)$ is strictly increasing and continuous, and that $D(\theta; \cdot)$ is continuous and strictly increasing whenever $D(\theta; p) \in (0, p)$, while it is non-increasing when $D(\theta; p) = 0$. Moreover, we define the price-taking benchmark $p^U$ as the price that clears the services market and solves problem~\ref{P1-SP} without the~\ref{EQ} constraint (i.e., such that $\lambda^*(p^U) = 0$). Existence and uniqueness of $p^U$ then follow from the argument in Lemma~\ref{lemma:price-taking}. Finally, using the derivative of the welfare function, we can show that:

\begin{proposition}\label{prop:price-taking-planner}
   Suppose that $V(\theta, \cdot)$ is strictly increasing for all $\theta$, $\delta = 0$, and $p^U \in (p^N, p^F)$. Then, the price chosen by the planner in the second-stage problem~\ref{P2-SP} is strictly lower than $p^U$.
\end{proposition}
We omit the proof of Proposition~\ref{prop:price-taking-planner}, as it follows from the fact that $W'(p) < 0$ for all $p \geq p^U$. Indeed, since $\delta = 0$, the planner assigns zero weight to revenue. As a result, the derivative of $W$ depends only on the term capturing forgone information rents, which is always negative, and on the term capturing equilibrium effects, which is also negative above the price-taking benchmark. This is because the bracketed term is always positive, and $\lambda^*(\cdot)$ is strictly increasing. Hence, $-\lambda^*(p) < -\lambda^*(p^U) = 0$ for all $p > p^U$.

\subsection{Nash Bargaining}\label{sec:nash-bargaining}
In this section, we show how our baseline model can be extended to a setting in which, in the second stage, the insurer does not necessarily have full control over the service price, as the service provider may also possess bargaining power. Under this formulation, our baseline model corresponds to the special case in which the insurer holds all the bargaining power.

More precisely, the timing and interaction between the service provider and insurer is as follows:
\begin{enumerate}
    \item First, the insurer and the service provider bargain over the service price $p$ and a capacity cap $S$, which specifies the maximum number of patients the provider is willing to accept from the insurer. The pair $(p,S)$ is determined through simultaneous Nash bargaining, with the outcome maximizing the Nash product:
    \[
        \Pi_i(p,S)^{\beta} \Pi_s(p,S)^{1-\beta},
    \]
    where $\Pi_i$ and $\Pi_s$ denote the profits of the insurer and the service provider, respectively, and $\beta \in [0,1]$ is the insurer's bargaining weight.

    The fact that the insurer and the provider bargain over $S$ is motivated by empirical evidence showing that providers often face real or strategic capacity constraints. For example, \cite{ho2009insurer} analyzes a setting in which hospitals encounter such constraints during bargaining, either due to physical limitations---such as limited numbers of beds or nurses per bed---or because a hospital is highly attractive to consumers and thus less reliant on insurers to secure patient volume. \cite{ho2009insurer} finds that hospitals expected to fill their beds can leverage this position to obtain approximately \$6{,}900 more per patient than average. Similarly, \cite{town2001hospital} argue that hospitals with little excess capacity tend to have higher reservation prices and, as a result, stronger bargaining positions.

    Moreover, in many countries such constraints are explicitly written into contracts. For instance, in European health systems, contracts often include explicit volume caps, sometimes called ``activity caps'' or ``contracted volumes'' \citep{appleby2012payment}.

    Finally, this formulation also accommodates a simpler version in which only the price $p$ is negotiated. In that case, $p$ indirectly determines the provider’s capacity constraint through a function $S:\mathbb{R}_+ \to \mathbb{R}_+$, where, for example, $S(p)=S$ for all $p$, implying that the hospital’s capacity constraint is invariant to price.

    \item Once $p$ and $S$ are agreed upon, the insurer offers an insurance menu to the agents:
    \[
        \left\{\left(\left\{ \left(x(\theta,b), D(\theta,b) \right) \right\}_{b \in B}, \, t(\theta) \right)\right\}_{\theta \in \Theta}.
    \]
    The menu is offered at a stage when each agent observes only her ex-ante risk type $\theta$, before her loss-related valuation $b$ is realized. The agent also observes the agreed price $p$, and decides whether to select a contract from the menu. If she does, she pays the corresponding premium.

    \item The agent's private loss-related valuation $b$ is then realized. Based on this information, she chooses whether to the seek the service or remain at home. If she seeks the service, she can either use the benefit specified in her contract to cover the service costs, or pay entirely out of pocket.

    \item The collection of contracts selected by the agents, along with the specific items used within those contracts, generates total demand for services at the agreed price $p$. This demand must satisfy the capacity constraint $S$—that is, the total volume of customers referred by the insurer must not exceed $S$.
\end{enumerate}

Note that for any fixed price $p$, we can still apply Lemmas~\ref{lemma:IC2-contract} and~\ref{lemma:IC1-menu} to characterize an implementable insurance menu---that is, a menu that satisfies conditions~\ref{IC2},~\ref{IC1}, and~\ref{IR}. Moreover, this implies that the insurer’s profits from any implementable menu are still determined by Lemma~\ref{lemma:objective}.

Then, the insurer’s problem is given by:

\begin{align*}
    \Pi_i(p)\equiv&\max_{x} \quad \int_{\underline{\theta}}^{\bar{\theta}}\int_{0}^{p}x(\theta,b)J(\theta,b)h_{\theta}(b)\,dbf(\theta)\,d\theta +K(p)\tag{$P1-NB$} \label{P1-NB} \\
    &\text{s.t.} \quad \int_{\underline{\theta}}^{\bar{\theta}}\int_{0}^{p}x(\theta,b)h_{\theta}(b)\,dbf(\theta)\,d{\theta}+\int_{\underline{\theta}}^{\bar{\theta}}[1-H_{\theta}(p)]f(\theta)\,d\theta \leq S, \tag{$CAP$} \label{CAP} \\
    &\;\;\;\;\;\;\;\; x(\theta,\cdot):B \rightarrow [0,1] \; \text{is non-decreasing for all } \theta, \tag{$MON$}
\end{align*}
where:
\[
K(p)=- U_{NP}(\underline{\theta})- \int_{0}^{p} [1 - g(H_{\underline{\theta}}(b))] \, db.
\]
The main difference between problem~\ref{P1} and problem~\ref{P1-NB} is that we replace the equilibrium constraint~\ref{EQ} with the capacity constraint~\ref{CAP}. This capacity constraint requires that the volume of agents referred by the insurer to the service provider not exceed $S$.

It is straightforward to apply the techniques described in Sections~\ref{sec:quantity-space} and~\ref{sec:optimal-transport} to represent problem~\ref{P1-NB} in quantity space and rewrite it as an optimal transport problem, along with its corresponding dual formulation. Moreover, Lemma~\ref{lemma:optimal-menu} allows us to characterize the solutions to this optimal transport problem. As in the case of a competitive market for services, we conclude that the insurer offers each risk type $\theta$ either a simple deductible contract or a limited coverage contract:

\begin{corollary}[Optimal Contracts under Nash Bargaining]
    There exists an $x^*$ that solves the problem~\ref{P1-NB} such that the contract offered to risk type $\theta$ takes one of the following forms: a \textbf{simple deductible contract} $\{(1, D(\theta))\}$ with $D(\theta) \in [0,p]$, or a \textbf{limited coverage deductible contract} $\{ (\alpha,D(\theta)),(1, M(\theta))\}$ where  $0\leq D(\theta) < M(\theta) \leq p$ and $\alpha \in (0,1)$. 
\end{corollary}\label{coro:optimal-menu-NB}

Using the solution to problem~\ref{P1-NB}, we can associate each price $p$ and capacity cap $S$ with the corresponding insurer profits, denoted $\Pi_i(p,S)$. We then define the \textit{Nash bargaining price and capacity cap} as the solution to the problem:
\begin{equation}
    \max_{p,S} \; \Pi_i(p,S)^{\beta} \Pi_s(p,S)^{1 - \beta}, \tag{$P2$-$SP$} \label{P2-NB}
\end{equation}
Note that when $\beta = 1$---that is, when the insurer holds all the bargaining power---and when the capacity cap $S$ is indirectly determined by the agreed price $p$, problem~\ref{P2-NB} reduces to the second-stage problem~\ref{P2} of the baseline model.

Moreover, we can use the argument made in Theorem \ref{thm:envelope} to express the derivative of $\Pi_i$ with respect to $p$ and $S$:

\begin{corollary}[Insurer’s Profit Derivative under Nash Bargaining]
   The function $\Pi_i$ is differentiable in $p$ and $S$; its partial derivatives are:
    \begin{align*} \frac{\partial \Pi_i(p,S)}{\partial p}=&\int_{\underline{\theta}}^{\overline{\theta}}x^*(\theta,p)J(\theta,p)h_{\theta}(p)f(\theta)\,d\theta- \int_{\underline{\theta}}^{\bar{\theta}}\int_0^px^*(\theta,b)h_{\theta}(b)\,dbf(\theta)\,d\theta\\
    &-\lambda^* \left(\int_{\underline{\theta}}^{\overline{\theta}}(1-x^*(\theta,p))h_\theta(p)f(\theta)d\theta\right)-[g(H_{\underline{\theta}}(p+\beta))-g(H_{\underline{\theta}}(p))],\\
    \frac{\partial \Pi_i(p,S)}{\partial S}=&-\lambda^*.
    \end{align*}
\end{corollary}\label{coro:envelope-NB}

Corollary~\ref{coro:envelope-NB} is key for analyzing the Nash-bargaining problem~\ref{P2-NB}. For this problem, we have not yet imposed any assumptions on the service provider’s profit function $\Pi_s$. In practice, depending on the setting, these profits may even depend on the insurance menu chosen by the insurer in problem~\ref{P1-NB}. For example, one could assume:
\[
\Pi_s(p) = \int_{\underline{\theta}}^{\bar{\theta}} \int_{0}^{p} x^*(\theta, b) (p - mc) h_{\theta}(b) \, db \, f(\theta) \, d\theta + \int_{\underline{\theta}}^{\bar{\theta}} [1 - H_{\theta}(p)] (p - mc) f(\theta) \, d\theta,
\]
where the service provider faces constant marginal cost $mc$, and $x^*(\theta, b)$ denotes the units of the service provided to an agent of type $\theta$ with loss-related valuation $b \leq p$ in the optimal menu solving problem~\ref{P1-NB}. Recall from Lemma~\ref{lemma:IC2-contract} that $x^*(\theta, b) = 1$ for all $b > p$.

\subsection{Pricing Tiers}\label{sec:multiple-prices}

Service providers typically offer \textit{tiered services}, differentiated by the level of care, specialization, and quality provided. For example, a hospital may offer primary care or routine checkups at a base price $p_1$, while secondary care---such as treatment from a specialist---is available for an additional cost $p_2$. Similarly, basic lab panels may be priced at $p_1$, whereas more advanced diagnostics like imaging require an extra payment of $p_2$. In terms of amenities, a standard room might be available at price $p_1$, while private rooms or more luxurious accommodations come at an added cost $p_2$.

In our initial model, it is natural to assume that agents with higher valuations are more likely to demand additional services, and thus incur higher treatment costs. To formalize this idea, we continue to assume that each agent is characterized by her risk type $\theta \in \Theta$ and valuation $b \in B$. The service provider now offers \emph{tiered pricing}: a basic service is available for a price $p_1 \in \mathbb{R}_+$ and additional (or higher-quality) services are offered for an extra price $p_2 \in \mathbb{R}_+$.

An agent’s loss when her valuation is $b$ and she receives only basic services is given by $p_1 + \max\{b - b^*, 0\}$, for some cap $b^* \in B$. Conversely, her loss when she only receives additional  services is $\min\{b,b^*\} + p_2$. This implies that the agent's valuation for basic services is $\min\{b, b^*\}$, reflecting a cap or threshold $b^*$ on the loss that basic care can prevent. The valuation for additional services is then $\max\{b - b^*, 0\}$---only agents with $b > b^*$ value these services. 

Recall that $b$ represents the agent’s willingness to pay for third-party services. Since this valuation is subjective and shaped by many personal and contextual factors, we model $b$ as private information.

One key factor influencing $b$ is the \textit{severity} of the agent’s condition: more severe illnesses typically generate higher perceived losses and, consequently, higher valuations for treatment. Thus, a higher value of $b$ corresponds to a more serious health condition. This link between severity and valuation is explicitly captured in this extension of our baseline model.

In particular, when an agent's valuation $b$ exceeds a threshold $b^*$, we can interpret this as the agent suffering a more severe adverse event than one with $b < b^*$. In such cases, a basic service (i.e., a routine checkup, denoted as service~1) is insufficient to fully treat or repair damages. The agent also requires a more specialized intervention (i.e., seeing a specialist, denoted as service~2). As a result, treating the more severe condition is costlier: the agent must pay the combined cost $p_1 + p_2$ rather than just $p_1$.

We assume that $p_1 \in [0,b^*]$ such that this structure yields three types of behavior:
\begin{itemize}
    \item Agents with $b \leq p_1$ do not seek the service and incur a loss of $b$.
    \item Agents with $b \in (p_1, b^* + p_2]$ receive only basic services and incur a loss of $p_1 + \max\{b - b^*, 0\}$.
    \item Agents with $b > b^* + p_2$ receive both basic and additional services and incur a loss of $p_1 + p_2$.
\end{itemize}

It is easy to verify that the certainty equivalent for a risk type $\theta$ agent when a service provider offers tiered services is given by:
\begin{align*}
    -\sum_{j=1}^{2}\int_{b_j}^{b_j+p_j}[1-g(H_{\underline{\theta}}(b))]\,db,
\end{align*}
where $b_1=0$ and $b_2=b^*$.

Moreover, an insurance menu is now represented as:
\[
    \left\{ \left( \left\{ \left( x_j(\theta, b), D_j(\theta, b) \right) \right\}_{b \in B,\; j \in \{1,2\}},\; t(\theta) \right) \right\}_{\theta \in \Theta},
\]
where a contract $\left\{ \left(x_j(\theta, b), D_j(\theta, b) \right) \right\}_{b \in B,\; j \in \{1,2\}}$ specifies, for each service $j$, a fixed per unit price $D_j(\theta, b)$ that must be paid for the $x_j(\theta, b)$ units of the service covered by the insurer.

The agent's loss function when her true valuation is $b$ and she picks contract item $\left\{(x_j(\theta, b'), D_j(\theta, b'))\right\}_{j \in \{1,2\}}$ is defined as follows:

\[
L(b,b';\theta)=L_1(b,b';\theta)+L_2(b,b';\theta),
\]
where 
\[
L_j(b,b';\theta)=x_j(\theta,b')\min\{D_j(\theta,b'),b_{p_j}\}+(1-x_j(\theta,b'))b_{p_j},
\]
and $b_{p_j}=\min\{b-b_j,p_j\}$.

We can then use an argument similar to the one used in the proof of Lemma~\ref{lemma:IC2-contract} to characterize the contracts that satisfy the period 2 incentive compatibility constraint~\ref{IC2}. In particular, a contract satisfies~\ref{IC2} if and only if:
\begin{enumerate}
    \item The agent's loss function is given by:
    \[
        L(b;\theta)\equiv L(b,b;\theta) =\int_0^{\min\{b,p_1\}}(1-x_1(\theta,l)\,dl+\int_0^{\min\{\max\{b^*,b\},b^*+p_2\}}(1-x_2(\theta,l)\,dl,
    \]
    \item $x_j(\theta,\cdot):B\to[0,1]$ is a non-decreasing function, with $x_j(\theta,b)=1$ for all $b>b_j+p_j$ and $x_2(\theta,b)=0$ for all $b \leq b^*$.
\end{enumerate}

Then, if risk type $\theta$ selects the menu option:
\[
    \left( \left\{ \left( x_j(\theta', b), D_j(\theta', b) \right) \right\}_{b \in B,\; j \in \{1,2\}},\; t(\theta') \right),
\]
the agent's certainty equivalent is given by:
\[
    U(\theta,\theta')=-t(\theta')-\sum_{j=1}^{2}\int_{b_j}^{b_j+p_j}(1-x_j(\theta,b))[1-g(H_{\theta}(b))]\,db,
\]

Applying an argument similar to the one used in Lemma~\ref{lemma:IC1-menu}, we can show that any menu satisfying the period 1 incentive compatibility constraint~\ref{IC1} must also satisfy the envelope formula:
\[
U(\theta)\equiv U(\theta,\theta)=U(\underline{\theta})+\sum_{j=1}^{2}\int_{\underline{\theta}}^{\theta}\int_{b_j}^{b_j+p_j}(1-x_j(s,b))g'(H_{s}(b))\frac{\partial H_{s}(b)}{\partial s}\,db\,ds.
\]

For simplicity, we also assume that $\beta_j=0$, so that if the insurer implements price $p_j$, then $p_j$ is the price agents face if they choose not to participate in the mechanism. Using the argument from Lemma~\ref{lemma:IR-menu}, we can then show that a menu satisfies the period 1 individual rationality constraint~\ref{IR} if and only if:
\[
U(\underline{\theta}) \geq -\sum_{j=1}^{2}\int_{b_j}^{b_j+p_j}[1-g(H_{\underline{\theta}}(b))]\,db
\]

Additionally, if the supply of service $j$ is given by a strictly increasing function $S_j: \mathbb{R}+ \to \mathbb{R}+$, then we can define the residual supply of service $j$ faced by the insurer at price $p_j$ as:
\[
RS_j(p)=S_j(p)-\int_{\underline{\theta}}^{\bar{\theta}}[1-H_{\theta}(b_j+p_j)]f(\theta)\,d\theta.
\]
This residual supply accounts for the fact that, regardless of the menu offered by the insurer, all agents with a valuation $b > b_j + p_j$ will demand the full unit of service $j$.

Then, the equilibrium constraint faced by the insurer for service $j$ is given by:
\[
\int_{\underline{\theta}}^{\bar{\theta}}\int_{b_j}^{b_j+p_j}x_j(\theta,b)h_{\theta}(b)\,dbf(\theta)\,d\theta=RS_j(p).
\]

We can then apply an argument similar to the one used in Lemma~\ref{lemma:feasible-prices} to show that the set of implementable prices for service $j$ lies in the compact interval $\left[p_j^N, p_j^F\right]$, where, recall, $p_j^N$ is the price that clears the market for service $j$ when $x_j(\theta, b) = 0$ for all $b \in [b_j, b_j + p_j]$, and $p_j^F$ is the price that clears the market when $x_j(\theta, b) = 1$ for all $b \in [b_j, b_j + p_j]$.

Additionally, to remain consistent with our assumption that $p_1 \in [0, b^*]$, we assume that $p_1^F \leq b_2 = b^*$.

Next, applying the same logic as in Lemma~\ref{lemma:objective}, we can show that the insurer's profits are given by:
\[
\sum_{j=1}^2\left[\int_{\underline{\theta}}^{\bar{\theta}}x_j(\theta,b)J_j(\theta,b)h_{\theta}(b)\,dbf(\theta)\,d\theta-\int_{b_j}^{b_j+p_j}[1-g(H_{\underline{\theta}}(b))]\,db\right]-U(\underline{\theta}),
\]
where:
\[
J_j(\theta,b)=\left[1 - g(H_{{\theta}}(b)) + \frac{1 - F(\theta)}{f(\theta)} g'(H_{\theta}(b)) \frac{\partial H_{\theta}(b)}{\partial \theta}\right]\frac{1}{h_{\theta}(b)} +b-b_j - p_j - \frac{1 - H_{\theta}(b)}{h_{\theta}(b)}.
\]

From here, it is straightforward to see that the first-stage problem can be separated by service $j$, with the corresponding problem given by:
\begin{align*}
    \Pi_j(p)\equiv&\max_{x_j} \quad \int_{\underline{\theta}}^{\bar{\theta}}\int_{b_j}^{b_j+p_j}x_j(\theta,b)J_j(\theta,b)h_{\theta}(b)\,dbf(\theta)\,d\theta \tag{$P1_j$} \label{P1_j} \\
    &\text{s.t.} \quad \int_{\underline{\theta}}^{\bar{\theta}}\int_{b_j}^{b_j+p_j}x_j(\theta,b)h_{\theta}(b)\,dbf(\theta)\,d{\theta} = RS_j(p), \tag{$EQ_j$} \label{EQ_j} \\
    &\;\;\;\;\;\;\;\; x_j(\theta,\cdot):B \rightarrow [0,1] \; \text{is non-decreasing for all } \theta, \tag{$MON_j$} \label{MON_j}
\end{align*}
where we have already used the fact that the insurer sets the utility of the lowest type $\underline{\theta}$ equal to her outside option.

It is straightforward to apply the techniques described in Sections~\ref{sec:quantity-space} and~\ref{sec:optimal-transport} to represent problem~\ref{P1_j} in quantity space and to rewrite it as an optimal transport problem, along with its corresponding dual formulation. Moreover, Lemma~\ref{lemma:optimal-menu} allows us to characterize the solutions to this optimal transport problem. We conclude that the insurer offers each risk type~$\theta$ either a simple deductible contract or a limited coverage deductible contract:

\begin{corollary}[Optimal Contracts Tiered Services]
    There exists an $x^*$ that solves the problem~\ref{P1_j} such that the contract offered to risk type $\theta$ for service $j$ takes one of the following forms: a \textbf{simple deductible contract} $\{(1, D_j(\theta))\}$ with $D_j(\theta) \in [0,p_j]$, or a \textbf{limited coverage deductible contract} $\{ (\alpha_j,D_j(\theta)),(1, M_j(\theta))\}$ where  $0\leq D_j(\theta) < M_j(\theta) \leq p_j$ and $\alpha_j \in (0,1)$. 
\end{corollary}\label{coro:optimal-menu-TS}

Finally, we can also decompose the second-stage problem by service $j$. In particular, the problem is given by:
\begin{equation}
    \max_{p_j \in \left[p_j^N, p_j^F\right]} \; \Pi_j(p), \tag{$P2_j$} \label{P2_j},
\end{equation}
where using Theorem \ref{thm:envelope} we can characterize the derivative of $\Pi_j$:

\begin{corollary}[Insurer's Profit Derivative Tiered Services]
   The function $\Pi_j$ is differentiable in $p$ and its derivative is given by:
    \begin{align*} \Pi_j'(p)=&\int_{\underline{\theta}}^{\overline{\theta}}x_j^*(\theta,b_j+p_j)J_j(\theta,b_j+p_j)h_{\theta}(p)f(\theta)\,d\theta- \int_{\underline{\theta}}^{\bar{\theta}}\int_{b_j}^{b_j+p_j}x_j^*(\theta,b)h_{\theta}(b)\,dbf(\theta)\,d\theta\\
    &-\lambda^* \left( S_j'(p) + \int_{\underline{\theta}}^{\overline{\theta}}(1-x_j^*(\theta,b_j+p_j))h_\theta(p)f(\theta)d\theta\right).
    \end{align*} 
\end{corollary}\label{coro:envelope-TS}

\section{Conclusion}\label{sec:conclusion}
This paper develops a mechanism design framework to study how insurance contracts interact with third-party service markets through their impact on service prices. We analyze a setting in which a monopolistic insurer offers contracts to risk-averse agents with dual utility preferences and sequential two-dimensional private information. Crucially, the insurer internalizes the effect of coverage on the demand for services, which in turn determines equilibrium prices. We approach this problem in two stages: first, we characterize optimal insurance menus for any given service price; second, we identify the equilibrium price that maximizes the insurer’s profits.

Our first main result shows that optimal contracts take a simple and realistic form: either full coverage after a deductible or limited coverage with an out-of-pocket maximum. The limited coverage design emerges as a tool not only to manage risk but also to control service utilization and price levels. Our second main result identifies three forces that shape the insurer’s pricing decision---surplus extraction, higher payouts, and equilibrium effects---offering sufficient conditions under which the insurer sets prices above or below a price-taking benchmark. 

We also extend the model to settings with redistributive objectives, negotiated prices, and pricing tiers. These extensions preserve the core insights of the baseline model while broadening its applicability to both policy-oriented and institutional contexts. Overall, our results contribute to a better understanding of how insurers may strategically design contracts in environments where prices and coverage are jointly determined through market interactions.

\newpage
\bibliography{references.bib}
\bibliographystyle{aer}

\newpage
\appendix
\section{Proofs of Results in Section~\ref{sec:implementable}} \label{sec:proof-implmentable}
\begin{proof}[Proof of Lemma~\ref{lemma:IR-contract}] 
       Suppose that a contract satisfies~\ref{IC2} and that $D(\theta,b) > b_p$ for some $b$. In this case, the agent will not use her insurance and her loss is given by $L(b;\theta)=b_p$. 
       
       Now, suppose the insurer modifies the contract by setting $D'(\theta,b)=b_p$ and $x'(\theta,b)=0$. Under this modified contract, the loss remains unchanged at $L'(b;\theta)=b_p$. The treatment cost for the insurer remains equal to 0, the demand for hospital services remains 1 if $b > p$ and 0 otherwise. Finally, for every $b' \neq b$, $L'(b',b;\theta)=b_p \geq L(b';\theta)$. Therefore,~\ref{IC2} is still satisfied.
\end{proof}

\begin{proof}[Proof of Lemma~\ref{lemma:effective-b-type}]
    Suppose that a contract satisfies~\ref{IC2}. Towards a contradiction, assume that $L(b;\theta)>L(p;\theta)$ for some $b>p$. Then: 
    \[
        L(b,p;\theta)=x(\theta,p)\min\{D(\theta,p),p\}+(1-x(\theta,p))p= L(p;\theta)< L(b;\theta),
    \]
    which contradicts the assumption that \ref{IC2} is satisfied. Therefore, we must have $L(b;\theta) \leq L(p;\theta)$. A similar argument shows that $L(b;\theta) \geq L(p;\theta)$, hence $L(p;\theta)=L(b;\theta)$ for all $b > p$.

     To see why we can, with out loss of generality, set $x(\theta,b)=1$ for all $b > p$, suppose instead that $x(\theta,b)<1$ for some $b > p$. Consider a modified contract where $x'(\theta,b)=1$ and $D'(\theta,b)=L(p;\theta)$ for all $b > p$. Then, $L'(b;\theta)=\min\{L(p;\theta),p\}=L(p;\theta)=L(b;\theta)$. Thus, the loss remains unchanged. Moreover, $p-D'(\theta,b)=p-L(p;\theta)=p-L(b;\theta)=x(\theta,b)(p-D(\theta,b))$, meaning the treatment cost to the insurer also remains the same. The demand for hospital services is still 1. Finally, for any $b \leq p$, we have:
    \begin{align*}
        L(b;\theta) &\leq L(b,p;\theta) \leq x(\theta,p)\min\{D(\theta,p),p\} +(1-x(\theta,p))p= L(p;\theta).
    \end{align*}
    Additionally, we know that $L(b;\theta) \leq b$. Hence, $ L(b;\theta) \leq \min\{L(p;\theta),b\}$. Since $L'(b,p;\theta)=\min\{L(p;\theta),b\}$,~\ref{IC2} is still satisfied.
\end{proof}

\begin{proof}[Proof of the Necessity part of Lemma~\ref{lemma:IC2-contract}]
    Suppose that a contract satisfies~\ref{IC2}. Then, for any $b,b' \in [0,p]$, we have:
    \begin{align*}
        L(b;\theta)\leq L(b,b';\theta)= (1-x(\theta,b'))b+x(\theta,b')\min\{D(\theta,b'),b\} \leq (1-x(\theta,b'))b+x(\theta,b')D(\theta,b').
    \end{align*}
    
    Thus:
    \[
    L(b;\theta)=\min_{b'\in [0,p]} \; \; x(\theta,b')D(\theta,b')+(1-x(\theta,b'))b.
    \]
    
    Using the envelope theorem (see~\citealp{milgrom2002envelope} Theorem 2), it follows that: 
    \[
    L(b;\theta)=L(0;\theta)+\int_0^{b}(1-x(\theta,l))\,dl.
    \]
    
    Furthermore, since $0 \leq L(b;\theta) \leq  b$, we have $L(0;\theta)=0$. Thus:
    \[
    L(b;\theta)=\int_0^{b}(1-x(\theta,l))dl.
    \]
    
    We now show Condition~\ref{condition:monotonicity-contract} of Lemma~\ref{lemma:IC2-contract}. From~\ref{IC2}, for $p \geq b > b'$, we have:
    \begin{align*}
        (1-x(\theta,b))b+x(\theta,b)D(\theta,b) &\leq (1-x(\theta,b'))b+x(\theta,b')\min\{D(\theta,b'),b\}  \\
        &= (1-x(\theta,b'))b+x(\theta,b')D(b',\theta),\\
        (1-x(\theta,b'))b'+x(\theta,b')D(\theta,b') &\leq (1-x(\theta,b))b'+x(\theta,b)\min\{D(\theta,b),b'\}  \\
        &\leq (1-x(\theta,b))b'+x(\theta,b)D(b,\theta). 
    \end{align*}
    
    If we combine both inequalities we get:
    \begin{align*}
        (1-x(\theta,b))(b-b') &\leq (1-x(\theta,b'))(b-b') \\
        x(\theta,b') &\leq x(\theta,b).
    \end{align*}
\end{proof}

\begin{proof}[Proof of the Sufficiency part of Lemma~\ref{lemma:IC2-contract}]
    Suppose a contract satisfies Conditions~\ref{condition:envelope-contract} and~\ref{condition:monotonicity-contract} of Lemma~\ref{lemma:IC2-contract}. For $b,b' \in [0,p]$ with $b' \leq b$, we have:
    \begin{align*}
        L(b,b';\theta)&=x(b',\theta)D(b',\theta) + (1-x(b',\theta))b  \\
        &=L(b';\theta)+(b-b')(1-x(\theta,b'))\\
        &=\int_0^{b'}(1-x(\theta,l))\,dl+\int_{b'}^{b}(1-x(\theta,b'))\,dl \nonumber \\
        &\geq \int_0^{b'}(1-x(\theta,l))\,dl+\int_{b'}^{b}(1-x(\theta,l))\,dl \\
        &= L(b;\theta), 
    \end{align*}
    where the inequality follows from Condition~\ref{condition:monotonicity-contract}, and the last equality from Condition~\ref{condition:envelope-contract}. This shows~\ref{IC2} is satisfied for $b,b' \in [0,p]$ with $b' \leq b$.
    
    Similarly, for $b,b' \in [0,p]$ with $b \leq b'$, $L(b,b';\theta)$ can take two possible values. If $D(\theta,b')>b$, $L(b,b';\theta)=b \geq L(b;\theta)$ and~\ref{IC2} is trivially satisfied. If $C(\theta,b') \leq b$, we have:
    \begin{align*}
       L(b,b';\theta)&=L(b';\theta)-(b'-b)(1-x(\theta,b')) \nonumber \\
        &=\int_0^{b'}(1-x(\theta,l))\,dl-\int_{b}^{b'}(1-x(\theta,b'))\,dl \\
        &\geq \int_0^{b'}(1-x(\theta,l))\,dl-\int_{b}^{b'}(1-x(\theta,l))\,dl  \\
        &= L(b;\theta),
    \end{align*}
    This shows~\ref{IC2} is satisfied for $b,b' \in [0,p]$ with $b \leq b'$.

    Next, for $b>p$ and $b' \in [0,p]$, we have that $L(b,b';\theta)=L(p,b';\theta)$ and $L(p;\theta) = L(b;\theta)$. From earlier arguments we know that $L(p;\theta) \leq L(p,b';\theta)$. Then, $L(b;\theta) \leq L(b,b';\theta)$ and~\ref{IC2} is satisfied for $b>p$ and $b' \in [0,p]$.
    
    Finally, for $b \in [0,p]$ and $b'>p$, we have that $x(\theta,b')=1$ and $L(b';\theta)=L(p,\theta)$. Thus, $L(b,b';\theta)=\min\{L(p;\theta),b\}$. Furthermore, we know that $L(b;\theta) \leq b $ and that:
    \begin{align*}
        L(p;\theta) &= x(\theta,p)D(\theta,p)+(1-x(\theta,p))p\\
        &\geq x(\theta,p)D(\theta,p)+(1-x(\theta,p))b \\
        &= L(b,p;\theta) \\
        &\geq L(b,\theta),
    \end{align*}
    where the last inequality follows from earlier arguments. Therefore, $L(b,;\theta) \leq \min\{L(p;\theta),b\}=L(b,b';\theta)$ and~\ref{IC2} is satisfied for $b \in [0,p]$ and $b'>p$.
\end{proof}

\begin{proof}[Proof of Condition~\ref{condition:IC1-necessity} of Lemma~\ref{lemma:IC1-menu}] 
    Suppose that a menu satisfies~\ref{IC1}. Using the envelope theorem (see \citet{milgrom2002envelope} Theorem 2), the agent's certainty equivalent is given by:
    \[
        U(\theta)=U(\underline{\theta})+\int_{\underline{\theta}}^{\theta}\int_0^{p}(1-x(s,b))g'(H_s(b))\frac{\partial H_s(b)}{\partial s}\,db\,ds.
    \]
\end{proof}

\begin{proof}[Proof of Condition~\ref{condition:IC1-sufficiency} of Lemma~\ref{lemma:IC1-menu}]  
    Suppose that a menu satisfies Condition~\ref{condition:IC1-necessity} of Lemma~\ref{lemma:IC1-menu} and $x(\cdot,b):B \rightarrow [0,1]$ is a non-decreasing function for every $\theta$. Then, for any $\theta > \theta'$, we have that:
    \begin{align*}
        U(\theta,\theta')&=-t(\theta')-\int_{0}^{p}(1-x(\theta',b))[1-g(H_{\theta}(b))] \, db\\
        &= U(\theta') + \int_0^{p}(1-x(\theta',b))[1-g(H_{\theta'}(b))]\, db-\int_{0}^{\bar{b}}(1-x(\theta',b))[1-g(H_{\theta}(b))] \, db\\
        &= U(\theta')+\int_{\theta'}^{\theta}\int_{0}^{p}(1-x(\theta',b))g'(H_{s}(b))\frac{\partial H_{s}(b)}{\partial s}\, db \, ds \\
        &= U(\theta)+\int_{\theta'}^{\theta}\int_0^{p}[x(s,b)-x(\theta',b)]g'(H_s(b))\frac{\partial H_s(b)}{\partial s}\, db \, ds \\
        &\leq U(\theta),
    \end{align*}
    where the second equality follows from the definition of $U(\theta')$, while the fourth equality is a consequence of Condition~\ref{condition:envelope-contract}. The inequality stems from the assumptions that $x(\cdot,b)$ is a non-decreasing function, $g$ is increasing, and $\frac{\partial H_{\theta}(b)}{\partial \theta} \leq 0$ for all $b$. This last fact arises because $H_{\theta}$ increases in the sense of first-order stochastic dominance. An equivalent argument shows that same inequality holds when $\theta < \theta'$.
\end{proof}

\begin{proof}[Proof of Lemma~\ref{lemma:IR-menu}] 
    The necessity part follows trivially. In order to show the sufficiency part, suppose that a menu satisfies~\ref{IC1} and that~\ref{IR} holds for type $\underline{\theta}$. Additionally, we assume that we are in the \textit{endogenous case} (an equivalent proof applies to the \textit{exogenous case}) such that the certainty equivalent that type $\theta$ gets from not participating is:
    \begin{align*}
         U_{NP}(\theta)&=-\int_0^{\beta+p}[1-g(H_{\theta}(b))]\,db \\
         &= U_{NP}(\underline{\theta})+\int_{\underline{\theta}}^{\theta}\int_0^{\beta+p}g'(H_s(b))\frac{\partial H_s(b)}{\partial s}\,db\,ds.
    \end{align*}
    
    Then, by Lemma~\ref{lemma:IC1-menu} Condition~\ref{condition:IC1-necessity} we get that:
    \begin{align*}
    U(\theta)-U_{NP}(\theta)=&U(\underline{\theta})-U_{NP}(\underline{\theta})-\int_{\underline{\theta}}^{\theta}\int_0^px(s,b)g'(H_s(b))\frac{\partial H_s(b)}{\partial s}\,db\,ds\\
    &-\int_{\underline{\theta}}^{\theta}\int_p^{\beta+p}g'(H_s(b))\frac{\partial H_s(b)}{\partial s}\,db\,ds.
    \end{align*}
    
    By assumption $U(\underline{\theta})-U_{NP}(\underline{\theta}) \geq 0$. Furthermore, $x(\theta,b) \geq 0$, $g$ is increasing and $\frac{\partial H_{\theta}(b)}{\partial \theta} \leq 0$. Hence: $U(\theta)-U_{NP}(\theta) \geq 0$.
\end{proof}

    
    
    

\begin{proof}[Proof of the Necessity part of Lemma~\ref{lemma:feasible-prices}]
   Suppose that for some $x$ satisfying Lemma~\ref{lemma:IC2-contract}, Condition~\ref{condition:monotonicity-contract}, and some $p \in \mathbb{R}_+$, we have $\mathcal{D}(x;p) = S(p)$. 

    Since $\mathcal{D}(x;p) \leq \int_{\underline{\theta}}^{\bar{\theta}}[1-H_{\theta}(0)]f(\theta)\,d\theta$, $S\left(p^F\right) = \int_{\underline{\theta}}^{\bar{\theta}}[1-H_{\theta}(0)]f(\theta)\,d\theta$, and $S(p)$ is strictly increasing, it immediately follows that $p \leq p^F$.

    Recall that $\mathcal{D}(x;p) = S(p)$ can also be expressed as
    \[
        \int_{\underline{\theta}}^{\bar{\theta}} \int_{0}^{p} x(\theta,b) h_{\theta}(b) \, db \, f(\theta) \, d\theta = RS(p).
    \]
    The smallest achievable value on the left-hand side of the equation is zero. For the right-hand side, we know that $RS\left(p^N\right) = 0$, and our technical assumptions imply that $RS(p)$ is strictly increasing. Therefore, the equality cannot hold for any $p < p^N$. Thus, $p \geq p^N$.
\end{proof}

\begin{proof}[Proof of the Sufficiency part of Lemma~\ref{lemma:feasible-prices}]
    Suppose that $p \in \left(p^N, p^F\right)$, and for any $\alpha \in [0,1]$, define the following function:
    \[
        ED(\alpha;p) = \int_{\underline{\theta}}^{\bar{\theta}} \int_{0}^{p} \alpha h_{\theta}(b) \, db \, f(\theta) \, d\theta - RS(p).
    \]

    Note that $ED(\alpha; p)$ is continuous in $\alpha$ and satisfies $ED(0; p) < 0$ and $ED(1; p) > 0$. By the intermediate value theorem, there exists an $\alpha^* \in (0,1)$ such that $ED(\alpha^*; p) = 0$.

    Next, define $x(\theta, b) = \alpha^*$ for all $b \in [0, p]$. It follows that $x$ satisfies~\ref{IC2}, Condition~\ref{condition:monotonicity-contract}, and~\ref{EQ}.
\end{proof}

\section{Proofs of Results in Section~\ref{sec:insurer-problem}} \label{sec:proof-insurer-problem}
\begin{proof}[Proof of Lemma~\ref{lemma:objective}]
    If an agent has risk type $\theta$ and disease level $b$, the insurance provider pays $x(\theta,b)(p-D(\theta,b))$ to the hospital. Furthermore, we know that 
    \[
        x(\theta,b)D(\theta,b) = L(b;\theta) - (1 - x(\theta,b))b,
    \] 
    and by Lemma~\ref{lemma:IC2-contract}, we have:
    \[
    L(b;\theta) = \int_0^b (1 - x(\theta,l)) \, dl \; \text{for all } b \in [0,p], \; L(p;\theta) = L(b;\theta) \text{ for all } b > p, \; \text{and } x(\theta,b) = 1 \text{ for all } b > p.
    \]
    
    Thus, the insurer's cost of providing the contract for risk type $\theta$ simplifies as follows:
    \begin{align*}
        \int_0^{\bar{b}}x(\theta,b)(D(\theta,b)-p)h_{\theta}(b)\,db
        =&\int_{0}^{\bar{b}}x(\theta,b)(b-p)h_{\theta}(b)\,db+\int_0^{\bar{b}}(L(b;\theta)-b)h_{\theta}(b)\,db\\
        =&\int_{0}^{p}x(\theta,b)(b-p)h_{\theta}(b)\,db+\int_{p}^{\bar{b}}(b-p)h_{\theta}(b)\,db\\
        &-\int_0^{p}\left[\int_0^bx(\theta,l)\,dl\right]h_{\theta}(b)\,db+\int_{p}^{\bar{b}}(L(p;\theta)-b)h_{\theta}(b)\,db\\
        =&\int_{0}^{p}x(\theta,b)(b-p)h_{\theta}(b)\,db-p(1-H_{\theta}(p))\\
        &-\int_0^px(\theta,b)\,dbH_{\theta}(p)+\int_{0}^px(\theta,b)H_\theta(b)\,db+\int_p^{\bar{b}}L(p;\theta)h_{\theta}(b)\,db\\
        =&\int_{0}^{p}x(\theta,b)(b-p)h_{\theta}(b)\,db-p(1-H_{\theta}(p))\\
        &+(L(p;\theta)-p)H_{\theta}(p)+\int_{0}^px(\theta,b)H_\theta(b)\,db+L(p;\theta)(1-H_{\theta}(p))\\
        =&\int_{0}^{p}x(\theta,b)(b-p)h_{\theta}(b)\,db+\int_{0}^px(\theta,b)H_\theta(b)\,db+L(p;\theta)-p\\
        =&\int_{0}^{p}x(\theta,b)(b-p)h_{\theta}(b)\,db+\int_{0}^px(\theta,b)H_\theta(b)db-\int_0^px(\theta,b)\,db\\
        =&\int_{0}^{p}x(\theta,b)\left[b-p-\frac{1-H_\theta(b)}{h_{\theta}(b)}\right]h_{\theta}(b)\,db,
    \end{align*}
    where in the third equality we used integration by parts to express the term $\int_0^{p}\left[\int_0^bx(\theta,l)\,dl\right]h_{\theta}(b)\,db$ as:
    \[
        \int_0^px(\theta,b)\,dbH_{\theta}(p)-\int_0^px(\theta,b)H_{\theta}(b)\,db.
    \]
    
    Using Lemma~\ref{lemma:IC1-menu}, $t(\theta)$ can be expressed as:
    \begin{align*}
        t(\theta)=&-U(\underline{\theta})-\int_{0}^{p}(1-x(\theta,b))[1-g(H_{\theta}(b))]\,db-\int_{\underline{\theta}}^{\theta}\int_0^{p}(1-x(s,b))g'(H_s(b))\frac{\partial H_s(b)}{\partial s}\,db\,ds\\
        =&-U(\underline{\theta})-\int_{0}^{p}[1-g(H_{\underline{\theta}}(b))]\,db+\int_{0}^{p}x(\theta,b)[1-g(H_{\theta}(b))]\,db\\
        &+\int_{\underline{\theta}}^{\theta}\int_0^{p}x(s,b)g'(H_s(b))\frac{\partial H_s(b)}{\partial s}\,db\,ds.
    \end{align*}
    
    Integration by parts shows:
    \[
    \int_{\underline{\theta}}^{\bar{\theta}}\int_0^{p}\int_{\underline{\theta}}^{\theta}x(s,b)g'(H_s(b))\frac{\partial H_s(b)}{\partial s}\,ds\,dbf(\theta)\,d\theta=\int_{\underline{\theta}}^{\bar{\theta}}\int_0^{p}x(\theta,b)\frac{1-F(\theta)}{f(\theta)}g'(H_{\theta}(b))\frac{\partial H_{\theta}(b)}{\partial s}\,dbf(\theta)\,d\theta.
    \]
    
    Thus, we have that:
    \begin{align*}
        \int_{\underline{\theta}}^{\bar{\theta}}t(\theta)f(\theta)\,d\theta=&-U(\underline{\theta})-\int_{0}^{p}[1-g(H_{\underline{\theta}}(b))]\,db\\
        &+\int_{\underline{\theta}}^{\bar{\theta}}\int_0^{p}x(\theta,b)\left[1-g(H_{\theta}(b))
        +\frac{1-F(\theta)}{f(\theta)}g'(H_{\theta}(b))\frac{\partial H_{\theta}(b)}{\partial b}\right]\,dbf(\theta)\,d\theta.
    \end{align*}
    
    Then, the insurer's profits from offering a menu:
    \[
    \int_{\underline{\theta}}^{\bar{\theta}}\left[t(\theta)-\int_0^{\bar{b}}x(\theta,b)(p-D(\theta,b))\,dH_{\theta}(b)\right]f(\theta)\,d\theta,
    \]
    simplify to:
    \[
        \int_{\underline{\theta}}^{\bar{\theta}} \int_{0}^{p} x(\theta,b) J(\theta,b)h_{\theta}(b) \, db f(\theta) \, d\theta 
        - U(\underline{\theta}) - \int_{0}^{p} [1 - g(H_{\underline{\theta}}(b))] \, db,
    \]
    where:
    \[
        J(\theta,b) = \left[1 - g(H_{{\theta}}(b)) + \frac{1 - F(\theta)}{f(\theta)} g'(H_{\theta}(b)) \frac{\partial H_{\theta}(b)}{\partial \theta}\right]\frac{1}{h_{\theta}(b)} +b - p - \frac{1 - H_{\theta}(b)}{h_{\theta}(b)}.
    \]
\end{proof}

\section{Proofs of Results in Section~\ref{sec:optimal-transport}}\label{sec:proof-optimal-transport}
\begin{proof}[Proof of Lemma~\ref{lemma:optimal-menu}]
    Let $\pi'$ be any solution to the primal problem~\ref{OT}, whose existence is guaranteed by Theorem~\ref{thm:zero-duality}. Define $q_{\theta}$ as the expected value of $q$ under the conditional distribution $\pi'(\cdot \mid \theta)$, i.e., $q_{\theta} = \int_0^1 q \, d\pi^*(q \mid \theta)$.
    
    Since $q_{\theta}$ lies in the convex hull of $\supp \pi'(\cdot \mid \theta) \subset \mathbb{R}$, Carathéodory's theorem ensures the existence of a distribution $\pi''(\cdot \mid \theta)$ such that $\supp \pi''(\cdot \mid \theta)$ contains at most two elements, satisfies  $\supp \pi''(\cdot \mid \theta) \subseteq \supp \pi'(\cdot \mid \theta)$, and $q_{\theta} = \int_0^1 q \, d\pi''(q \mid \theta)$.
    
    Define $V$ as the set of risk types for which $\supp \pi''(\cdot \mid \theta)$ contains exactly two elements. That is, for any $\theta \in V$, we have $\supp \pi''(\cdot \mid \theta) = \{q_{1,\theta}, q_{2,\theta}\}$, where, without loss of generality, $q_{1,\theta} < q_{2,\theta}$, and:  
    \[
    q_{\theta} = q_{1,\theta} \pi''(q_{1,\theta} \mid \theta) + q_{2,\theta} (1 - \pi''(q_{1,\theta} \mid \theta)).
    \]  

    Next, define:
    \[
    q_V = \int_{V} [H_{\theta}(p) - H_{\theta}(0)] q_{\theta} f(\theta) \, d\theta.
    \]
    
    Since: 
    \[
    q_V \in \left( \int_{V} [H_{\theta}(p) - H_{\theta}(0)] q_{1,\theta} f(\theta) \, d\theta, \int_{V} [H_{\theta}(p) - H_{\theta}(0)] q_{2,\theta} f(\theta) \, d\theta \right),
    \]  
    there exists some $\alpha \in (0,1)$ such that: 
    \[
    q_V = \alpha \int_{V} [H_{\theta}(p) - H_{\theta}(0)] q_{1,\theta} f(\theta) \, d\theta + (1 - \alpha) \int_{V} [H_{\theta}(p) - H_{\theta}(0)] q_{2,\theta} f(\theta) \, d\theta.
    \]

    Define a new distribution $\pi^*$ as follows: For all $\theta \in V$, set:  
    \[
        \pi^*(\theta, q_{1,\theta}) = \alpha f(\theta), \quad \pi^*(\theta, q_{2,\theta}) = (1 - \alpha) f(\theta),
    \]
  and $\pi^*(\theta, q) = 0$ for all other $q$. For all $\theta \in V^c$, set  
  \[
  \pi^*(\theta, q_{\theta}) = f(\theta),
  \]
  and $\pi^*(\theta, q) = 0$ for all other $q$.  

    By construction, $\pi^*$ satisfies condition~\ref{BP}. Moreover,  
    \begin{align*}
    \int_{\underline{\theta}}^{\bar{\theta}} \int_0^1 [H_{\theta}(p) - H_{\theta}(0)] q \, d\pi^*(\theta,q)
    &= q_V + \int_{V^c} [H_{\theta}(p) - H_{\theta}(0)] q_{\theta} f(\theta) \, d\theta \\
    &= \int_{\underline{\theta}}^{\bar{\theta}} \int_0^1 [H_{\theta}(p) - H_{\theta}(0)] q_{\theta} f(\theta) \, d\theta \\
    &= \int_{\underline{\theta}}^{\bar{\theta}} \int_0^1 [H_{\theta}(p) - H_{\theta}(0)] q \, d\pi'(\theta,q) \\
    &= RS(p).
    \end{align*}  
    Thus, $\pi^*$ also satisfies condition~\ref{EQ}.  

    Finally, by Theorem~\ref{thm:zero-duality}, there exists $(v^*,\lambda^*)$ solving the dual problem~\ref{D}. By Corollary~\ref{coro:complementary-slackness}, for $f$-almost every $\theta$, the distribution $\pi'(\cdot \mid \theta)$ is supported on~\ref{supp}.
    Since $\supp \pi^*(\cdot \mid \theta) \subseteq \supp \pi'(\cdot \mid \theta)$, it follows that $f$-almost every $\theta$ also has $\pi^*(\cdot \mid \theta)$ supported~\ref{supp}. Using Corollary~\ref{coro:complementary-slackness} again, we conclude that $\pi^*$ is a solution to the primal problem.
\end{proof}

\begin{proof}[Proof of Lemma~\ref{lemma:simple-contract}]
    Fix a risk type $\theta$ and suppose that $\frac{\partial \phi(\theta, q)}{\partial q} \leq 0$ for all $q$. This assumption implies that the objective in~\ref{supp} is concave in $q$.  

    Let $\pi$ be any solution to the primal problem~\ref{OT}, whose existence is guaranteed by Theorem~\ref{thm:zero-duality}. By Corollary~\ref{coro:complementary-slackness}, we know that $\pi(\cdot \mid \theta)$ is supported on~\ref{supp}. Moreover, by the concavity of the objective in~\ref{supp}, it follows that the average $q$ induced by $\pi(\cdot \mid \theta)$ also belongs to~\ref{supp}. Consequently, for the measure $\pi^*$ constructed in the proof of Lemma~\ref{lemma:optimal-menu}, the support of $\pi^*(\cdot \mid \theta)$ contains only a single element.
\end{proof}

\begin{proof}[Proof of Lemma~\ref{lemma:implementable-menu}]
    Suppose that $\frac{\partial \phi(\theta,q)}{\partial \theta} > 0$ for all $q$, and let $\pi^*$ be the solution to the primal problem~\ref{OT}.  

    By Corollary~\ref{coro:complementary-slackness}, we know that $\pi^*(\cdot \mid \theta)$ is supported on~\ref{supp}. Furthermore, multiplying the objective on~\ref{supp} by $\frac{1}{H_{\theta}(p) - H_{\theta}(0)}$—which is independent of $q$—does not change the set of maximizers. Thus, $\pi^*(\cdot \mid \theta)$ is also supported on :
    \[
        \arg\max_{q \in [0,1]} \quad \frac{\Phi(\theta,q)}{H_{\theta}(p) - H_{\theta}(0)} + \lambda^*\left(q - \frac{RS(p)}{H_{\theta}(p) - H_{\theta}(0)}\right).
    \]  
    The derivative of this objective with respect to $q$ is:  
    \[
    \phi(\theta,q) + \lambda^*,
    \]  
    and its derivative with respect to $\theta$ is:  
    \[
        \frac{\partial \phi(\theta,q)}{\partial \theta},
    \]  
    which, by assumption, is positive. Therefore, the objective is supermodular, and by Topkis’s monotonicity theorem, the desired result follows.
\end{proof}

\section{Proof of Results in Section~\ref{sec:second-stage}}\label{sec:proof-second-stage}
\begin{proof}[Proof of Theorem~\ref{thm:envelope}]
To prove the result, we will show that our setting satisfies all the assumptions needed to apply Theorem 4 of \cite{milgrom2002envelope}. For this purpose, let $\mathcal{M}^+(\Theta \times [0,1])$ denote the set of positive measures over $\Theta \times [0,1]$, and let $L^1(f)$ denote the set of all measurable functions $v:\Theta \to \mathbb{R}$ that satisfy:
\[
\int_{\underline{\theta}}^{\bar{\theta}} \lvert v(\theta) \lvert f(\theta)\;d\theta < \infty.
\]

Then, define the function $\mathcal{L} : \mathcal{M}^+(\Theta \times [0,1]) \times (L^1(f) \times \mathbb{R}) \times [p^N,p^F] \to \mathbb{R}$ by:
\[
\mathcal{L}(\pi, (v,\lambda), p) = \int_{\underline{\theta}}^{\bar{\theta}} \int_0^1 \Phi(\theta,q)\, d\pi(\theta,q)
+ \int_{\underline{\theta}}^{\bar{\theta}} v(\theta) \Bigl[f(\theta)\, d\theta - \int_0^1 d\pi(\theta,q)\Bigr]
+ \lambda\, ED(\pi,p),
\]
where:
\[
ED(\pi,p) = \int_{\underline{\theta}}^{\bar{\theta}} \int_0^1 [H_{\theta}(p) - H_{\theta}(0)]\, q\, d\pi(\theta,q) - RS(p).
\]

Fix any $p \in [p^N,p^F]$ and let $\pi^*$ be a solution to the primal problem~\ref{OT} and $(v^*, \lambda^*)$ be a solution to the dual problem~\ref{D}. By Lemma~\ref{lemma:zero-duality}, $(\pi^*, (v^*, \lambda^*))$ is a saddle point of $\mathcal{L}$:
\[
\mathcal{L}(\pi, (v^*, \lambda^*), p) \leq \mathcal{L}(\pi^*, (v^*, \lambda^*), p)=\Pi(p) \leq \mathcal{L}(\pi^*, (v, \lambda), p),
\]
for all $\pi \in \mathcal{M}^+(\Theta \times [0,1])$ and $(v,\lambda) \in L^1(f) \times \mathbb{R}$.

It is easy to show that there exist finite positive measures $\overline{\pi}$ and $\underline{\pi}$ such that $ED(\overline{\pi},p) > 0$ and $ED(\underline{\pi},p) < 0$ for all $p \in [p^N,p^F]$. Since $(\pi^*, (v^*, \lambda^*))$ is a saddle point, it follows that:
\[
\Pi(p) 
\geq \mathcal{L}(\overline{\pi}, (v^*, \lambda^*), p) = \int_{\underline{\theta}}^{\bar{\theta}} \int_0^1 \Phi(\theta,q)\, d\overline{\pi}(\theta,q)
+ \int_{\underline{\theta}}^{\bar{\theta}} v(\theta) \Bigl[f(\theta)\, d\theta - \int_0^1 d\overline{\pi}(\theta,q)\Bigr]
+ \lambda^*\, ED(\overline{\pi},p).
\]

Hence:
\[
\lambda^* \leq \overline{\lambda} = \sup_{p \in [p^N,p^F]} \frac{\Pi(p)
- \int_{\underline{\theta}}^{\bar{\theta}} \int_0^1 \Phi(\theta,q)\, d\overline{\pi}(\theta,q)
- \int_{\underline{\theta}}^{\bar{\theta}} v^*(\theta) \Bigl[f(\theta)\, d\theta - \int_0^1 d\overline{\pi}(\theta,q)\Bigr]}
{ED(\overline{\pi},p)}.
\]

Since the numerator is bounded and the denominator is bounded away from zero by the definition of $\bar{\pi}$ and continuity of $ED(\overline{\pi}, \cdot)$, we conclude that $\lambda^*$ is bounded above by $\overline{\lambda} < \infty$. A similar argument with $\underline{\pi}$ shows that $\lambda^*$ is bounded below by some $\underline{\lambda} > -\infty$. Thus, we can assume $\lambda \in [\underline{\lambda}, \overline{\lambda}]$ and $\lvert \lambda \lvert \leq M$ for some $M< \infty$.

Additionally, since $\pi^* \in \Delta(\Theta \times [0,1])$, $(\pi^*, (v^*, \lambda^*))$ is a saddle point even when we restrict to probability measures. Therefore, from now on we restrict to $\pi \in \Delta(\Theta \times [0,1])$.

Next, by our technical assumptions, $\mathcal{L}(\pi, (v,\lambda), \cdot)$ is differentiable with derivative:
\[
\frac{\partial \mathcal{L}(\pi, (v,\lambda), p)}{\partial p} = \mathcal{L}_p(\pi, (v,\lambda), p)
= \int_{\underline{\theta}}^{\bar{\theta}} \int_0^1 \frac{\partial \Phi(\theta,q)}{\partial p}\, d\pi(\theta,q)
- \lambda \Bigl(S'(p) + \int_{\underline{\theta}}^{\bar{\theta}} \int_0^1 h_{\theta}(p) (1 - q)\, d\pi(\theta,q)\Bigr).
\]

Recall:
\[
\Phi(\theta,q) = [H_{\theta}(p) - H_{\theta}(0)] \int_0^q \phi(\theta,s)\, ds,
\]
where $\phi(\theta,q) = J(\theta, b_{\theta}(q))$ and:
\[
b_{\theta}(q) = H_{\theta}^{-1}\Bigl((1 - q)[H_{\theta}(p) - H_{\theta}(0)] + H_{\theta}(0)\Bigr).
\]

A careful calculation yields:
\begin{align*}
    \frac{\partial \Phi(\theta,q)}{\partial p}&= h_{\theta}(p) \int_0^q J(\theta, b_{\theta}(s))\, ds+[H_{\theta}(p)-H_{\theta}(0)]\int_0^q\frac{\partial J(\theta,b_{\theta}(s))}{\partial b}\frac{\partial b_{\theta}(s)}{\partial p}\,ds - [H_{\theta}(p) - H_{\theta}(0)] q \\
    &= h_{\theta}(p) \int_0^q J(\theta, b_{\theta}(s))\, ds-h_{\theta}(p)\int_0^q(1-s)\frac{\partial J(\theta,b_{\theta}(s))}{\partial b}\frac{\partial b_{\theta}(s)}{\partial s}\,ds - [H_{\theta}(p) - H_{\theta}(0)] q\\
    &=h_{\theta}(p)[J(\theta,p)-(1-q)J(\theta,q)]- [H_{\theta}(p) - H_{\theta}(0)] q.
\end{align*}

Then:
\[
\mathcal{L}_p(\pi, (v,\lambda), p) = \int_{\underline{\theta}}^{\bar{\theta}} \int_0^1 T(\theta,q,p)\, d\pi(\theta,q)
- \lambda \int_{\underline{\theta}}^{\bar{\theta}} \int_0^1 Y(\theta,q,p)\, d\pi(\theta,q),
\]
where:
\[
T(\theta,q,p) = h_{\theta}(p)\Bigl[J(\theta,p) - (1 - q) J(\theta,b_{\theta}(q))\Bigr] - [H_{\theta}(p) - H_{\theta}(0)] q,
\]
\[
Y(\theta,q,p) = S'(p) + h_{\theta}(p)(1 - q).
\]

By our technical assumptions, $T$ and $Y$ are continuous functions. Moreover, since both functions are defined on the compact domain $\Theta \times [0,1] \times \left[p^N, p^F\right]$, they are uniformly continuous. Hence, for any $\varepsilon > 0$, there exists $\delta > 0$ such that:
\[
\int_{\underline{\theta}}^{\bar{\theta}} \int_0^1 |T(\theta,q,p) - T(\theta,q,p')|\, d\pi(\theta,q) \leq \sup_{(\theta,q) \in \Theta \times [0,1]} \lvert T(\theta,q,p)-T(\theta,q,p')\lvert < \varepsilon,
\]
\[
|\lambda| \int_{\underline{\theta}}^{\bar{\theta}} \int_0^1 |Y(\theta,q,p) - Y(\theta,q,p')|\, d\pi(\theta,q) \leq \sup_{(\theta,q) \in \Theta \times [0,1]} M\lvert Y(\theta,q,p)-Y(\theta,q,p')\lvert) < \varepsilon,
\]
for all $|p - p'| < \delta$. Therefore, $\mathcal{L}_p(\pi, (v,\lambda), \cdot)$ is uniformly continuous and the family $\{\mathcal{L}_p(\pi, (v,\lambda), \cdot)\}_{(\pi,(v,\lambda))}$ is uniformly equicontinuous. Hence, $\{\mathcal{L}(\pi, (v,\lambda), \cdot)\}_{(\pi,(v,\lambda))}$ is equidifferentiable.

Since $\Delta(\Theta \times [0,1])$ and $L^1(f) \times [\underline{\lambda}, \overline{\lambda}]$ are second-countable topological spaces, all assumptions of Theorem 4 of \cite{milgrom2002envelope} are satisfied. Thus:
\[
\frac{\partial \Pi(p)}{\partial p} = \Pi'(p) = \mathcal{L}_p(\pi^*, (v^*, \lambda^*), p).
\]

To conclude the proof, we simplify $\mathcal{L}_p$ and express it in terms of $x^*$. Since $\pi^*$ satisfies~\ref{BP}, $\pi^*(\cdot \mid \theta) \equiv \frac{\pi^*(\theta,\cdot)}{f(\theta)}:[0,1] \to [0,1]$ is a well-defined cumulative distribution function. Moreover, by Corollary~\ref{coro:complementary-slackness}, any $q \in (0,1)$ in the support of $\pi^*(\cdot \mid \theta)$ must satisfy the first-order condition:
\[
J(\theta, b_{\theta}(q)) + \lambda^* = 0.
\]

Then:
\begin{align*}
\mathcal{L}_p(\pi^*, (v^*, \lambda^*), p) =& \int_{\underline{\theta}}^{\bar{\theta}}\left[(1 - \pi^*(0 \mid \theta)) J(\theta,p) h_{\theta}(p)-[H_{\theta}(p) - H_{\theta}(0)]\int_0^1 (1 - \pi^*(q \mid \theta))\, dq\right]\, f(\theta)\, d\theta\\
&- \lambda^* \Bigl(S'(p) + \int_{\underline{\theta}}^{\bar{\theta}} h_{\theta}(p) \pi^*(0 \mid \theta) f(\theta)\, d\theta\Bigr).
\end{align*}

Recall:
\[
x^*(\theta, b_{\theta}(q)) = 1 - \pi^*(q \mid \theta).
\]

Then, changing variables $b = b_{\theta}(q)$, we get:
\begin{align*}
\mathcal{L}_p(\pi^*, (v^*, \lambda^*), p) =& \int_{\underline{\theta}}^{\bar{\theta}} x^*(\theta,p) J(\theta,p) h_{\theta}(p) f(\theta)\, d\theta
- \int_{\underline{\theta}}^{\bar{\theta}} \int_0^p x^*(\theta,b) h_{\theta}(b)\, db\, f(\theta)\, d\theta\\
&- \lambda^* \Bigl(S'(p) + \int_{\underline{\theta}}^{\bar{\theta}} h_{\theta}(p) [1 - x^*(\theta,p)] f(\theta)\, d\theta\Bigr).
\end{align*}
\end{proof}

\begin{proof}[Proof of Corollary~\ref{coro:endo-exo}]
Let $\beta>\beta'$ and $p_{\beta}$ be a solution to the second-stage problem~\ref{P2} under $\beta$, and let $p_{\beta'}$ denote the highest price that solves~\ref{P2} under $\beta'$. By Corollary~\ref{coro:maximum-price}, $p_{\beta}$ exists. Moreover, since the set of maximizers in the second-stage problem~\ref{P2} is nonempty and compact (also by Corollary~\ref{coro:maximum-price}), $p_{\beta'}$ is well defined.

Suppose, for the sake of contradiction, that $p_{\beta} > p_{\beta'}$. Let $\Pi_{\beta}$ and $\Pi_{\beta'}$ denote the insurer's profit functions under $\beta$ and $\beta'$, respectively. Then,
\[
\Pi_{\beta}(p_{\beta}) - \Pi_{\beta}(p_{\beta'}) = \int_{p_{\beta'}}^{p_{\beta}} \Pi_{\beta}'(p)\,dp \geq 0,
\]
since $p_{\beta}$ is a maximizer of $\Pi_{\beta}$.

However, by Theorem~\ref{thm:envelope}, we know that $\Pi_{\beta}'(p) \leq \Pi'_{\beta'}(p)$ for all $p$. This inequality holds because the only difference between the two derivatives is:
\[
-g(H_{\underline{\theta}}(p+\beta) \leq -g(H_{\underline{\theta}}(p+\beta').
\]

Therefore, we obtain:
\[
\Pi_{\beta'}(p_{\beta}) - \Pi_{\beta'}(p_{\beta'}) = \int_{p_{\beta'}}^{p_{\beta}} \Pi_{\beta'}'(p)\,dp \geq \int_{p_{\beta'}}^{p_{\beta}} \Pi_{\beta}'(p)\,dp \geq 0,
\]
which contradicts the fact that $p_{\beta'}$ is the highest maximizer of $\Pi_{\beta'}$. 
\end{proof}





\begin{proof}[Proof of Proposition~\ref{prop:comparative-statics}]
    We begin by recalling key properties that any optimal multiplier $\lambda^{*}(p)$ and optimal deductible $D(\theta;p)$ must satisfy. Define:
    \[
        \hat{J}(\theta,b) = J(\theta,b) + p,
    \]
    which is independent of $p$; since $J(\theta,\cdot)$ is strictly increasing, so is $\hat{J}(\theta,\cdot)$.

    Let:
    \[
        q(\theta;p) = 1 - H_{\theta}(D(\theta;p) \mid p).
    \]
    By Complementary Slackness (Corollary~\ref{coro:complementary-slackness}), every optimal $q(\theta;p)$ lies in the set~\ref{supp} and must satisfy:
   \begin{itemize}
        \item \textbf{Interior case} ($q(\theta;p) \in (0,1)$, i.e., $D(\theta;p) \in (0,p)$):
            \[
                \phi(\theta,q(\theta;p)) = J(\theta,b_{\theta}(q(\theta;p))) = \hat{J}(\theta,D(\theta;p)) - p = -\lambda^{*}(p).
            \]
        \item \textbf{Upper boundary} ($q(\theta;p) = 0$, i.e., $D(\theta;p) = p$):
            \[
                \hat{J}(\theta,p) \le -\lambda^{*}(p) + p.
            \]
        \item \textbf{Lower boundary} ($q(\theta;p) = 1$, i.e., $D(\theta;p) = 0$):
            \[
                \hat{J}(\theta,0) \ge -\lambda^{*}(p) + p.
            \]
    \end{itemize}
    Since $\hat{J}(\theta,\cdot)$ is strictly increasing and continuous, it is invertible. Therefore, we can define the function:
    \[
    D\left(\theta;\left(p,\hat{\lambda}\right)\right)=\min\left\{p,\max\left\{0,\hat{J}^{-1}\left(\theta,-\hat{\lambda}\right)\right\}\right\}.
    \]
    
    Then, from the first-order conditions stated above, it is easy to see that the optimal deductible for risk type $\theta$ is given by:
        \[
            D(\theta;p) =D\left(\theta;\left(p,\hat{\lambda}^*(p)\right)\right),
        \]
    where $\hat{\lambda}^{*}(p) = \lambda^{*}(p) - p$.
    
    \medskip
    
    \noindent\textbf{Step 1: Monotonicity of the adjusted multiplier.}
    We will prove that $\hat{\lambda}^{*}(\cdot)$ is strictly increasing. Assume, for contradiction, that $\hat{\lambda}^{*}(p') \le \hat{\lambda}^{*}(p)$ for some $p' > p$.
    
    \begin{enumerate}
    \item \textbf{Deductibles must be weakly larger at the higher price.}
   This immediately follows from the fact that $D\left(\theta; \left(p, \cdot \right)\right)$ is non-increasing, since the inverse function $\hat{J}^{-1}(\theta, \cdot)$ is strictly increasing, and from the fact that $D\left(\theta; \left(\cdot, \hat{\lambda} \right)\right)$ is non-decreasing. Thus:
    \[
    D(\theta;p)=D\left(\theta;\left(p,\hat{\lambda}^*(p)\right)\right) \leq D\left(\theta;\left(p',\hat{\lambda}^*(p')\right)\right)=D(\theta;p').
    \]
    \item \textbf{Demand–supply contradiction.}

    Since $D(\theta;p') \ge D(\theta;p)$, we have:
    \[
        \mathcal{D}(p') = \int_{\underline{\theta}}^{\bar{\theta}} \left[1 - H_{\theta}(D(\theta;p'))\right]f(\theta)\,d\theta \le \int_{\underline{\theta}}^{\bar{\theta}} \left[1 - H_{\theta}(D(\theta;p))\right]f(\theta)\,d\theta = \mathcal{D}(p).
    \]
    But supply is strictly increasing, so $S(p') > S(p)$. Since the market clears at $p$ (i.e., $\mathcal{D}(p) = S(p)$), we get:
    \[
        S(p') > \mathcal{D}(p'),
    \]
    which contradicts market clearing.
    \end{enumerate}
    
    Thus, $\hat{\lambda}^{*}(p') > \hat{\lambda}^{*}(p)$. Since $\lambda^{*}(p') = \hat{\lambda}^{*}(p') + p'$, we conclude:
    \[
        \lambda^{*}(p') - \lambda^{*}(p) = \hat{\lambda}^{*}(p') - \hat{\lambda}^{*}(p) + (p' - p) > p' - p > 0,
    \]
    so $\lambda^{*}(\cdot)$ is also strictly increasing.

    \medskip

    \noindent\textbf{Step 2: Monotonicity of the deductible.}
    \begin{itemize}
        \item \textbf{Interior case} ($0 < D(\theta;p) < p$):
        In this case:
        \[
        D(\theta;p)=D\left(\theta;\left(p,\hat{\lambda}^*(p)\right)\right)=\hat{J}^{-1}\left(\theta,-\hat{\lambda}^*(p)\right).
        \]
        
        Since $\hat{J}^{-1}(\theta, \cdot)$ is strictly increasing, it follows that:
        \[
        \hat{J}^{-1}\left(\theta,-\hat{\lambda}^*(p')\right)<\hat{J}^{-1}\left(\theta,-\hat{\lambda}^*(p)\right),
        \]
        which leads to:
        \[
        D(\theta;p')=D\left(\theta;\left(p,\hat{\lambda}^*(p')\right)\right)=\hat{J}^{-1}\left(\theta,-\hat{\lambda}^*(p')\right)<D(\theta;p).
        \]

        \item \textbf{Boundary case} ($D(\theta;p) = 0$):
        In this case:
        \[
           \hat{J}^{-1}\left(\theta,-\hat{\lambda}^*(p')\right)<\hat{J}^{-1}\left(\theta,-\hat{\lambda}^*(p)\right)\leq0,
        \]
        which allows us to conclude that $D(\theta;p') = 0$.
    \end{itemize}
    
    Hence, $D(\theta;\cdot)$ is strictly decreasing whenever $D(\theta;p) \in (0,p)$, and weakly decreasing when $D(\theta;p) = 0$.
        
    \medskip

    \noindent\textbf{Step 3: Uniqueness of the multiplier.}
    Define the following demand function, which depends not only on $p$ but also on $\hat{\lambda}$:
    \[
        \mathcal{D}\left(p,\hat{\lambda}\right) = \int_{\underline{\theta}}^{\bar{\theta}} \left[1 - H_{\theta}\left(D\left(\theta; (p, \hat{\lambda})\right)\right)\right] f(\theta)\, d\theta.
    \]
    Since $H_{\theta}(\cdot)$ is strictly increasing, it follows that $\mathcal{D}(p, \cdot)$ is non-decreasing. Moreover, whenever $\hat{\lambda} > \hat{\lambda}'$ and there exists a set of types with positive measure such that $D\left(\theta; (p, \hat{\lambda})\right) < D\left(\theta; (p, \hat{\lambda}')\right)$, we have $\mathcal{D}\left(p, \hat{\lambda}\right) > \mathcal{D}\left(p, \hat{\lambda}'\right)$.

    Now fix $p \in (p^N, p^F)$ and suppose that $\hat{\lambda} > \hat{\lambda}'$. Since $\hat{J}^{-1}(\cdot, -\hat{\lambda})$ is continuous, two cases arise: Either, for all $\theta$, $D\left(\theta; (p, \hat{\lambda})\right) = D\left(\theta; (p, \hat{\lambda}')\right) = 0$ (or equal to $p$), in which case $\mathcal{D}(p, \hat{\lambda}) \ne S(p)$, so the market does not clear. Or there exists a positive mass of types for which $D\left(\theta; (p, \hat{\lambda})\right) < D\left(\theta; (p, \hat{\lambda}')\right)$, in which case $\mathcal{D}(p, \hat{\lambda}) > \mathcal{D}(p, \hat{\lambda}')$.

    Together with the continuity of $\mathcal{D}(p, \cdot)$, this implies that if there exists a multiplier $\hat{\lambda}$ such that the market clears at price $p$, then this multiplier must be unique. Since we know that at $\hat{\lambda}^*(p)$, the market clears, i.e., $\mathcal{D}\left(p, \hat{\lambda}^*(p)\right) = S(p)$, we conclude that $\hat{\lambda}^*(p)$ is unique.

    \medskip

    \noindent\textbf{Step 4: Continuity of the multiplier.}
    Let $p_0 \in \left(p^N, p^F\right)$, and suppose, towards a contradiction, that $\hat{\lambda}^*$ is discontinuous from the left at $p_0$:
    \[
        \hat{\lambda}^-(p_0) := \lim_{p \to p_0^-} \hat{\lambda}^*(p) < \hat{\lambda}^*(p_0).
    \]

    Since the function $\mathcal{D}$ (as defined in Step 3) is continuous in all its arguments, and $S$ is also continuous, it follows that:
    \[
        \lim_{p \to p_0^-} \left[ \mathcal{D}(p, \hat{\lambda}^*(p)) - S(p) \right] = \mathcal{D}\left(p_0, \hat{\lambda}^-(p_0)\right) - S(p_0).
    \]

    But since $\mathcal{D}(p, \hat{\lambda}^*(p)) = S(p)$ for all $p$, the left-hand side is zero. Therefore,
    \[
    \mathcal{D}(p_0, \hat{\lambda}^-(p_0)) =    S(p_0),
    \]
    which contradicts the uniqueness of $\hat{\lambda}^*(p_0)$ established in Step 3.

    A similar argument applies to rule out discontinuity from the right.

    Finally, since $\hat{\lambda}(\cdot)$ is continuous, it follows that $\lambda(\cdot)$ is also continuous.
    \medskip

    \noindent\textbf{Step 5: Continuity of the deductibles.}
    The continuity of $D(\theta; \cdot)$ follows immediately from the fact that $D(\theta; (p, \cdot))$ is continuous and that $\hat{\lambda}(\cdot)$ is continuous.
\end{proof}

\begin{proof}[Proof of Lemma~\ref{lemma:price-taking}]
    We have three possibilities:
    \begin{enumerate}
        \item $\lambda^*(p) \geq 0$ for all $p$. In this case, since $\lambda^*(\cdot)$ is strictly increasing (see Proposition~\ref{prop:comparative-statics}), it follows that $\lambda^*(p) > 0$ for all $p > p^N$, and the unique price-taking benchmark is $p^N$. Indeed, at $p^N$, the optimal deductible for all types is $D(\theta; p^N) = p^N$. Then, by the first-order conditions (see the proof of Proposition~\ref{prop:comparative-statics}), we have:
        \[
            \hat{J}(\theta, p^N) \leq -\lambda^*(p^N) + p^N \leq 0 + p^N,
        \]
        so $\lambda^*(p^N) = 0$ is also an optimal multiplier for price $p^N$.

        \item $\lambda^*(p) \leq 0$ for all $p$. In this case, $\lambda^*(p) < 0$ for all $p < p^F$, and the unique price-taking benchmark is $p^F$. Indeed, at $p^F$, the optimal deductible for all types is $D(\theta; p^F) = 0$. Then, by the first-order conditions, we have:
        \[
            \hat{J}(\theta, 0) \geq -\lambda^*(p^F) + p^F \geq 0 + p^F,
        \]
        it follows that $\lambda^*(p^F) = 0$ is also an optimal multiplier for price $p^F$.

        \item $\lambda^*(p^N) < 0$ and $\lambda^*(p^F) > 0$. In this case, since $\lambda^*(\cdot)$ is strictly increasing and continuous (see Proposition~\ref{prop:comparative-statics}), the Intermediate Value Theorem implies that there exists a unique $p^U \in (p^N, p^F)$ such that $\lambda^*(p^U) = 0$.
\end{enumerate}
\end{proof}

\begin{proof}[Proof of Proposition~\ref{prop:price-taking}]
    The desired result follows immediately by showing that for all $p < (>)\ p^U$, we have $\Pi'\left(p^U\right) < (>)\ \Pi'(p)$.

    Suppose $p < p^U$. Then, by Proposition~\ref{prop:price-taking}, we know that $\lambda^*(p) < \lambda^*\left(p^U\right)$ and $\hat{\lambda}^*(p) < \hat{\lambda}^*\left(p^U\right)$, where $\hat{\lambda}^*(p) = \lambda^*(p) - p$. Moreover, $D(\theta; p) > D\left(\theta; p^U\right)$ whenever $D(\theta; p) \in (0, p)$, and $D(\theta; p) = D\left(\theta; p^U\right)$ whenever $D(\theta; p) = 0$.

    This leads to the following cases:
    \begin{enumerate}
        \item $D(\theta;p) \in [0,p)$ and $D(\theta;p) \geq D\left(\theta;p^U\right)$:
        \begin{align*}
            \mathbf{1}_{\left[D(\theta;p)<p\right]}\left[J_1(\theta,p)-(1-H_{\theta}(D(\theta;p))\right]&=J_1(\theta,p)-(1-H_{\theta}(D(\theta;p)) \\
            &\geq J_1\left(\theta,p^U\right)-\left(1-H_{\theta}\left(D\left(\theta,p^U\right)\right)\right) \\
            &=\mathbf{1}_{\left[D\left(\theta;p^U\right)<p^U\right]}\left[J_1\left(\theta,p^U\right)-\left(1-H_{\theta}\left(D\left(\theta,p^U\right)\right)\right)\right],
        \end{align*}
        where the inequality follows from the fact that $J_1(\theta,\cdot)$ is non-increasing and that $H_{\theta}(D(\theta;p)) \geq H_{\theta}\left(D(\theta; p^U)\right)$.
        \item  $D(\theta;p)=p$ and $D\left(\theta;p^U\right)=p^U$:
        \begin{align*}
            \mathbf{1}_{\left[D(\theta;p)=p\right]}[-\lambda^*(p)h_{\theta}(p)]&=-\lambda^*(p)h_{\theta}(p) \\
            &>0 \\
            &=-\lambda^*\left(p^U\right)h_{\theta}\left(p^U\right) \\
            &=\mathbf{1}_{\left[D\left(\theta;p^U\right)=p^U\right]}\left[\lambda^*\left(p^U\right)h_{\theta}\left(p^U\right)\right],
        \end{align*}
        where the inequality follows from the fact that $\lambda^*\left(p^U\right) = 0 > \lambda^*(p)$.
        \item $D(\theta;p)=p$ and $D\left(\theta;p^U\right)<p$:
        \begin{align*}
            \mathbf{1}_{\left[D(\theta;p)=p\right]}[-\lambda^*(p)h_{\theta}(p)]&=-\lambda^*(p)h_{\theta}(p) \\
            &\geq J_1(\theta,p)-(1-H_{\theta}(p)) \\
            &\geq J_1\left(\theta,p^U\right)-\left(1-H_{\theta}\left(D\left(\theta,p^U\right)\right)\right) \\
            &=\mathbf{1}_{\left[D\left(\theta;p^U\right)<p^U\right]}\left[J_1\left(\theta,p^U\right)-\left(1-H_{\theta}\left(D\left(\theta,p^U\right)\right)\right)\right],
        \end{align*}
        where the first inequality follows from the fact that $D(\theta; p) = p$ must satisfy the first-order condition:
        \[
            J(\theta, p) = \frac{J_1(\theta, p) - \left(1 - H_{\theta}(p)\right)}{h_{\theta}(p)} \leq -\lambda^*(p),
        \]
        and the second inequality follows from the fact that $J_1(\theta; \cdot)$ is non-increasing and $H_{\theta}(p) \geq H_{\theta}\left(D\left(\theta, p^U\right)\right)$.
        \item $D(\theta;p)=p$ and $D\left(\theta;p^U\right) \in \left(p,p^U\right)$:
            \begin{align*}
            \mathbf{1}_{\left[D(\theta;p)=p\right]}[-\lambda^*(p)h_{\theta}(p)]&=-\lambda^*(p)h_{\theta}(p) \\
            &\geq -\lambda^*(p)h_{\theta}\left(D\left(\theta,p^U\right)\right) \\
            &\geq -\lambda^*(p)h_{\theta}\left(D\left(\theta,p^U\right)\right)-\left(D\left(\theta,p^U\right)-p\right)h_{\theta}\left(D\left(\theta,p^U\right)\right)\\
            &=\left[-\hat{\lambda}^*(p)-D\left(\theta,p^U\right)\right]h_{\theta}\left(D\left(\theta,p^U\right)\right)\\
            & \geq \left[-\hat{\lambda}^*\left(p^U\right)-D\left(\theta,p^U\right)\right]h_{\theta}\left(D\left(\theta,p^U\right)\right)\\
            &= J_1\left(\theta,p^U\right)-\left(1-H_{\theta}\left(D\left(\theta,p^U\right)\right)\right) \\
            &=\mathbf{1}_{\left[D\left(\theta;p^U\right)<p^U\right]}\left[J_1\left(\theta,p^U\right)-\left(1-H_{\theta}\left(D\left(\theta,p^U\right)\right)\right)\right],
        \end{align*}
        where the first inequality follows from the fact that $-\lambda^*(p) > 0$ and that $h_{\theta}(\cdot)$ is non-decreasing. The third inequality follows from the fact that $\hat{\lambda}^*(\cdot)$ is strictly increasing. Finally, the second-to-last equality follows from the first-order condition $J\left(\theta, p^U\right) = -\lambda^*\left(p^U\right)$.
    \end{enumerate}
    Finally, since $S'(p) > 0$ and $-\lambda^*(p) > -\lambda^*\left(p^U\right) = 0$, it follows that:
    \[
        -\lambda^*(p) S'(p) > -\lambda^*\left(p^U\right) S'\left(p^U\right).
    \]
    This establishes that $\Pi'(p) > \Pi'\left(p^U\right)$. An analogous argument shows that if $p > p^U$, then $\Pi'\left(p^U\right) > \Pi'(p)$.
\end{proof}

\begin{proof}[Proof of Proposition~\ref{prop:concave-profits}]
    First, since $H_{\theta}(\cdot)$ and $S$ are affine functions, we have $S'(p) = \beta$ for all $p$ and $h_{\theta}(b) = \beta(\theta)$ for all $b$, where $\beta > 0$ is a constant and $\beta(\theta) > 0$ is a strictly positive function for all $\theta$.

    Suppose that $p < p'$. Then, by Proposition~\ref{prop:price-taking}, we know that $\lambda^*(p) < \lambda^*(p')$ and $\hat{\lambda}^*(p) < \hat{\lambda}^*(p')$, where $\hat{\lambda}^*(p) = \lambda^*(p) - p$. Moreover, $D(\theta; p) > D(\theta; p')$ whenever $D(\theta; p) \in (0, p)$, and $D(\theta; p) = D(\theta; p')$ whenever $D(\theta; p) = 0$.

    This leads to the following cases:
    \begin{enumerate}
        \item $D(\theta;p) \in [0,p)$ and $D(\theta;p) \geq D(\theta;p')$:
        \begin{align*}
            \mathbf{1}_{\left[D(\theta;p)<p\right]}\left[J_1(\theta,p)-(1-H_{\theta}(D(\theta;p))\right]&=J_1(\theta,p)-(1-H_{\theta}(D(\theta;p)) \\
            &\geq J_1(\theta,p')-(1-H_{\theta}(D(\theta,p'))) \\
            &=\mathbf{1}_{\left[D(\theta;p')<p^U\right]}[J_1(\theta,p')-(1-H_{\theta}(D(\theta,p')))],
        \end{align*}
        where the inequality follows from the fact that $J_1(\theta,\cdot)$ is non-increasing and that $H_{\theta}(D(\theta;p)) \geq H_{\theta}(D(\theta; p'))$.
        \item  $D(\theta;p)=p$ and $D(\theta;p')=p'$:
        \begin{align*}
            \mathbf{1}_{\left[D(\theta;p)=p\right]}[-\lambda^*(p)h_{\theta}(p)]&=-\lambda^*(p)\beta(\theta) \\
            &>-\lambda^*(p')\beta(\theta) \\
            &=\mathbf{1}_{\left[D(\theta;p')=p'\right]}\left[\lambda^*(p')h_{\theta}(p')\right],
        \end{align*}
        where the inequality follows from the fact that $\lambda^*(p') > \lambda^*(p)$ and $\beta(\theta)>0$.
        \item $D(\theta;p)=p$ and $D(\theta;p')<p$:
        \begin{align*}
            \mathbf{1}_{\left[D(\theta;p)=p\right]}[-\lambda^*(p)h_{\theta}(p)]&=-\lambda^*(p)h_{\theta}(p) \\
            &\geq J_1(\theta,p)-(1-H_{\theta}(p)) \\
            &\geq J_1(\theta,p')-(1-H_{\theta}(D(\theta,p'))) \\
            &=\mathbf{1}_{\left[D(\theta;p')<p'\right]}[J_1(\theta,p')-(1-H_{\theta}D(\theta,p')))],
        \end{align*}
        where the first inequality follows from the fact that $D(\theta; p) = p$ must satisfy the first-order condition:
        \[
            J(\theta, p) = \frac{J_1(\theta, p) - \left(1 - H_{\theta}(p)\right)}{h_{\theta}(p)} \leq -\lambda^*(p),
        \]
        and the second inequality follows from the fact that $J_1(\theta; \cdot)$ is non-increasing and $H_{\theta}(p) \geq H_{\theta}(D(\theta, p'))$.
        \item $D(\theta;p)=p$ and $D(\theta;p') \in (p,p')$:
        \begin{align*}
            \mathbf{1}_{\left[D(\theta;p)=p\right]}[-\lambda^*(p)h_{\theta}(p)]&=-\lambda^*(p)\beta(\theta) \\
            &> \left[-\hat{\lambda}^*(p)-D(\theta,p')\right]\beta(\theta)\\
            &> \left[-\hat{\lambda}^*(p')-D(\theta,p')\right]\beta(\theta)\\
            &= J_1(\theta,p')-(1-H_{\theta}(D(\theta,p'))) \\
            &=\mathbf{1}_{\left[D(\theta;p')<p'\right]}\left[J_1(\theta,p')-(1-H_{\theta}(D(\theta,p')))\right],
        \end{align*}
        where the first inequality follows from the fact that:
        \[
            -\hat{\lambda}^*(p) -D(\theta,p')=-\lambda^*(p)-(D(\theta,p')-p)<-\lambda^*(p).
            \]
        The third inequality follows from the fact that $\hat{\lambda}^*(\cdot)$ is strictly increasing. Finally, the second-to-last equality follows from the first-order condition $J(\theta, p') = -\lambda^*(p')$.
    \end{enumerate}
    Finally, since $S'(p)=\beta>0$ and $-\lambda^*(p) > -\lambda^*(p')$, it follows that:
    \[
        -\lambda^*(p)\beta > -\lambda^*(p')\beta.
    \]
    This establishes that $\Pi'(p) > \Pi'(p')$. 
\end{proof}

\section{Technical Proofs in Section \ref{sec:optimal-transport}}\label{sec:proof-technical}
The proof of Theorem~\ref{thm:zero-duality} is a consequence of the following three lemmas:

\begin{lemma}\label{lemma:zero-duality}
    The value of the primal problem~\ref{OT} equals the value of the dual problem~\ref{D}.
\end{lemma}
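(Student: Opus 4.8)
Denote by $P^*$ and $D^*$ the values of \ref{OT} and \ref{D}. The plan is to view \ref{OT} as a linear program over probability measures, Lagrange-relax the single scalar constraint \ref{EQ} with a multiplier $\lambda\in\mathbb{R}$, and then exchange a supremum over measures with an infimum over $\lambda$ via a minimax theorem. Write $\psi(\theta,q):=[H_\theta(p)-H_\theta(0)]q$, let $K$ be the set of Borel probability measures on $\Theta\times[0,1]$ with $\Theta$-marginal equal to $F$ (nonempty, convex, and weak$^*$ compact, since $\Theta\times[0,1]$ is compact), and set
\[
    \mathcal{L}(\pi,\lambda):=\int_{\underline{\theta}}^{\bar{\theta}}\int_0^1\Phi(\theta,q)\,d\pi+\lambda\left(\int_{\underline{\theta}}^{\bar{\theta}}\int_0^1\psi(\theta,q)\,d\pi-RS(p)\right)\quad\text{on }K\times\mathbb{R}.
\]
Weak duality, $P^*\le D^*$, comes first and is one line: if $\pi$ satisfies \ref{BP}--\ref{EQ} and $(v,\lambda)$ satisfies \ref{SP}, then $\int\Phi\,d\pi=\mathcal{L}(\pi,\lambda)\le\int_{\underline{\theta}}^{\bar{\theta}}\big(\max_{q\in[0,1]}[\Phi(\theta,q)+\lambda(\psi(\theta,q)-RS(p))]\big)f(\theta)\,d\theta\le\int_{\underline{\theta}}^{\bar{\theta}}v(\theta)f(\theta)\,d\theta$, where the middle integrand depends on $\theta$ alone (so it is integrated against the $\Theta$-marginal $F$) and the last step is \ref{SP}.

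\textbf{Strong duality.} Next I would eliminate $v$: for fixed $\lambda$ the pointwise-smallest feasible choice in \ref{SP} is $v=\Lambda_\lambda(\cdot)$, where $\Lambda_\lambda(\theta):=\max_{q\in[0,1]}[\Phi(\theta,q)+\lambda(\psi(\theta,q)-RS(p))]$; this function is continuous, hence bounded, on the compact interval $\Theta$ by Berge's maximum theorem, using that $\Phi$ and $\psi$ are continuous on $\Theta\times[0,1]$. Therefore $D^*=\inf_{\lambda\in\mathbb{R}}\int_{\underline{\theta}}^{\bar{\theta}}\Lambda_\lambda(\theta)f(\theta)\,d\theta=\inf_{\lambda\in\mathbb{R}}\sup_{\pi\in K}\mathcal{L}(\pi,\lambda)$: the last identity holds because, disintegrating any $\pi\in K$ along $F$, one gets $\sup_{\pi\in K}\mathcal{L}(\pi,\lambda)=\int_{\underline{\theta}}^{\bar{\theta}}\Lambda_\lambda(\theta)f(\theta)\,d\theta$, where ``$\le$'' is the pointwise bound and ``$\ge$'' follows by putting, for each $\theta$, all mass on a measurable selection of the maximizer (Berge again). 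On the other hand $\inf_{\lambda}\mathcal{L}(\pi,\lambda)$ equals $\int\Phi\,d\pi$ when $\pi$ satisfies \ref{EQ} and $-\infty$ otherwise, so $\sup_{\pi\in K}\inf_{\lambda}\mathcal{L}(\pi,\lambda)=P^*$; this supremum is finite because $p\in[p^N,p^F]$ guarantees, by Lemma \ref{lemma:feasible-prices} together with the quantity-space representation of $x$, some $\pi\in K$ satisfying \ref{EQ}. Since $\mathcal{L}$ is affine --- hence continuous and both quasiconcave and quasiconvex --- in each argument separately, $K$ is convex and weak$^*$ compact, and $\mathbb{R}$ is convex, a minimax theorem (e.g.\ Sion's) yields $\sup_{\pi\in K}\inf_{\lambda}\mathcal{L}=\inf_{\lambda}\sup_{\pi\in K}\mathcal{L}$, i.e.\ $P^*=D^*$.

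\textbf{Main obstacle.} The load-bearing steps are the two continuity/measurable-selection arguments: that $\Phi(\theta,q)$ is jointly continuous --- which reduces to joint continuity of $b_\theta(q)=H_\theta^{-1}(1-q\mid p)$ and hence to the standing assumptions that $H_\theta$ is $C^1$ in $\theta$ with $h_\theta>0$ --- and that for each $\lambda$ a maximizer of $\Phi(\theta,\cdot)+\lambda\psi(\theta,\cdot)$ can be chosen measurably in $\theta$ so as to realize $\int\Lambda_\lambda f$ with a genuine $\pi\in K$. Everything else is routine: weak duality is immediate, the reduction of \ref{D} to an optimization over $\lambda$ alone is elementary, and the minimax exchange needs only weak$^*$ compactness of $K$ and weak$^*$ continuity of $\pi\mapsto\int\Phi\,d\pi$ (boundedness and continuity of $\Phi$). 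Finally, note that once \ref{EQ} is absorbed into $\mathcal{L}$ the relaxed problem is a degenerate optimal-transport problem with one free marginal, so the identity $\sup_{\pi\in K}\mathcal{L}(\pi,\lambda)=\int\Lambda_\lambda f$ is precisely Monge--Kantorovich duality in this special case \citep{villani2009optimal}.
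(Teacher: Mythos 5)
Your proof is correct, but it takes a genuinely different route from the paper's. The paper works over the cone $\mathcal{M}^+$ of positive measures, Lagrangianizes \emph{both} the marginal constraint \ref{BP} (with the function-valued multiplier $v$) and the scalar constraint \ref{EQ} (with $\lambda$), and then justifies the $\sup\inf=\inf\sup$ exchange by deferring to the Fenchel--Rockafellar argument in the proof of Theorem 5.10 in \citet{villani2009optimal} --- exactly the standard Kantorovich-duality machinery, since $\mathcal{M}^+$ is unbounded and Sion does not apply directly. You instead keep \ref{BP} as a hard constraint defining the weak$^*$-compact convex set $K$, Lagrangianize only the scalar \ref{EQ}, and then the $\sup_K\inf_{\lambda}=\inf_{\lambda}\sup_K$ exchange becomes a clean, direct application of Sion's theorem (since $K$ is compact, $\mathbb{R}$ is convex, and $\mathcal{L}$ is bi-affine and weak$^*$-continuous in $\pi$). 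The price you pay for working on $K$ rather than $\mathcal{M}^+$ is that eliminating $v$ and realizing $\sup_{\pi\in K}\mathcal{L}(\pi,\lambda)=\int\Lambda_\lambda\,dF$ requires a measurable-selection step (Berge plus a selection theorem for the $\arg\max$ correspondence), which you correctly flag; the paper's approach sidesteps this because the positive-measure relaxation lets Villani's proof handle everything at once. What your version buys is self-containedness and elementariness --- Sion is much easier to state and verify than Fenchel--Rockafellar in infinite dimensions; what the paper's version buys is proximity to the off-the-shelf Kantorovich duality theorem, so the technical heavy lifting is entirely delegated to a citation. Both are valid; yours is arguably cleaner given that the only ``extra'' constraint beyond the marginal is a single scalar one.

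One small point of care worth making explicit in a final write-up: weak$^*$ continuity of $\pi\mapsto\int\Phi\,d\pi$ requires $\Phi$ to be continuous \emph{and bounded} on $\Theta\times[0,1]$; boundedness is automatic here because the domain is compact and $\Phi$ is continuous, but it hinges on $h_\theta>0$ and the $C^1$ assumptions giving joint continuity of $b_\theta(q)$, which you correctly identify as the load-bearing hypothesis.
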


\begin{proof}
    Let $V$ be given by:
    \[
        V=\sup_{\pi \in \mathcal{M}^+(\Theta \times [0,1])} \quad  \int_{\underline{\theta}}^{\bar{\theta}}\int_0^{1} \Phi(\theta,q)\,d\pi(\theta,q) + A(\pi), 
    \]
    where $\mathcal{M}^+(\Theta \times [0,1])$ is the set of positive measures over $\Theta \times [0,1]$, and $A$ is given by:
    \[
    A(\pi)=\inf_{(v,\lambda) \in L^1(f) \times \mathbb{R}} \quad \int_{\underline{\theta}}^{\bar{\theta}}v(\theta)\left[f(\theta)\,d\theta-\int_0^1\,d\pi(\theta,q)\right]+\int_{\underline{\theta}}^{\bar{\theta}}\int_0^1\lambda\left([H_{\theta}(p)-H_{\theta}(0)]q-RS(p)\right)\,d\pi(\theta,q),
    \]
    where $L^1(f)$ denotes the set of all measurable functions $v:\Theta \to \mathbb{R}$ that satisfy:
    \[
        \int_{\underline{\theta}}^{\bar{\theta}} \lvert v(\theta) \lvert f(\theta)\;d\theta < \infty.
    \]
    
    Note that if $\pi$ satisfies~\ref{BP} and~\ref{EQ}, $A(\pi)=0$. However, if $\pi$ does not satisfy~\ref{BP} or~\ref{EQ}, we can find a function $v$ or multiplier $\lambda$ such that $A(\pi)=-\infty$. This implies that the value of the primal problem~\ref{OT} is equal to $V$. We can also express $V$ as:
    \begin{align*}
        V=\sup_{\pi \in \mathcal{M}^+(\Theta \times [0,1])} \inf_{(v,\lambda) \in L^1(f) \times \mathbb{R}} \; \;  &\int_{\underline{\theta}}^{\bar{\theta}}v(\theta)f(\theta)\,d\theta\\
        &+\int_{\underline{\theta}}^{\bar{\theta}}\int_0^1\left[\Phi(\theta,q)+\lambda\left([H_{\theta}(p)-H_{\theta}(0)]q-RS(p)\right)-v(\theta)\right]\,d\pi(\theta,q).
    \end{align*}
    
    Here, it holds that $\sup \inf=\inf \sup$ (for a rigorous argument see the proof of Theorem 5.10 in \citet{villani2009optimal}), which yields:
    \[
        V=\inf_{(v,\lambda) \in L^1(f) \times \mathbb{R}} \; \;  \int_{\underline{\theta}}^{\bar{\theta}}v(\theta)f(\theta)\,d\theta+B(v,\lambda),
    \]
    where:
    \[
    B(v,\lambda)=\sup_{\pi \in \mathcal{M}^+(\Theta \times [0,1])} \; \; \int_{\underline{\theta}}^{\bar{\theta}}\int_0^1\left[\Phi(\theta,q)+\lambda\left([H_{\theta}(p)-H_{\theta}(0)]q-RS(p)\right)-v(\theta)\right]\,d\pi(\theta,q).
    \]
    
    Observe that if $v(x) < \Phi(\theta,q) + \lambda\left([H_{\theta}(p)-H_{\theta}(0)]q-RS(p)\right)$ for some $(\theta,q)$, we can find a positive measure $\pi$ such that $B(v,\lambda)=+\infty$. On the other hand, if $v(x) \geq  \Phi(\theta,q)+ \lambda\left([H_{\theta}(p)-H_{\theta}(0)]q-RS(p)\right)$ for all $(\theta,q)$, then $B(v,\lambda)=0$. This implies that the value of $V$ is equal to the value of the dual problem~\ref{D}, which shows that there is zero duality gap.
\end{proof}

\begin{lemma}\label{lemma:existence-primal}
    A solution to the primal problem~\ref{OT} exists.
\end{lemma}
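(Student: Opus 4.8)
The plan is to invoke the direct method of the calculus of variations (equivalently, a weak-* compactness argument on the space of measures). First I would observe that the feasible set $\mathcal{F}$ — the collection of probability measures $\pi$ on the compact set $\Theta \times [0,1]$ satisfying the marginal constraint \ref{BP} and the equilibrium constraint \ref{EQ} — is nonempty: this follows from Lemma \ref{lemma:feasible-prices}, since we have assumed $p \in [p^N, p^F]$, and any feasible allocation $x$ (e.g.\ the constant $x \equiv \alpha^*$ constructed there) induces, via $\pi(q \mid \theta) = \mathbf{1}[q = \alpha^*]$ together with the prior $F$, a feasible $\pi$. Next I would note that $\mathcal{F}$ is a weak-* closed subset of the set of all Borel probability measures on the compact metric space $\Theta \times [0,1]$: the constraint \ref{BP} fixes the $\Theta$-marginal to be $F$, which is a closed condition under weak-* convergence (it says $\int \varphi \, d\pi = \int \varphi \, dF$ for every continuous $\varphi$ depending only on $\theta$), and the constraint \ref{EQ} is the single linear condition $\int [H_\theta(p) - H_\theta(0)] q \, d\pi(\theta,q) = RS(p)$, whose integrand is continuous on $\Theta \times [0,1]$ by the standing differentiability assumptions on $H_\theta$, hence also closed under weak-* convergence.

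Since $\Theta \times [0,1]$ is compact, the set of Borel probability measures on it is weak-* compact (Banach–Alaoglu / Prokhorov), so $\mathcal{F}$, being a weak-* closed subset, is itself weak-* compact. The objective functional $\pi \mapsto \int_{\underline{\theta}}^{\bar\theta} \int_0^1 \Phi(\theta,q) \, d\pi(\theta,q)$ is weak-* continuous provided $\Phi$ is continuous on $\Theta \times [0,1]$; I would verify this continuity from the definition $\Phi(\theta,q) = [H_\theta(p) - H_\theta(0)] \int_0^q \phi(\theta,s)\,ds$, using that $b_\theta(q) = H_\theta^{-1}(1-q \mid p)$ is jointly continuous (the conditional CDF $H_\theta(\cdot \mid p)$ is continuous and strictly increasing, hence has a continuous inverse, and depends continuously on $\theta$ by the $C^1$-in-$\theta$ assumption), and that $h_\theta$, $g$, $g'$, $H_\theta$, $F$, $f$ are all continuous with $h_\theta$ and $f$ bounded away from zero on the compact domain. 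A continuous function on a compact set attains its maximum, so a maximizer $\pi^* \in \mathcal{F}$ exists, which is the desired solution to \ref{OT}.

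The main obstacle is the verification that $\Phi$ is genuinely continuous (and in particular finite) up to the relevant boundary: as $q \to 1$ we have $b_\theta(q) \to 0^+$, and one must check that $\phi(\theta,q)$ does not blow up there — this is where the assumptions $h_\theta > 0$ on all of $B$ (so $h_\theta$ is bounded below on $[0,p]$) and $H_\theta(\cdot \mid p)$ having a continuous inverse do the work, ensuring $1/h_\theta(b_\theta(q))$ stays bounded; similarly one checks there is no pathology as $q \to 0$ (where $b_\theta(q) \to p$). A minor secondary point is that, strictly speaking, one wants the $\Theta$-marginal constraint phrased so that weak-* limits preserve it — writing it in the integrated form $\int_0^\theta \int_0^1 d\pi = F(\theta)$ for all $\theta$, as in \ref{BP}, makes this transparent since it is equivalent to fixing the marginal, a weak-* closed condition. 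Once these continuity checks are in place the existence conclusion is immediate from compactness.
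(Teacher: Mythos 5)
Your proof is correct and follows essentially the same route as the paper: both establish weak-$*$ compactness of the feasible set (the paper via Prokhorov's theorem, you via Banach--Alaoglu/Prokhorov) and closedness of the constraints \ref{BP} and \ref{EQ} under weak-$*$ limits, then obtain existence from compactness together with (semi-)continuity of the objective. The only difference is cosmetic: the paper, following Villani, approximates $\Phi$ from above by a decreasing sequence of continuous functions (a device needed only when the cost is merely upper semicontinuous), whereas you exploit continuity of $\Phi$ directly -- which is cleaner given that the paper itself asserts $\Phi$ is continuous -- and you also spell out nonemptiness of the feasible set via Lemma \ref{lemma:feasible-prices}, a point the paper leaves implicit.
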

\begin{proof}
    Let $\Pi$ denote the set of probability measures on $\Theta \times [0,1]$ that satisfy conditions~\ref{BP} and~\ref{EQ}. Formally, $\Pi$ is given by:
    \[
    \Pi=\left\{\pi \in \Delta\left(\Theta \times [0,1]\right) \; \Big \lvert \; \pi \text{ satisfies \ref{BP} and \ref{EQ}}\right\}.
    \]
    
    We first show that $\Pi$ is compact under the weak topology on $\Delta\left(\Theta \times [0,1]\right)$. For this, we invoke \textbf{Prokhorov's theorem}, which states:
    
    \textit{Let $Y$ be a Polish space, and let $\Delta(Y)$ denote the space of probability measures on $Y$. A set $P \subset \Delta(Y)$ is precompact under the weak topology if and only if it is tight; that is, for any $\varepsilon > 0$, there exists a compact set $K_{\varepsilon} \subset Y$ such that $\mu(K_{\varepsilon}) > 1 - \varepsilon$ for all $\mu \in P$.}

    Since $\Theta \times [0,1]$, equipped with the Euclidean metric, is a compact metric space, any collection of probability measures in $\Delta\left(\Theta \times [0,1]\right)$ is tight. Indeed, for any $\varepsilon > 0$, we can set $K_{\varepsilon} = \Theta \times [0,1]$. By Prokhorov's theorem, it follows that $\Pi$ is precompact. To conclude that $\Pi$ is compact, it remains to show that $\Pi$ is closed.

    Recall that a sequence of probability measures $\{\pi_n\}_{n=1}^{\infty} \subset \Pi$ converges to $\pi^*$ in the weak topology if, for every continuous function $g: \Theta \times [0,1] \rightarrow \mathbb{R}$:
    \[
        \lim_{n \rightarrow \infty} \int_{\underline{\theta}}^{\bar{\theta}}\int_0^1 g(\theta,q)\, d\pi_n(\theta,q) = \int_{\underline{\theta}}^{\bar{\theta}}\int_0^1 g(\theta,q)\, d\pi^*(\theta,q).
    \]

    In particular, since $H_{\theta}$ is continuous in $\theta$, we must have that if $\{\pi_n\}_{n=1}^{\infty} \rightarrow \pi^*$, then:
    \[
    \lim_{n \rightarrow \infty} \int_{\underline{\theta}}^{\bar{\theta}}\int_0^1 [H_{\theta}(p)-H_{\theta}(0)]q\, d\pi_n(\theta,q) = \int_{\underline{\theta}}^{\bar{\theta}}\int_0^1 [H_{\theta}(p)-H_{\theta}(0)]q\, d\pi^*(\theta,q).
    \]

    Using the fact that the following condition holds for all $n$:
    \[
    \int_{\underline{\theta}}^{\bar{\theta}}\int_0^1[H_{\theta}(p)-H_{\theta}(0)]   q\, d\pi_n(\theta,q) = RS(p),
    \]
    it follows that $\pi^*$ must satisfy~\ref{EQ}. A similar argument shows that if  $\{\pi_n\}_{n=1}^{\infty} \rightarrow \pi^*$, then $\pi^*$ must also satisfy~\ref{BP}. Thus, $\pi^* \in \Pi$, proving that $\Pi$ is closed.

    Finally, compactness of $\Pi$ implies the existence of a solution to~\ref{OT}. Indeed, let $\{\pi_n\}_{n=1}^{\infty}$ be any maximizing sequence, which admits a cluster point $\pi^* \in \Pi$. By our technical assumptions, $\Phi$ is a bounded and continuous function, then it can be written as the limit of a non-increasing sequence $\{\Phi_k\}_{k=1}^{\infty}$ of bounded continuous functions. By invoking successively the monotone convergence theorem, the fact that $\pi^*$ is a cluster point, the inequality $\Phi_k(\theta,q) \geq \Phi(\theta,q)$ and the maximizing property of $\{\pi_n\}_{n=1}^{\infty}$, we obtain:
\begin{align*}
    \int_{\underline{\theta}}^{\bar{\theta}}\int_0^1\Phi(\theta,q)\,d\pi^*(\theta,q) &= \lim_{k \rightarrow \infty} \int_{\underline{\theta}}^{\bar{\theta}}\int_0^1\Phi_k(\theta,q)\,d\pi^*(\theta,q) \\
    &\geq \lim_{k \rightarrow \infty} \lim_{n \rightarrow \infty} \inf\int_{\underline{\theta}}^{\bar{\theta}}\int_0^1\Phi_k(\theta,q)\,d\pi_n(\theta,q) \\
    &\geq \lim_{n \rightarrow \infty} \inf\int_{\underline{\theta}}^{\bar{\theta}}\int_0^1\Phi(\theta,q)\,d\pi_n(\theta,q) \\
    &= \sup_{\pi \in \Pi(f,p)} \; \int_{\underline{\theta}}^{\bar{\theta}}\int_0^1\Phi(\theta,q)d\pi(\theta,q).
\end{align*}
    Therefore, $\pi^*$ is a solution to~\ref{OT}.
\end{proof}

\begin{lemma}\label{lemma:existence-dual}
    A solution to the dual problem~\ref{D} exists.
\end{lemma}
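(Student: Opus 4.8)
The plan is to eliminate the function $v$ from \ref{D}, reducing it to a one-dimensional convex program in the scalar $\lambda$, and then to show that this program attains its minimum. For fixed $\lambda\in\mathbb{R}$, the smallest $v$ compatible with constraint \ref{SP} is the one that makes \ref{SP} bind $f$-almost everywhere:
\[
v_{\lambda}(\theta) \;:=\; \max_{q\in[0,1]} \; \Phi(\theta,q)+\lambda\big([H_{\theta}(p)-H_{\theta}(0)]q-RS(p)\big).
\]
The maximum exists by continuity of $\Phi(\theta,\cdot)$ and compactness of $[0,1]$; $v_{\lambda}$ is bounded on $\Theta$ (since $\Phi$ is bounded, as used in the proof of Lemma \ref{lemma:existence-primal}, and the bracketed term is bounded), hence $v_{\lambda}\in L^{1}(f)$; and $\theta\mapsto v_{\lambda}(\theta)$ is continuous by Berge's maximum theorem. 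Since $f\ge 0$ and any feasible $v$ must dominate $v_{\lambda}$ $f$-almost everywhere by \ref{SP}, the dual is equivalent to minimizing $\Psi(\lambda):=\int_{\underline{\theta}}^{\bar{\theta}}v_{\lambda}(\theta)f(\theta)\,d\theta$ over $\lambda\in\mathbb{R}$, in the sense that any minimizer $\lambda^{*}$ yields a dual solution $(v_{\lambda^{*}},\lambda^{*})$.

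I would then observe that $\Psi$ is finite and convex on all of $\mathbb{R}$: for each $\theta$, $v_{\lambda}(\theta)$ is a supremum of affine functions of $\lambda$, hence convex, and integration preserves convexity, while finiteness is just the boundedness of $v_{\lambda}$. A finite convex function on $\mathbb{R}$ is continuous, so the only remaining point is to exclude the possibility that the infimum is approached solely as $\lambda\to\pm\infty$. By convexity it suffices that $\lim_{\lambda\to+\infty}\Psi'_{+}(\lambda)\ge 0$ and $\lim_{\lambda\to-\infty}\Psi'_{-}(\lambda)\le 0$. By an envelope (Danskin) argument, together with the uniform boundedness of the integrand that justifies differentiating under the integral, $\Psi'_{+}(\lambda)=\int_{\underline{\theta}}^{\bar{\theta}}\big([H_{\theta}(p)-H_{\theta}(0)]\,\bar q_{\lambda}(\theta)-RS(p)\big)f(\theta)\,d\theta$, where $\bar q_{\lambda}(\theta)$ is the largest maximizer in the definition of $v_{\lambda}(\theta)$. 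Because $H_{\theta}(p)-H_{\theta}(0)$ is bounded away from $0$ uniformly in $\theta$, the linear-in-$q$ term dominates as $\lambda\to+\infty$, so $\bar q_{\lambda}(\theta)\to 1$ uniformly, giving $\Psi'_{+}(\lambda)\to\int_{\underline{\theta}}^{\bar{\theta}}[H_{\theta}(p)-H_{\theta}(0)]f(\theta)\,d\theta-RS(p)\ge 0$, where the inequality holds because $p\le p^{F}$ (equivalently, $RS(p)$ does not exceed the maximal insurance-driven demand; see Lemma \ref{lemma:feasible-prices}). Symmetrically, as $\lambda\to-\infty$, $\bar q_{\lambda}(\theta)\to 0$ and $\Psi'_{-}(\lambda)\to -RS(p)\le 0$, since $p\ge p^{N}$ implies $RS(p)\ge 0$. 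Hence $\Psi$ attains its minimum at some $\lambda^{*}\in\mathbb{R}$, and $(v^{*},\lambda^{*}):=(v_{\lambda^{*}},\lambda^{*})$ satisfies \ref{SP} with equality and minimizes $\int v f\,d\theta$, so it solves \ref{D}.

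I expect the level-boundedness step to be the main obstacle: convexity and continuity of $\Psi$ are routine, but attainment genuinely can fail for convex functions on $\mathbb{R}$, and excluding this is precisely where the hypothesis $p\in[p^{N},p^{F}]$ enters, through the two sign conditions $RS(p)\ge 0$ and $RS(p)\le\int_{\underline{\theta}}^{\bar{\theta}}[H_{\theta}(p)-H_{\theta}(0)]f(\theta)\,d\theta$. An alternative route would combine Lemma \ref{lemma:zero-duality} (no duality gap) and Lemma \ref{lemma:existence-primal} (primal existence) by taking a minimizing sequence $(v_{n},\lambda_{n})$, reducing without loss to $v_{n}=v_{\lambda_{n}}$, and showing $\{\lambda_{n}\}$ is bounded via the same two sign conditions; but the direct convexity argument above is cleaner and does not even rely on those two lemmas.
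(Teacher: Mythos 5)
Your proof is correct and takes a genuinely different route from the paper's. The paper first establishes primal existence (Lemma \ref{lemma:existence-primal}), then applies a separating hyperplane argument in $\mathbb{R}^2$ to the Lagrangian of the primal, obtaining the multiplier $\lambda^*$ and constructing $v^*$ from the conditional law of an optimal $\pi^*$. You instead eliminate $v$ directly---observing that for each $\lambda$ the minimal feasible $v$ is $v_\lambda(\theta)=\max_{q}\Phi(\theta,q)+\lambda([H_\theta(p)-H_\theta(0)]q-RS(p))$---and reduce the dual to minimizing a finite convex function $\Psi(\lambda)$ on $\mathbb{R}$, establishing coercivity via the asymptotic derivatives. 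Your approach is self-contained (it does not presuppose Lemmas \ref{lemma:zero-duality} or \ref{lemma:existence-primal}) and makes transparent exactly where the feasibility condition $p\in[p^N,p^F]$ is used; the paper's approach is slightly longer but yields $v^*$ as a function of the optimal transport plan, which feeds directly into the complementary slackness argument in Corollary \ref{coro:complementary-slackness}.

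One step deserves slightly more care than your ``$\to$'' language suggests. When $p\in\{p^N,p^F\}$ the limits $\lim_{\lambda\to+\infty}\Psi'_+(\lambda)\ge 0$ and $\lim_{\lambda\to-\infty}\Psi'_-(\lambda)\le 0$ become equalities, and for a general convex function these weak asymptotic sign conditions do not force attainment (consider $\Psi(\lambda)=e^{-\lambda}$). What saves you here is that, because $\phi$ is bounded (so $\Phi(\theta,\cdot)$ is Lipschitz uniformly in $\theta$) and $H_\theta(p)-H_\theta(0)$ is bounded below, the maximizer $\bar q_\lambda(\theta)$ is \emph{exactly} $1$ (resp.\ $0$) for all $\lambda$ beyond a finite, $\theta$-uniform threshold---not merely in the limit. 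Hence $\Psi'_+(\lambda)$ equals its limiting value for $\lambda$ large, and $\Psi$ is eventually affine (constant if the limit is zero), so the minimum is attained. The paper's proof has the mirror-image subtlety (the Case~2 argument that ``$g$ can be positive for some $\pi$ and negative for others'' likewise requires $p$ strictly interior), so this is a shared point to tighten rather than a defect specific to your argument.
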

\begin{proof}
    Let $\pi^*$ be any solution to the primal problem~\ref{OT}, whose existence is guaranteed by Lemma~\ref{lemma:existence-primal}. We can express the value of~\ref{OT} as:
    \[
        V = \int_{\underline{\theta}}^{\bar{\theta}}\int_0^1\Phi(\theta,q) \, d\pi^*(\theta,q)
        = \int_{\underline{\theta}}^{\bar{\theta}}\int_0^1\Phi(\theta,q)\, d\pi^*(q \mid \theta)f(\theta) \, d\theta,
    \]
    where $\pi^*(q \mid \theta) = \frac{\pi^*(\theta,q)}{f(\theta)}$ is the conditional distribution of $q$ given $\theta$. Indeed, since $\pi^*$ satisfies~\ref{BP}, we know that $\pi^*(\cdot \mid \theta) \in \Delta([0,1])$. For the rest of the proof, with a slight abuse of notation, we denote by $\pi$ the collection of conditional distributions, $\{\pi(\cdot \mid \theta)\}_{\theta \in \Theta}$, where each $\pi(\cdot \mid \theta) \in \Delta([0,1])$.

    We now define two key functions: 
    \[
    y(\pi) = \int_{\underline{\theta}}^{\bar{\theta}}\int_0^1\Phi(\theta,q) \, d\pi(q \mid \theta)f(\theta) \, d\theta, 
    \quad
    g(\pi) = \int_{\underline{\theta}}^{\bar{\theta}}\int_0^1 \left([H_{\theta}(p)-H_{\theta}(0)]q-RS(p)\right) \, d\pi(q \mid \theta)f(\theta) \, d\theta.
    \]
    
    From the definitions of $y$ and $g$, we immediately have the following reformulation of the original problem:
    \[
        V = \max_{\pi} \quad y(\pi) \quad \text{s.t} \quad g(\pi) = 0,
    \]
    and $\pi^*$ solves this optimization problem:
    \[
    \pi^* \in \arg\max_{\pi} \quad y(\pi) \quad \text{s.t} \quad g(\pi) = 0.
    \]

    By standard Lagrangian duality, we can argue that there exists a multiplier $\lambda^* \in \mathbb{R}_+$ such that the constrained optimization problem can be reformulated as: 
    \begin{align*}
        V &= \max_{\pi} \quad y(\pi) + \lambda^* g(\pi), \\
        \pi^* &\in \arg\max_{\pi} \quad y(\pi) + \lambda^* g(\pi).
    \end{align*}

    To rigorously justify the existence of $\lambda^*$, we use the separating hyperplane theorem. Define the following sets in $\mathbb{R}^2$:
    \begin{align*}
        A &= \left\{(r, z) \;\middle|\; r \leq y(\pi) \text{ and } z = g(\pi) \text{ for some } \pi \in \Delta([0,1])^\Theta \right\},\\
        B &= \left\{(r, 0) \;\middle|\; r \geq V\right\}.
    \end{align*}

    Since both $y$ and $g$ are linear in $\pi$, the sets $A$ and $B$ are convex. Furthermore, $A$ contains no interior points of $B$. Therefore, by the separating hyperplane theorem, there exists a non-zero vector $(r_0, \lambda^*) \in \mathbb{R}^2$ such that:
    \[
        r_0 r_1 + \lambda^* z_1 \leq r_0 r_2,
    \]
    for all $(r_1, z_1) \in A$ and $(r_2, 0) \in B$. We now show that $r_0 > 0$.

    \textbf{Case 1 $r_0 < 0$}: In this case, we can find a point $(r_2, 0) \in B$ with sufficiently large $r_2$ such that the inequality $r_0 r_1 + \lambda^* z_1 > r_0 r_2$ holds, leading to a contradiction.

    \textbf{Case 2 $r_0 = 0$}: Here, $\lambda^* \neq 0$. Since, for all $p$, $g$ can be positive for some $\pi$ and negative for other, i.e. it is always possible to have excess demand or excess supply. Then, $A$ contains two points, $(r,z)$ and $(r',z')$, such that $z>0$ and $z'<0$. However, if $\lambda^*>0$ ($\lambda^*<0$), then $\lambda^*z \leq 0$ ($\lambda^*z' \leq 0$), leading to a contradiction. 

    Thus, we must have $r_0 > 0$, and without loss of generality, we can normalize $r_0 = 1$. Therefore, we have:
    \begin{align*}
    V = \max_{(r, z) \in A} \quad r + \lambda^* z=\max_{\pi} \quad y(\pi) + \lambda^* g(\pi), 
    \end{align*}
    which proves the existence of $\lambda^*$.

    We now exploit some properties of $\pi^*$. Since $\pi^*$ maximizes $y(\pi)+\lambda^*g(\pi)$, for $f$-almost every $\theta$ we must have that:
    \[
        \int_0^1 \left[\Phi(\theta,q) + \lambda^*\left([H_{\theta}(p)-H_{\theta}(0)]q-RS(p)\right)\right] \, d\pi^*(q \mid \theta),
    \]
    is equal to:
    \[
        \max_{q \in [0,1]} \quad \Phi(\theta,q)+ \lambda^*\left([H_{\theta}(p)-H_{\theta}(0)]q-RS(p)\right).
    \]
    
    The maximization problem is well-defined because $\Phi(\theta,\cdot)$ is a continuous function and we are maximizing over the compact set $[0,1]$. Then, define the function $v^*: \Theta \to \mathbb{R}$ by:
    \[
    v^*(\theta) = \int_0^1 \left[\Phi(\theta,q) + \lambda^*\left([H_{\theta}(p)-H_{\theta}(0)]q-RS(p)\right)\right] \, d\pi^*(q \mid \theta).
    \]
    
    Then, the pair $(v^*, \lambda^*)$ satisfies~\ref{SP}. Finally, observe that:
    \[
    V = \int_{\underline{\theta}}^{\bar{\theta}} v^*(\theta) f(\theta) \, d\theta.
    \]
    
    This shows that $(v^*, \lambda^*)$ achieves the same value as the primal problem~\ref{OT}, and by Lemma~\ref{lemma:zero-duality}, we conclude that $(v^*, \lambda^*)$ is a solution to the dual problem~\ref{D}.
\end{proof}

\begin{proof}[Proof of the Necessity Part of Corollary~\ref{coro:complementary-slackness}]
    First, since $(v^*,\lambda^*)$ is a solution of the dual problem~\ref{D} it immediately follows that for $f-$almost every $\theta$ we must have:
    \[
    v^*(\theta)=\max_{q \in [0,1]} \quad \Phi(\theta,q)+\lambda^*\left([H_{\theta}(p)-H_{\theta}(0)]q-RS(p)\right),
    \]
    where the maximum is well-defined because $\Phi(\theta,\cdot)$ is a continuous function and we are maximizing over the compact set $[0,1]$. 

    The above implies that if any joint distribution $\pi$ satisfies conditions~\ref{BP} and~\ref{EQ}, but there is a positive measure of risk types $\theta$ for which $\pi(\cdot \mid \theta)$ is not supported on~\ref{supp}, then the following inequality holds:
    \begin{align*}
        \int_{\underline{\theta}}^{\bar{\theta}}v^*(\theta)f(\theta)\,d\theta>&\int_{\underline{\theta}}^{\bar{\theta}}\int_{0}^1\left[\Phi(\theta,q)+\lambda^*\left([H_{\theta}(p)-H_{\theta}(0)]q-RS(p)\right)\right]\,d\pi(q \mid\theta)f(\theta)\,d\theta\\
        =&\int_{\underline{\theta}}^{\bar{\theta}}\int_{0}^1\Phi(\theta,q)\,d\pi(q \mid\theta)f(\theta)\,d\theta\\
        &+\lambda^*\int_{\underline{\theta}}^{\bar{\theta}}\int_{0}^1\left([H_{\theta}(p)-H_{\theta}(0)]q-RS(p)\right)\,d\pi(q \mid\theta)f(\theta)\,d(\theta)\\
        =&\int_{\underline{\theta}}\int_{0}^1\Phi(\theta,q)\,d\pi(q,\theta).
    \end{align*}
    
    By Theorem \ref{thm:zero-duality}, this implies that $\pi$ is not a solution of the primal problem~\ref{OT}.
\end{proof}
    
\begin{proof}[Proof of the Sufficiency Part of Corollary~\ref{coro:complementary-slackness}]
   Suppose that $\pi$ satisfies~\ref{BP},~\ref{EQ} and for $f-$almost every $\theta$, the conditional distribution $\pi(\cdot \mid \theta)$ is supported on~\ref{supp}. Then, the following equality holds:
    \begin{align*}
        \int_{\underline{\theta}}^{\bar{\theta}}v^*(\theta)f(\theta)\,d\theta=&\int_{\underline{\theta}}^{\bar{\theta}}\int_{0}^1\left[\Phi(\theta,q)+\lambda^*\left([H_{\theta}(p)-H_{\theta}(0)]q-RS(p)\right)\right]\,d\pi(q \mid\theta)f(\theta)\,d\theta\\
        =&\int_{\underline{\theta}}^{\bar{\theta}}\int_{0}^1\Phi(\theta,q)\,d\pi(q \mid\theta)f(\theta)\,d\theta\\
        &+\lambda^*\int_{\underline{\theta}}^{\bar{\theta}}\int_{0}^1\left([H_{\theta}(p)-H_{\theta}(0)]q-RS(p)\right)\,d\pi(q \mid\theta)f(\theta)\,d(\theta)\\
        =&\int_{\underline{\theta}}\int_{0}^1\Phi(\theta,q)\,d\pi(q,\theta).
    \end{align*}
    
    By Theorem~\ref{thm:zero-duality}, it follows that $\pi$ is a solution to the primal problem~\ref{OT}.
\end{proof}

\end{document}